\newcommand{\R}{\mathbb{R}}
\newcommand{\oF}{\overline{F}}
\newcommand{\oS}{\overline{S}}
\newcommand{\oE}{\overline{E}}
\newcommand{\oP}{\overline{P}}
\newcommand{\oA}{\overline{A}}
\DeclareMathOperator*{\sign}{sign}
\newcommand{\mmC}{\mathcal{C}}
\newtheorem{proposition}[]{Proposition}
\newtheorem{result}[]{Result}
\theoremstyle{definition}
\newtheorem{obs}[]{Remark}
\renewcommand\@biblabel[1]{#1.} % from [1] to 1
\begin{document}

\title{Enzyme sharing as a cause of multistationarity in signaling systems}

\author[1]{Elisenda Feliu}
\author[1]{Carsten Wiuf\thanks{wiuf@birc.au.dk}}

\affil[1]{Bioinformatics Research Centre, Aarhus University, C. F. M\o llers All\'{e} 8, DK-8000 Aarhus, Denmark}

\maketitle

\begin{abstract} Multistationarity in biological systems is a mechanism of cellular decision making. In particular,  signaling pathways regulated by protein phosphorylation display features that facilitate a variety of responses to different biological inputs. The features that lead to  multistationarity are of particular interest to determine as well as the stability properties of the steady states.

In this  paper we determine conditions for the emergence of  multistationarity in small  motifs  without feedback that repeatedly occur in signaling pathways. 
We derive an explicit mathematical relationship $\varphi$ between the concentration of a chemical species at steady state and a conserved quantity of the system such as the total amount of substrate available.  
We show that $\varphi$ determines the number of steady states and provides a necessary condition for  a steady state to be stable,  that is, to be biologically attainable. Further, we identify characteristics of the motifs that lead to multistationa\-ri\-ty, and extend the view that multistationarity in signaling pathways arises from multisite phosphorylation.

Our approach relies on mass-action kinetics and the conclusions are drawn in full generality without resorting to simulations or random generation of parameters. The approach is extensible to other systems.

\medskip
\textbf{Keywords}: steady state; kinase; stability; cross-talk; phosphorylation 
 \end{abstract}

\section{Introduction}
Multistationarity  (the existence of more than one  steady state under particular biological conditions)  in cellular systems can be seen  as a mechanism for cellular decision making.  How it arises is therefore fundamental to the understanding of cell signaling, that is, the communication of signals to regulate  cellular activities and responses. Generally, cell signaling  involves  post-translational modifications of proteins, such as phosphorylation, acetylation, or methylation. These modifications  change the state of a protein in a discrete manner, for example from an active to an inactive state.  

In eukaryotes,  reverse phosphorylation is the most frequent  form  of protein modification affecting $\sim\!30\%$ of all  proteins in humans \cite{cohen}. 
{\it Kinases} catalyze the transfer of phosphate groups to target proteins and {\it phosphatases} catalyze the reverse operation.
After the completion of the human genome project, genome analysis estimated the number of  kinases to $\sim\!500$ \cite{manning-kinome}, while the number of phosphatases is smaller by two thirds \cite{cohen}.
Two  protein phosphatases, PP-1 and PP-2A, account for the vast majority of all phosphatase activity
 \cite{xiao-li} with more than 50 PP-1 targets  being characterized  \cite{Cohen:2002p271}. 

As a consequence, there is a substantial complexity in the interplay between enzymes (kinases and phosphatases) and substrates, exemplified by systems where protein substrates use the same catalyzing enzymes (enzyme sharing),  and systems where different enzymes catalyze  the same reaction (enzyme competition). Competition and sharing are general examples of cross-talk between motifs.

The aim of this work is to determine  characteristics that lead to multistationarity. Following different modeling strategies it has already been shown that feedback in signaling networks as well as multisite phosphorylation can both account for multistationarity \cite{Kapuy:2009p16,Markevich-mapk,TG-Nature}.

We present a mathematical approach for analyzing the steady states of small systems.  Our method leads to explicit conditions for when multistationarity occurs in terms of rate constants and conserved total amounts of substrates and enzymes. Further  the approach provides means to study the stability of steady states. 

First we present the motifs that we analyze and then we develop the method  to determine  multistationarity and to study stability. The paper concludes with some perspectives and discussion.

\begin{table}[b!]
\begin{tabular}{lp{13cm}}
\hline
Motifs & Biological phenomena \\ \hline
(b) & A kinase acting also as phosphatase on the same substrate, e.g. HPrK/P kinase-phosphatase in Gram positive bacteria \cite{chaptal-Hpr}.  \\
(c),(d) & Several kinases and/or phosphatases acting on the same substrate, e.g.  (i) Several kinases  phosphorylate the alpha subunit eukaryotic initiation factor (eIF2$\alpha$) at Ser51 \cite{haro-santoyo}; (ii) The  phosphatases MPK-1 and  PTP-SL both modify ERK1 \cite{Shaul:2007p309}.
\\
 (e) & Multi-site phosphorylation  by different kinases and phosphatases at each site, e.g.  (i) Primed kinases, e.g.~GSK-3 \cite{Cohen:2000p313}; (ii) Akt1 is (de)activated through three-site sequential (de)phosphorylation by three  different kinases (phosphatases) \cite{xiao-li}.  \\
 (f),(g) & Multi-site phosphorylation with the same kinase and/or phosphatase responsible for all modifications, e.g. (i) Two-site phosphorylation of ERK catalyzed by MEK; (ii) Dephosphorylation of ERK2 catalyzed by DUSP6 \cite{Ferrell:1997p318}. \\
 (h),(i) & The same enzyme catalyzing the modification of two different substrates, e.g.  the kinases ERK1, ERK2 and the kinase products of the p38 pathway catalyze phosphorylation of two substrates (the mitogen- and stress-activated protein kinase (MSK) 1/2 and the MAP kinase signal-integrating kinase (MNK) 1/2) \cite{Keshet:2010p272}.  \\
 (j),(k),(l) & Cascades with several modification steps and substrates, e.g. 
 (i) MAPK cascades; (ii) Protein kinase A (PKA) phosphorylates  phosphorylase kinase, which in turn phosphorylates glycogen phosphorylase (with dephosphorylation carried out by the same phosphatase, PP-1,  in the two different layers)  \cite[Fig. 7.17]{Fell-book},\cite{PCohen}.\\ \hline
\end{tabular}
\caption{Cellular systems represented by  Motifs (a)-(l) }\label{biotable}
\end{table}

\section{Motifs}
\subsection{Description}\label{motif-description}
We analyze the motifs shown in Fig.~\ref{motifs}. The motifs are referred to as Motif (a)-(l) and  provide simple abstract representations of  known cellular systems.  Some examples motivating our choice of motifs are given in Table \ref{biotable}.
A rich source of examples is  found in the well-studied MAPK cascades. 

To understand how multistationarity relates to enzyme usage we  base our investigation on a motif that does not show multistationarity itself. Therefore we build  the motifs from a one-site phosphorylation cycle which is mono\-stable  \cite{Gold-Kosh-81,Gold-Kosh-84,Bluthgen-sequestration,salazarN1} and shown in  Motif (a). A specific kinase (phosphatase) catalyzes phosphorylation (dephosphorylation) and all mo\-di\-fications can be reversed.    In general, protein phosphoforms are denoted by $S$ and $P$, Fig.~\ref{motifs}. If one phosphoform is converted into another, an arrow is drawn and  the enzyme ($E$ or $F$) catalyzing the reaction is indicated.
\begin{figure}[b!]
\centering
\includegraphics{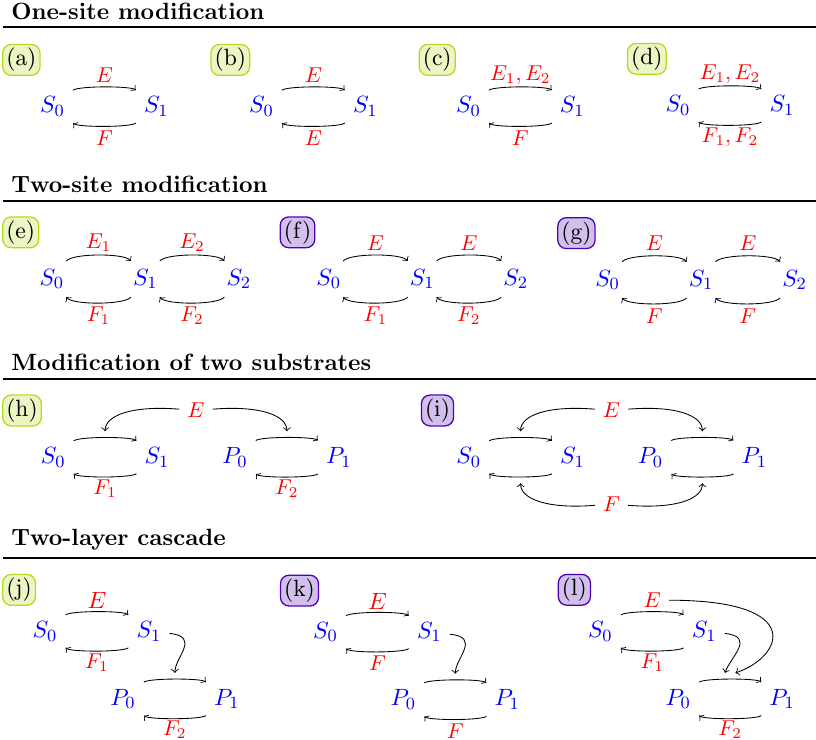}
\caption{\small Motifs composed of one or two one-site cycles. Motifs with purple label, and only these, admit multiple biologically meaningful steady states. $S_i$ and $P_i$ are substrates with $i=0,1,2$  phosphorylated sites. $E, E_1, E_2$ denote kinases, and $F, F_1, F_2$  phosphatases. In Motif (b) the  kinase and the phosphatase are the same enzyme.
}\label{motifs}
\end{figure}

Motifs (a)-(d) cover different possibilities for a one-site modification process. In Motif (b), the same enzyme catalyzes  phosphorylation and dephosphorylation. Motifs (c) and (d) account for  competition between  kinases and/or phosphatases to catalyze the same modification(s).

In eukaryotes, phosphorylation of most proteins takes place in more than one site \cite{olsen-phosphoproteome}, potentially with different bio\-logical effects \cite{wu-selective}. Combination of two one-site cycles into a two-site sequential  cycle   yields three motifs: (e) all enzymes are different, (f)  only one kinase but two phosphatases, and (g)  one kinase and one phosphatase.  By symmetry, Motif (f)  represents as well a motif with one phosphatase but two kinases. We assume for simplicity that both phosphorylation and dephosphorylation proceed in a sequential and distributive manner \cite{salazar_review}; that is, one site is (de)phosphorylated at a time in a specific order.

Motif (h) represents one-site modification of two substrates that share the same kinase but  use different phosphatases. This motif represents  by symmetry also a system with shared phosphatase.  If  both the kinase and the phosphatase are shared, we obtain Motif (i).

Finally, two one-site modification cycles can be combined in a cascade motif, where the activated substrate of the first cycle acts as the kinase of the next. The interplay between enzymes is represented by three cascades: (j)  dephosphorylation at each layer uses different phosphatases, (k) the phosphatase is  not layer specific, and (l) the  kinase of the first layer ca\-ta\-lyzes the modification in the second layer as well.

\subsection{Mathematical modeling}

We assume that any modification $S \rightarrow S^*$  follows the classical \emph{Michaelis-Menten} mechanism  in which an intermediate complex $Z$ is formed reversibly but dissociates into product and enzyme  $G$  irreversibly:

\centerline{\xymatrix{
S + G \ar@<0.5ex>[r]^(0.6){a} & Z  \ar@<0.5ex>[l]^(0.4){b} \ar[r]^(0.4){c} & S^* + G  
}}

The phosphate donor, generally ATP, is assumed to be in large constant concentration   and hence embedded into the rate constants. Imposing mass-action kinetics, the species concentrations over time  can be modeled by a system of polynomial differential equations. For example, in Motif (a) the  equations are (here $E$  also refers to the concentration of the kinase $E$, and similarly for the other species):
\begin{align*}
\dot{E} &= (b^E + c^E)X- a^E E S_{0} &\dot{X} &=  -(b^E + c^E)X + a^E E S_{0}  \\
\dot{F} &=(b^F + c^F)Y -a^F F S_{1} & \dot{Y} &= -(b^F + c^F)Y + a^F F S_1   \\
\dot{S_{0}} &=  b^{E} X +  c^F Y  - a^{E} E S_{0} & \dot{S_{1}} &=  c^E X +b^{F} Y - a^{F} F S_{1},
\end{align*}

\noindent where  $X$ $(Y)$ is the intermediate complex formed by the enzyme $E$ $(F)$ and the substrate $S_0$ $(S_1)$, and $\dot{x}$ denotes differentiation of $x=x(t)$ with respect to time.   For all motifs, there are conservation laws which define time-conserved quantities (total amounts), e.g. $\dot{E}+\dot{X}=0$ and so  $\oE=E+X$ is conserved. The total amounts are fixed  by the initial  concentrations  and determine the state space of the dynamical system. Motif (a) has three conserved  total amounts, namely $\oF=F+Y$ and $\oS=S_0+S_1+X+Y$ in addition to that of $\oE$.

The steady states of the system are  solutions (potentially negative) to the polynomial equations obtained by setting all derivatives  to zero with the constraints imposed by the conservation laws, once total amounts have been fixed. These laws imply that some steady state equations are redundant, 
e.g. either $\dot{E}=0$ or $\dot{X}=0$ can be disregarded. 
We focus on  the \emph{biologically meaningful steady states} (BMSSs), that is, the steady states for which all concentrations are non-negative (positive or zero). If at least two BMSSs exist for fixed total amounts, then  the system is said to be multistationary.

The specific form of the chemical reactions for  Motifs (a)-(l) together with the corresponding systems of differential equations are described in the Supplementary Material (SM), attached at the end of the main text.

\section{The Steady State Function $\varphi$}
In this section we outline the procedure used to analyze the  motifs. Details of the mathematical analysis are in the SM.

The system of equations describing the steady states can be reduced  substantially by elimination of variables \cite{TG-Nature,Feliu:2010p94}. For the  motifs considered here, elimination of variables implies that the steady states are characterized by a relation $\oS = \varphi(Y)$ between the concentration of one of the species, typically an intermediate complex $Y$, and the total amount of a substrate $\oS$. The concentrations   of the other species are given in terms of $Y$, usually as ratios of polynomials in $Y$.
By imposing all concentrations to be non-negative, $Y$ is restricted to a set $\Gamma$ of possible values. 
Further, for any $\oS\geq 0$, there is {\it at least one} BMSS, that is, $\oS=\varphi(Y)$ for some $Y$ in $\Gamma$.
The function  $\varphi$ is continuous  and differentiable in $\Gamma$ and depends on the rate constants and the total amounts, except for $\oS$.

The number of BMSSs can be found from the  ana\-ly\-sis of $\varphi$. If $\varphi$ is strictly increasing or decreasing in $\Gamma$,  $\varphi$ is one-to-one and hence, for a given total amount $\oS$, there is a corresponding unique $Y$ at steady state. Consequently, multistationarity cannot occur (Fig. \ref{summary}a).

Figs. \ref{summary}b and \ref{summary}c show situations where multistationarity occurs. If $\varphi$ has increasing and decreasing parts, or if $\Gamma$ is not connected, then   $Y_1\not=Y_2$ with $\varphi(Y_1)=\varphi(Y_2)=\oS$ might exist. Hence, there are at least two  BMSSs with the same  $\oS$.

These two figures represent substantially different switch responses. In Fig. \ref{summary}c, there is only one BMSS for low $\oS$. An increase of $\oS$ to $\oS_{max}$ causes the system to switch to  a `high' steady state (high $Y$)   under the assumption that the green steady states in the figure are stable. If $\oS$ is decreased again to $\oS_{min}$, then  the system switches back  to a `low' steady state. In Fig. \ref{summary}b,  there is  one BMSS for low $\oS$. An increase of $\oS$  keeps the system in the first branch of $\varphi$ and thus it will behave as a monostationary system.  

\begin{figure}[b!]
\includegraphics{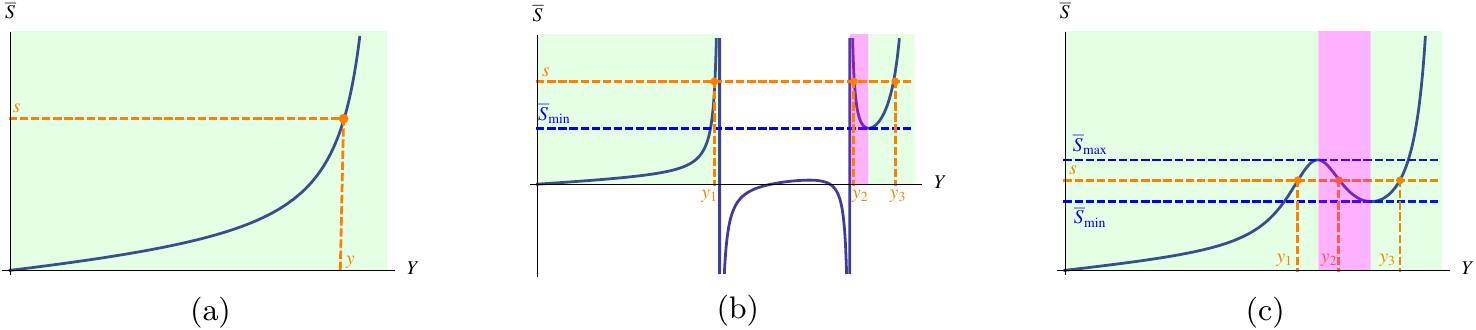}
\caption{\small  Possible shapes of $\varphi$ in $\Gamma$ (colored regions: magenta=unstable BMSSs; green=(possible) stable BMSSs).
(a) $\varphi$ is increasing and for any $s$, there is one BMSS ($y$) such that $\varphi(y)=s$. (b) $\Gamma$ consists of two disconnected regions. For $s<\oS_{\rm min}$ there is one BMSS; for $s=\oS_{\rm min}$ there are precisely two; and for $s>\oS_{\rm min}$ there are three;  $\varphi$ is also defined  in the white region but some concentrations become negative. (c) $\varphi$ is in part decreasing, in part increasing. For $\oS_{\rm min}<s<\oS_{\rm max}$ there are three BMSSs; for $s=\oS_{\rm min}$ or $s=\oS_{\rm max}$, there are two; and for $s<\oS_{\rm min}$ or $s>\oS_{\rm max}$,  there is  one. }\label{summary}
\end{figure}

Interestingly,  the derivative $\varphi'(Y)$ of $\varphi(Y)$  provides means to determine whether some steady states are unstable. Unstable steady states are unattainable under biological conditions. Specifically, we find that either the regions in which $\varphi$ is increasing or those in which it is decreasing must correspond to unstable steady states, see Section \ref{stability}.

 In summary, the function $\varphi$ determines whether multiple BMSSs exist and encodes information about the stability of steady states.  
In the next section we analyze  $\varphi$ for Motifs (e) and (f).  We show how enzyme sharing in a two-site cycle (f)  leads to multistationa\-ri\-ty, as opposed to a two-site cycle with different enzymes (e). A detailed analysis of all motifs is given in the SM.

\section{Mono- versus Multistationarity}

\subsection{Monostationarity}

Motifs (a)-(e), (h), and (j) have exactly one BMSS for any choice of rate constants and total amounts. In all cases, the function $\varphi$ is increasing in $\Gamma$. The procedure is very similar in all cases and  is thus only illustrated for Motif (e).  We take some effort in explaining the details as the procedure might have general applicability.

Motif (e) consists of three phosphoforms of the substrate, $S_0,S_1,S_2$, with subscript indicating the number of phosphorylated sites.  The chemical reactions of the system are:

\noindent
\centerline{\small
\xymatrix@R=11pt@C=16pt{
S_{0} + E_1 \ar@<0.4ex>[r]^(0.6){a_{1,E}} & X_1  \ar@<0.4ex>[l]^(0.4){b_{1,E}} \ar[r]^(0.4){c_{1,E}} & S_{1} + E_1 & 
S_{1} + E_2 \ar@<0.4ex>[r]^(0.6){a_{2,E}} & X_2  \ar@<0.4ex>[l]^(0.4){b_{2,E}} \ar[r]^(0.4){c_{2,E}} & S_{2} + E_2 \\
S_{1} + F_1 \ar@<0.4ex>[r]^(0.6){a_{1,F}} & Y_1  \ar@<0.4ex>[l]^(0.4){b_{1,F}} \ar[r]^(0.4){c_{1,F}} & S_{0} + F_1 &
S_{2} + F_2 \ar@<0.4ex>[r]^(0.6){a_{2,F}} & Y_2  \ar@<0.4ex>[l]^(0.4){b_{2,F}} \ar[r]^(0.4){c_{2,F}} & S_{1} + F_2
}}
We denote the inverse of the Michaelis-Menten constants of $E_i$  by $\kappa_{i,E}=a_{i,E}/(b_{i,E}+ c_{i,E})$ and of $F_i$ by $\kappa_{i,F}=a_{i,F}/(b_{i,F}+ c_{i,F})$.  The ratio of the catalytic constants of phosphatase and kinase is denoted by $\mu_i=c_{i,F}/c_{i,E}$.

The system has five conserved total amounts, which are assumed to be positive: Four for enzymes, $\overline{E}_i = E_{i} + X_{i}$ and $\overline{F}_i = F_{i} + Y_i$ ($i=1,2$),  and one for substrate, $ \overline{S}  =   S_{0}+S_1+S_2 + X_{1}+X_2+Y_1+Y_2$.  The steady state equations can be rewritten as
\begin{align}\label{twocycle_dif} 
X_1 &= \kappa_{1,E} E_1 S_{0} &\quad Y_1&=\kappa_{1,F} F_1 S_1 &\quad X_1 &= \mu_1 Y_1 \\
X_2 &= \kappa_{2,E} E_2 S_{1} &\quad Y_2&=\kappa_{2,F} F_2 S_2 &\quad X_2 &= \mu_2 Y_2. \nonumber
\end{align}
The last column gives $X_i$ in terms of $Y_i$. The total amounts $\oE_i,\oF_i$ give $E_i,F_i$ in terms of $Y_i$ as well: $E_i=\oE_i-\mu_iY_i$, $F_i=\oF_i-Y_i$. 
Further, if $\oE_i,\oF_i>0$ then $E_i=0$ or $F_i=0$ cannot be solutions to Eq.~\eqref{twocycle_dif}. 
It follows that  the concentrations $E_i,F_i$ are positive if and only if $Y_i$ is in $\Gamma_i=[0,\xi_i)$ with
$\xi_i=\min(\oF_i,\oE_i/\mu_i)$.

We further isolate $S_0,S_1$ from the first row in Eq.~\eqref{twocycle_dif} and $S_2$ from the second and obtain
\begin{equation}\label{sy}
S_0 = \frac{\mu_1 Y_1}{\kappa_{1,E}(\oE_1-\mu_1Y_1) }, \quad S_i= \frac{Y_i}{\kappa_{i,F}(\oF_i-Y_i) }
\end{equation}
for $i=1,2$. Then, $S_0,S_1$ (resp. $S_2$) are non-negative  increasing  continuous functions of $Y_1$ in $\Gamma_1$ (resp. $Y_2$ in $\Gamma_2$).
The remaining equation, $X_2=\kappa_{2,E} E_2 S_{1}$, gives $Y_2$ in terms of $Y_1$:
\begin{equation}\label{f1}
Y_2 =f(Y_1)=\frac{\kappa_{2,E} \oE_2 Y_1}{\mu_2(\kappa_{1,F}(\oF_1 - Y_1) + \kappa_{2,E} Y_1)}. \end{equation}
The function $f$ is non-negative increasing and continuous in $\Gamma_1$. Further, for $Y_2$ to be in $\Gamma_2$, 
it is required that $Y_1$ is  in $\Gamma=[0,\xi)\subseteq\Gamma_1$ with $\xi=\min(\xi_1, f^{-1}(\xi_2))$.

Finally, using Eq.~\eqref{f1}, we find that  $X_2$ and $S_2$ are increasing functions of $Y_1$  in $\Gamma$. Therefore, using the formulas above,  all  concentrations at steady state  are non-negative if and only if  $Y_1$ is in $\Gamma$.  We conclude that  the BMSSs of the system satisfy  
$$\oS =S_{0}+S_1+S_2 + X_{1}+X_2+Y_1+Y_2 = \varphi(Y_1)$$
for $Y_1$ in $\Gamma$. Since $\varphi$ is a sum of  increasing  continuous functions in $Y_1$, then so is $\varphi$. Additionally, $\varphi(0)=0$ and $\varphi(Y_1)$ tends to infinity as $Y_1$ tends to $\xi$. Thus, $\varphi$ has  the form in Fig. \ref{summary}a with a unique $Y_1$ for any given $\oS$, that is, there is one BMSS.

\subsection{Multistationarity}
We consider a two-site phosphorylation system with one kinase but different phosphatases for each phosphoform, as shown in Motif (f). Multistationarity has been observed numerically in this system  \cite{Markevich-mapk}. The system reduces to Motif (e)  by setting $E_1=E_2$ and  we use the notation introduced previously. The conservation laws are the same with the exception that  there is only one kinase law, $\oE=E+X_1+X_2$. Define $\xi_i=\min(\oF_i,\oE/\mu_i)$ and $\Gamma_i=[0,\xi_i)$. 

The system of equations to be solved is similar to  Eq.~\eqref{twocycle_dif} with $E=E_i$. Thus, we start by writing $X_i,E,F_i$ as  functions of $Y_1,Y_2$. Since $E,F_i$ must be positive at any BMSS, we require $0\leq Y_i<\oF_i$ and $\mu_1Y_1+\mu_2Y_2<\oE$.
For these values we obtain
$$ S_0 = \frac{\mu_1 Y_1}{\kappa_{1,E}(\oE-\mu_1Y_1-\mu_2Y_2) }, \quad S_i= \frac{Y_i}{\kappa_{i,F}(\oF_i-Y_i) }$$
for $i=1,2$, which are non-negative increasing continuous functions of $Y_i$. Using $X_2 = \kappa_{2,E} E S_{1}$, we obtain  $Y_2$ as a non-negative continuous function of $Y_1$ in $\Gamma_1$:
$$Y_2=f(Y_1)=\frac{\kappa_{2,E}(\oE -\mu_1 Y_1) Y_1}{\mu_2(\kappa_{1,F}( \oF_1 -Y_1) + \kappa_{2,E} Y_1)}. $$ 
This function resembles that in Eq.~\eqref{f1} except for the quadratic term in the numerator, which is a consequence of the conservation law for $\oE$ involving both $Y_1$ and $Y_2$.  Further, $f$  might not be increasing for all $Y_1$.

Let  $\Gamma=\{Y_1\in \Gamma_1, \textrm{ such that } f(Y_1)\in \Gamma_2 \}.$ Using the formulas derived above,  all concentrations at steady state  are non-negative if and only if  $Y_1$ is  in $\Gamma$. Hence, for any BMSS,
$$\oS =S_{0}+S_1+S_2 + X_{1}+X_2+Y_1+Y_2 = \varphi(Y_1)$$
with $Y_1$ in $\Gamma$. The function $\varphi$ is continuous with $\varphi(0)=0$ but $\Gamma$ might not be a connected interval.

Define $\Lambda =(1+\kappa_{2,E}/\kappa_{1,F})\mu_1 \oF_1 - \oE $. If $\Lambda \leq 0$, 
then $f$ is an increasing function in $\Gamma$ and we conclude that there is exactly one BMSS. 
If  $\Lambda> 0$,  then  $f$ has a unique local maximum for some  $\alpha$ in $\Gamma_1$ and all cases  in Fig. \ref{summary} can occur.   By varying the value of $\oF_2$ while keeping the other constants fixed, we obtain (Fig. \ref{motiff}):
\begin{itemize}
\item[(i)]  $\oF_2\leq (\oE-\mu_1\oF_1)/\mu_2$ (orange): $\Gamma=[0,\alpha_1)$ with $f(\alpha_1)=\oF_2$. The function $f$, and thus $\varphi$, are increasing  
and  there is  one BMSS  (Fig.~\ref{summary}a). 
\item[(ii)] $ (\oE-\mu_1\oF_1)/\mu_2 < \oF_2\leq f(\alpha)$ (green): $\Gamma= [0,\alpha_1)\cup (\alpha_2,\xi_1)$ with $\alpha_1\leq \alpha\leq\alpha_2$ and $f(\alpha_1)=f(\alpha_2)=\oF_2$. Hence,  $f$ is increasing in $[0,\alpha_1)$, decreasing in $(\alpha_2,\xi_1)$ and multistationarity occurs (Fig. \ref{summary}b).
\end{itemize}
When $f(\alpha)< \oF_2$ there is an $M$ such that: 
\begin{itemize}
\item[(iii)] $f(\alpha)<\oF_2< M$ (purple): $\Gamma=[0,\xi_1)$. The function $\varphi$ has a decreasing part and multistationarity occurs (Fig.~\ref{summary}c).
\item[(iv)] $M\leq \oF_2$ (blue):  $\Gamma=[0,\xi_1)$. The function $\varphi$ is  increasing  and there is  one BMSS  (Fig.~\ref{summary}a). 
\end{itemize}

\begin{figure}[t!]
\centering
\includegraphics[scale=1]{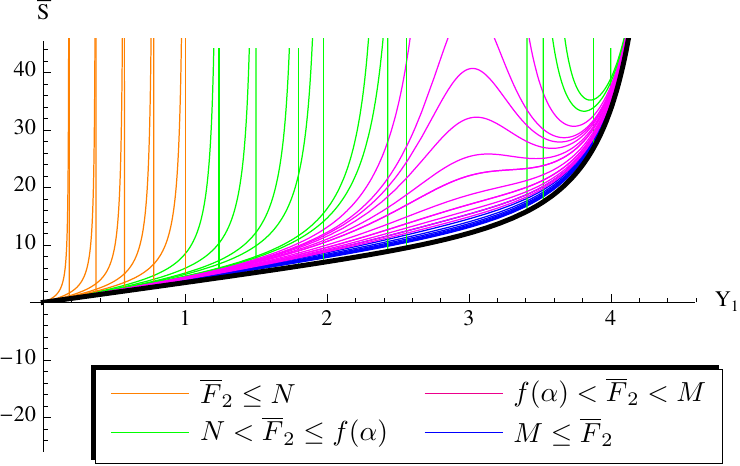}
\caption{\small The function $\varphi$ for Motif (f)  for different values of $\oF_2$ and fixed $\oE>\mu_1\oF_1$ and $\Lambda>0$. Let $N=(\oE-\mu_1\oF_1)/\mu_2$. The values $M$ and $\alpha$ depend on $\oE$, $\oF_1$ and the rate constants (see text). The set $\Gamma$ is disconnected only when $N<\oF_2\leq f(\alpha)$. For large $\oF_2$, $\varphi$ approaches the black line. The vertical bars mark the boundary of $\Gamma$.}\label{motiff}
\end{figure}

\begin{figure}[b!]
\begin{center}
\includegraphics[scale=1]{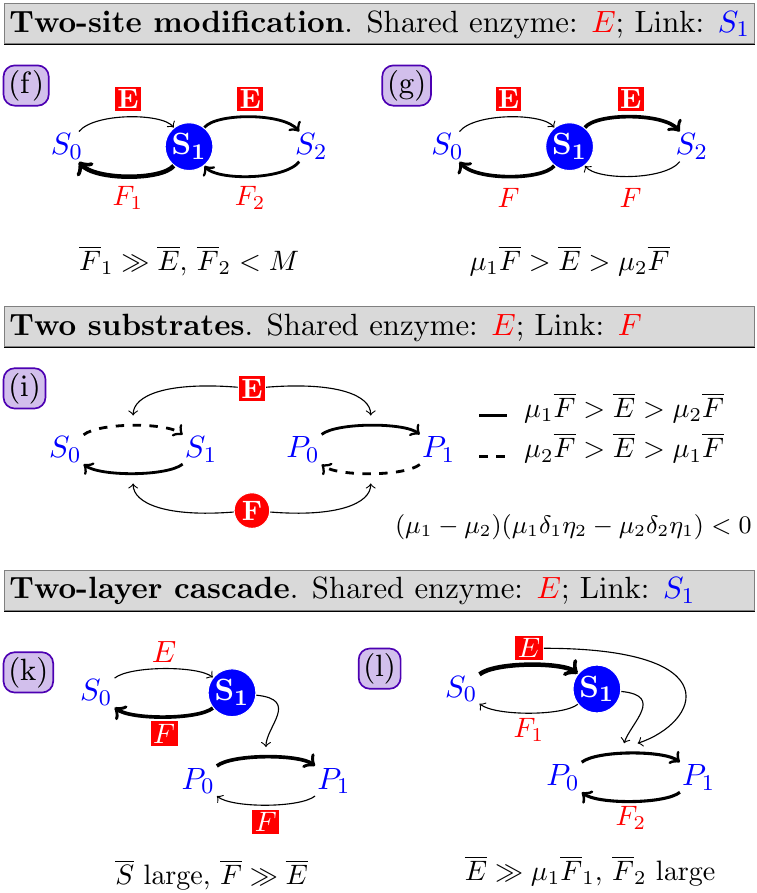}
\end{center}
\caption{\small Conditions for multistationarity are given. The shared enzyme is marked with a colored square; the link is marked with a colored circle; predominant modifications are marked in bold.  The symbol $\gg$ is short for `much larger'. }\label{multitable}
\end{figure}

\subsection{Understanding Multistationarity}
Motifs (f), (g), (i), (k) and (l) exhibit multistationarity for some choices of total amounts and  rate constants  (Fig.~\ref{multitable}).  
The regions for which multistationarity occurs are detailed in the SM. 
In Motifs (i), (k) and (l) it appears only as in Fig.~\ref{summary}c, while in Motifs (f) and (g) both the forms in Fig.~\ref{summary}b and Fig.~\ref{summary}c occur. 

 It is remarkable  that  in Motifs (f), (k) and (l) multistationarity occurs for any set of rate constants and depends only on the initial conditions (that is,  the total amounts). Thus, multistationarity can occur in these systems independently of the specific kinetics. In contrast,  multistationarity in Motif (i) depends on the rate constants and hence not all  kinetics exhibit multistationarity. The same  appears to be  the case for Motif (g)  \cite{conradi-mapk,WangSontag}.
 
The common characteristic of these motifs is that a single enzyme is responsible for catalyzing two different substrate modifications, which  at the same time are \emph{linked} (Fig.~\ref{multitable}). Indeed, in Motifs (f) and (g) the substrates are linked through $S_1$, which is a modified as well as an unmodified substrate for the shared enzyme $E$. For the Motifs (k) and (l), the link is given by $S_1$ which is a modified substrate and a kinase, and the common enzymes are $F$ and $E$, respectively. In Motif (i) the kinase $E$ is common and the phosphatase $F$ provides the link (or {\it vice versa}). In contrast, in Motif (h)  an enzyme is responsible for two different modifications, but there is no link between the two substrates. Consequently,  multistationarity cannot be observed.

Multistationarity  can arise from two opposing dynamics acting on the same substrate (Fig.~\ref{multitable}). For example,  in Motif (f), if $\oF_1$ is much bigger than $\oE$ and $\oF_2<M$, then there are multiple BMSSs. Thus, since the amount of phosphatase in the first cycle is much larger than the amount of kinase, $S_1$ is pushed towards the unmodified form $S_0$, while in the second cycle  $S_1$ is driven towards the fully modified form $S_2$ (because $\oF_2<M$). 

In Motif (i), provided the conditions on the parameters are fulfilled  (Fig.~\ref{multitable}), multistationarity occurs if either $\mu_1\oF>\oE  >\mu_2 \oF$ or  $\mu_2\oF<\oE<\mu_1 \oF$. It implies that in one cycle the phosphatase `wins', while in the other the kinase does.

\section{Stability Analysis}\label{stability}

BMSSs are defined as steady states for which all concentrations are non-negative. However, a steady state is biologically attainable only if it is (asymptotically) stable, that is, nearby trajectories are attracted to it. We show here for our motifs that if  $\varphi'(Y)<0$ for some steady state $\varphi(Y)=\oS$, then it is unstable.

\subsection{The Jacobian and variable elimination}
For a system of ordinary differential equations in $\R^m$, a steady state $z$ is asymptotically stable if all eigenvalues of the Jacobian evaluated at $z$ have negative real parts \cite[Thm.~1.1.1]{Wiggins}. 
Since the Jacobian is a real matrix,   the complex eigenvalues come in pairs of conjugates  and their product is a positive number. If $m$ is odd and all eigenvalues have negative real parts, their product, and hence the determinant of the Jacobian, must be negative. If $m$ is even and $z$ stable, then the product of the eigenvalues must be positive.  Thus,  the sign of the determinant of the Jacobian  provides a necessary condition for a steady state to be stable and a sufficient condition for it to be unstable.

For $x=(x_1,\dots,x_n)$ let $x^{(j)}=(x_1,\dots,\widehat{x_{j}},\dots,x_n)$ ($x$ with $x_j$ removed).
We make the following observation (see SM for a proof): Let $f=(f_1,\dots,f_n)\colon\Omega\subseteq \R^n\rightarrow \R^n$ be a differentiable function defined on an open set $\Omega$ and $z$ such that $f(z)=0$. Assume that $x_j$ can be eliminated from the equation $f_i=0$ in a neighborhood around $z$; that is, there exists a differentiable function $\psi\colon\Omega^{(j)}\subseteq \R^{n-1}\rightarrow \R$, $z^{(j)}$ in  $\Omega^{(j)}=\{x^{(j)}|x\in\Omega\}$, such that $x_j=\psi(x^{(j)})$ if $f_i(x)=0$.
Define $\bar{f} \colon \Omega^{(j)} \rightarrow \R^{n-1}$ by
$\bar{f}_k(x)=f_k(x_1,\dots,x_{j-1},\psi(x),x_j,\dots,x_{n-1})$ for all $k\neq i$ and let $\bar{J}$ denote the associated Jacobian. Then, the determinant of the Jacobian of $f$ at $z$ satisfies
\begin{equation}\label{jacobian}
 (-1)^{i+j}\frac{\partial f_i}{\partial x_j}(z)\ \det(\bar{J}(z^{(j)})) = \det(J(z)). 
 \end{equation}

\subsection{Unstable steady states} 
The relation between the sign of the determinant of the Jacobian and stability, together with Eq. \eqref{jacobian},  lead to a criterion to detect unstable steady states. For each motif, let $x=(x_1,\dots,x_n)$ be the species concentrations, $\dot{x}_i=h_i(x)$ the differential equations and   $\oA_1=g_1(x),\dots,\oA_c=g_c(x)$  the equations for the total amounts. We choose the order of the species such that $x_i$, $i=1,\dots,c$,  can be isolated from $\oA_i=g_i(x)$ and the steady state equation $\dot{x}_i=0$ becomes redundant.  For fixed total amounts, $\oA_1,\dots,\oA_c$, the steady states are the solutions to the system  $f(x)=0$ of $n$ equations in $n$ variables with $f_i(x)=g_i(x)-\oA_i$ for $i=1,\dots,c$ and $f_i(x)=h_i(x)$ for $i=c+1,\dots,n$.

Let $J(z)$ denote the Jacobian of $f$ at $z$. In the SM we prove: If $z$ is a steady state, that is, $f(z)=0$, and either (i) $n-c$ is even and $\det(J(z))<0$ or (ii) $n-c$ is odd and $\det(J(z)) > 0$, then $z$ is unstable. The proof relies on the observation made about the eigenvalues and Eq.~\eqref{jacobian}.

The function $\varphi$ of our motifs is derived through successive elimination of variables precisely from the system of equations $f(x)=0$. Using Eq.~\eqref{jacobian},  the sign of $\det(J(z))$ at a steady state $z$ can be traced back from the sign of the derivative of $\varphi$ (the Jacobian of a system with one equation) by considering  the equation number ($i$), the equation variable ($j$), and the sign of $\partial f_i/\partial x_j$ after each elimination.

To exemplify the  procedure, consider Motif (f), where $n=10$ and $c=4$. The  system is (see SM for details):
\begin{align*}
f_1(x)  &= E + X_1+X_2 - \oE  & f_6(x) & =  X_2 - \kappa_{2,E} E S_{1}  \\ 
f_2(x)  &= F_1 + Y_1 -\oF_1  & f_7(x) &= X_1-\mu_1 Y_1 \\  
f_3(x) &= F_2 + Y_2 -\oF_2  &   f_8(x) &=X_2- \mu_2 Y_2 \\
f_4(x) &=  S_0 + S_1 + S_2+X_1+X_2 + Y_1+Y_2 - \oS  & f_9(x) &=  Y_2- \kappa_{2,F} F_2 S_2   \\  
f_5(x) &=  X_1 - \kappa_{1,E} E S_{0}  & f_{10}(x) &=Y_1- \kappa_{1,F} F_1 S_1   
\end{align*}
with species $x=(E,F_1,F_2,S_0,X_1,X_2,S_1,S_2,Y_2,Y_1).$ 
The function $\varphi$ is  $f_4$ in terms of $Y_2$ after successive eliminations. Let $\epsilon_k=\pm 1$ depending on whether the sign of the determinant of the Jacobian changes ($-$) or not ($+$) after the $k$-th elimination. Then, $\sign(\varphi'(Y_2)) = \left(\prod_k \epsilon_k\right) \sign(\det(J(z)))$, where $z$ is the steady state with $z_9= Y_2$.

The order and sign of the eliminations are shown in Table~\ref{elimination_table}. We find that  $\prod_k \epsilon_k=1$, implying   that the sign of $\varphi'(Y_2)$ agrees with the sign of  the determinant of the Jacobian of $f$ evaluated at the corresponding steady state.
Since $n-c=6$ is even, we conclude that the values of $Y_2$ for which $\varphi$ is decreasing, that is, $\varphi'(Y_2)<0$, correspond to unstable steady states. Further, it follows that unstable points come between other steady states that  presumably are stable. 

\begin{table}[t!]
\renewcommand{\arraystretch}{1.3}
\setlength{\tabcolsep}{3pt}
\centering
\begin{tabular}{cccc||cccc}
\hline
$k$ & Elimination &  Behavior$^{\rm a}$ & $\epsilon_k$$^{\rm b}$ & $k$ & Elimination &  Behavior & $\epsilon_k$\\ \hline
1 & $(f_1,E)$  &  $(1,1,+)$ & +  &6 & $(f_5,S_0)$ & $(2,1,-)$ & +  \\
2 & $(f_2,F_1)$  & $(1,1,+)$ & +  & 7 & $(f_6,S_1)$ &  $(2,1,-)$ & +  \\
3 & $(f_3,F_2)$ &  $(1,1,+)$ & +  & 8 & $(f_9,S_2)$ & $(2,1,-)$ & +  \\ 
4 & $(f_7,X_1)$  &  $(4,2,+)$ & +  & 9 & $(f_{10},Y_1)$ & $(2,1, -)$ & + \\
5 & $(f_8,X_2)$  & $(4,2,+)$ & + \\ \hline
\end{tabular}
\begin{flushleft}
{\small $^{\rm a}$ $(i,j,\sigma)$  indicates that $i,j$ are the indices of the equation and variable iteratively being eliminated and $\sigma$ corresponds to whether $\bar{f}_i$ ($f_i$ after substitution of the previous eliminations) is increasing ($\sigma=+$) or decreasing ($\sigma=-$) as a function of $x_j$.
  \\
$^{\rm b}$ obtained as obtained as $\sigma(-1)^{i+j}$.
}
\end{flushleft}
\caption{Elimination of variables for Motif (f). After each elimination  the system $\bar{f}$ is rewritten  to correctly determine the sign  of $\partial\bar{f}_i/\partial x_j$ before the next elimination}

\label{elimination_table}
\end{table}

\subsection{Stability in monostationarity motifs}
The Routh-Hurwicz criterion \cite{Hurwitz} gives sufficient and necessary conditions for the Jacobian to have all eigenvalues with negative real parts. Thus, the (asymptotic) stability of a steady state can be determined by this criterion.
For the Motifs (a)-(e) and (h) the criterion is fulfilled and the unique BMSS is asymptotically stable. We have not been able to determine this for Motif (j).

\section{Discussion}
We have investigated small motifs without feedback that account for cross-talk, enzyme competition, sharing and specificity in post-translational modification systems and  determined some features that lead to multistationarity in signaling pathways.

Bistability, and generally multistability, in biological systems is seen as a mechanism of cellular decision making. Compared to systems with a single steady state, the presence of multiple stable steady states provides a possible switch between different responses and increased robustness with respect to environmental noise.  Our study has been driven by the observation that biological systems deviate from a one-to-one correspondence between enzymes and the modifications they catalyze. This phenomenon, known as cross-talk and enzyme sharing, can  cause  multistationarity and hence be essential for regulating  signaling systems. 

Our work extends the view of multistationarity as arising from multisite phosphorylation \cite{TG-Nature}  to the view that  multistationarity is driven by a single enzyme that  catalyzes linked substrates.  Two opposing dynamics acting on the same substrate is a recurrent characteristic of multistationarity. These observations await a precise mathematical formulation and an investigation of its generality.

Our approach is conceptually simple and reduces to the study of analytical properties of a function $\varphi$ that relates a conserved total amount and the concentration of a species at steady state. The graph $(\varphi(Y),Y)$ can be seen as a bifurcation diagram with one parameter, $\oS$.
When monostationarity occurs,  analysis of $\varphi$ is quite straightforward, while a more in-depth analysis is required when  multistationarity occurs. An advantage of this approach is that unstable steady states are readily detected from the  form of  $\varphi$.

The existence of $\varphi$ is not guaranteed  in general. For larger systems, a detailed study independently of the rate constants  cannot be pursued. However, we have shown that for cascades of arbitrary length (extentions of Motif (j)), the function $\varphi$ exists and  properties of the full cascade can be derived from properties of the building block, the one-site cycle  \cite{Feliu:2010p94}. We are currently working on extending this approach to other systems.

There are some mathematical characterizations of mono\-stationarity in chemical reaction networks that relate to our work. 
The theory of monotone systems \cite{smith-monotone} characterizes systems in which there is only one BMSS and at the same time gives conditions for global stability of the BMSS.  However, a condition for the theory to be applicable here is that  no species take part in more than two reactions \cite{angelisontag2}. This condition is only fulfilled for Motif (a) and hence the theory cannot be applied here.

The only general theory of applicability is that of injective systems \cite{craciun-feinbergI,craciun-feinbergII,craciun-feinberg-pnas}. The motifs that do not allow multiple steady states are in fact  injective in the sense of \cite{craciun-feinbergI} when modeled as a {\it continuous flow stirred tank reactor}. This fact  implies that {\it at most one} steady state  exists \cite{craciun-feinberg}. The advantage of this theory is that monostationarity is derived for more general kinetics than mass-action \cite{Banaji-donnell}.
However, when restricted to mass-action, our approach is as simple as checking for injectivity and provides in addition simple  rational functions that enable further comprehensive studies of variation in species concentrations, such as stimulus-response curves and signal amplification \cite{Feliu:2010p94,Feliu-proc}.  

\bigskip
\emph{Acknowledgements.}
EF is supported by a postdoctoral grant from the ``Ministerio de Educaci\'on'' of Spain and the project  MTM2009-14163-C02-01 from the ``Ministerio de Ciencia e Innovaci\'on''.  CW is supported by the Lundbeck Foundation, Denmark. Neil Bristow, Freddy Bugge Christiansen and Michael Knudsen are thanked for commenting on the manuscript.

%\bibliography{phospho}

\newpage 
\appendix

\begin{center}
{\LARGE \bf Supplementary Material}
\end{center}

\section{Introduction}

In this Supplementary Material (SM) we provide  details of the analysis of multistationarity given in the main text, as well as proofs of the results mentioned there. We go through all motifs and the reader will easily see that many arguments are repeated as the motifs share common structures. We have tried to keep the analysis of each motif independent of the analyses of the other motifs.  However, the motifs progress from simple one-site modification cycles to more complex motifs and the line of thought transpires most easily from the simple motifs. It is therefore advisable to study the simple motifs before the more complicated motifs. To keep the SM self-contained, Figures 1 and 2 of the main text are reproduced here.

Motif (g) (a futile cycle with two parts) has been studied  extensively in the literature. Motif (j) (a linear cascade with two layers) was studied for arbitrary length in \cite{Feliu:2010p94}, where we showed that the system admits only one steady state. It is briefly covered here for completeness.

\begin{figure}[!b]
\centering
\framebox{
\includegraphics[scale=1.0]{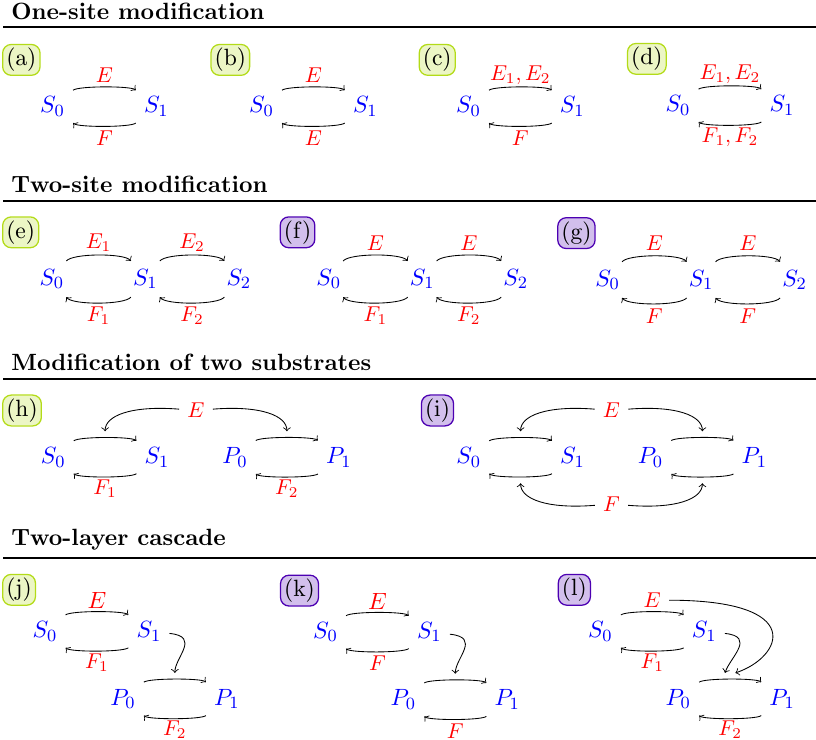}}
\caption{\small Motifs covered in this work. Figure 1 of the main text.
}\label{motifs}
\end{figure}

\begin{figure}[!t]
\includegraphics{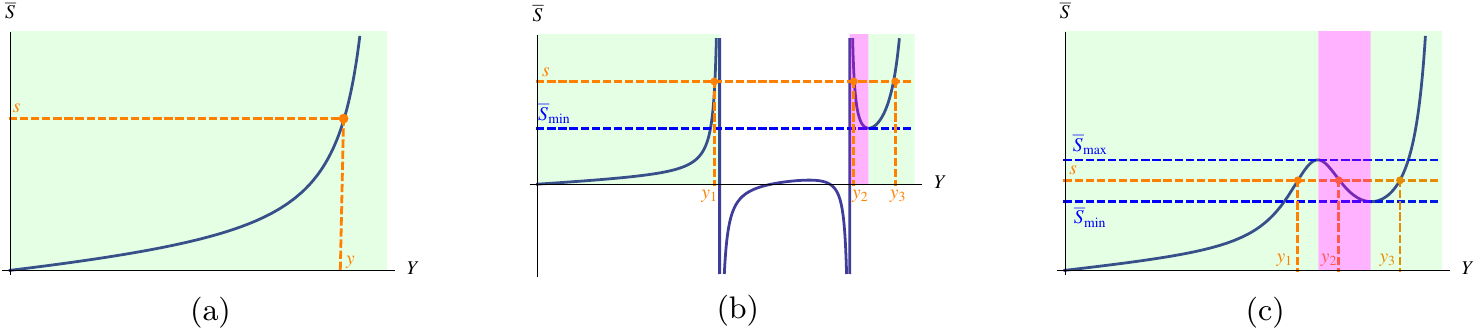}
\caption{  \small  Possible shapes of $\varphi$ in $\Gamma$ (colored regions: magenta=unstable BMSSs; green=(possible) stable BMSSs).
(a) $\varphi$ is increasing and for any $s$, there is one BMSS ($y$) such that $\varphi(y)=s$. (b) $\Gamma$ consists of two disconnected regions. For $s<\oS_{\rm min}$ there is one BMSS; for $s=\oS_{\rm min}$ there are precisely two; and for $s>\oS_{\rm min}$ there are three;  $\varphi$ is also defined  in the white region but some concentrations become negative. (c) $\varphi$ is in part decreasing, in part increasing. For $\oS_{\rm min}<s<\oS_{\rm max}$ there are three BMSSs; for $s=\oS_{\rm min}$ or $s=\oS_{\rm max}$, there are two; and for $s<\oS_{\rm min}$ or $s>\oS_{\rm max}$,  there is  one.  Figure 2 in the main text.  }\label{summary}
\end{figure}

\subsection{Notation and preliminaries}
A continuous and differentiable function with  continuous derivative is said to be $\mathcal{C}^1$. We denote by $\R_{+}$ the set of positive real numbers and by $\overline{\R}_{+}$ the set of non-negative real numbers. A rational function is a quotient of two polynomials.  An increasing (decreasing) function $f$ fulfills $f(x)<f(y)$ ($f(x)>f(y)$) for $x<y$, i.e.~we take increasing (decreasing) to mean strictly increasing (decreasing). The notation $x\in A$ means that $x$ belongs to the set $A$.

We make frequent use of the Implicit Function Theorem in two dimensions to relate two variables to each other and to find derivatives of implicit functions. Details about the Implicit Function Theorem can be found in text books on functional analysis.

For $x=x(t)$ a real function of $t$, we denote by $\dot{x}$ the derivative of $x$ with respect to $t$, $dx/dt$.

\subsection{Constants and variables} 
We consider the rate constants to be fixed and positive, i.e. in $\R_{+}$. The constants are $a_*^*, b_*^*, c_*^*$ and those derived from these:  $\kappa_{*,*}$,  the inverse of the Michaelis-Menten constants  and $\mu_*^*$, the ratios of the catalytic constants of phosphatase and kinase. Here, to ease the presentation, $\eta_*$ denotes $\kappa_{*,E}$ for a kinase $E$   and $\delta_*$ denotes $\kappa_{*,F}$ for a phosphatase $F$.

The total amounts are likewise considered fixed and positive. Species concentrations are considered variables of the system. For example, in $X-\eta ES_0$, $\eta$ is a fixed unknown constant and $X,E$ and $S_0$ are variables that depend on each other, e.g.~$X$ can be considered a function of $E$ and $S_0$, $X=X(E,S_0)$. That is,  the differential equations describing the system is a set of polynomials with coefficients in the ring $\mathbb{R}[a_*^*,b_*^*,c_*^*,\oS_*,\oP_*,\oE_*,\oF_*]$ or $\mathbb{R}[\eta_*,\delta_*,\mu_*,\oS_*,\oP_*,\oE_*,\oF_*]$ (where some constants might not be present in all systems, e.g.~$\oP$ is not in Motif (a)) and variables being a finite list of species concentrations, $E_*,F_*,S_*,P_*,X_*,Y_*$  (where some variables might not be present in all systems, e.g.~$P_0$ is not in Motif (a)). 

\subsection{BMSSs and total amounts} 
\label{BMSStot}
 We only consider the system at steady state, that is all differential equations are put to zero and solved for the species concentrations under the constraints imposed by the conservation laws. Only solutions to the system of equations for which all species concentrations are non-negative, i.e.~positive or zero, are of interest. These solutions are called Biologically Meaningful Steady States, or BMSSs. Also, we assume that the {\it  total amounts are positive} (non-zero), i.e.~enzymes and substrates are always present in the system in some form (e.g.~in some phosphoform or intermediate complex). 

One can prove for each specific motif that  if all total amounts are positive, then all species concentrations at a BMSS are positive as well (i.e. if  a BMSS exists then all species concentrations must be positive). It follows that if 1) the species concentrations are non-negative (i.e.~positive or zero), 2) the total amounts are positive, and 3) the conservation laws and the steady state equations are fulfilled, then the species concentrations constitute in fact a BMSS and hence are positive. This observation is  frequently used in the following.  

\subsection{Method} 
For all motifs we follow the same procedure. We take the set of differential equations describing the systems together with the equations for the total amounts (the conservation laws) and solve for one variable, say an intermediate complex $Y$. Specifically, we choose one equation for a total amount, e.g.~$\oS=S_0+S_1+X+Y$, and use the other equations (differential equations and equations for total amounts) to  provide expressions for the variables at steady state in terms of $Y$, i.e.~we find expressions such that $S_0=S_0(Y)$, $S_1=S_1(Y)$ and $X=X(Y)$ are functions of $Y$. These functions  only depend on $Y$, the rate constants and the total amounts, \emph{excluding} $\oS$.  The functions are substituted into the expression for $\oS$ to obtain a function $\varphi(Y)$ that relates $Y$ to $\oS$, i.e.~$\oS=\varphi(Y)$. The analytic form of $\varphi$ determines how many BMSSs the system has for a given set of rate constants and total amounts. For example, if $\varphi$ is increasing there can only be one $Y$ corresponding to a given $\oS$. See Figure~\ref{summary} for illustration.

We allow $Y$ to be zero in which case $\oS=\varphi(Y)$ also is zero (it follows from the construction of $\varphi$ for each motif). This is done for simplicity as we otherwise would have to consider the limit of $\varphi(Y)$ for $Y\rightarrow 0$ in each case (see Section~\ref{BMSStot} about positive and non-negative concentrations). Some variables (depending on the motif) are not allowed to be zero as this could lead to division by zero.

\subsection{Technicalities}  
Some  manipulations of the equations are left to the reader if they only involve standard  elimination, rewriting and differentiation techniques. In general, it is a good idea  to do the calculations yourself (either by hand or in Mathematica$^{TM}$, Maple or similar) as many details are left out due to space and readability constraints. We have done all derivations  in Mathematica$^{TM}$\, and checked them manually. Proofs of propositions are collected in an appendix to keep the presentation simpler. Equation numbers are to the immediate right of the equations and only equations that are used later  are numbered.

\section{Steady states of the motifs}

\subsection{One-site phosphorylation cycles}

\paragraph{Motif (a).}
We recently used this motif as the building block of linear cascades and showed that it admits only one BMSS \cite{Feliu:2010p94}. We follow the approach taken in  \cite{Feliu:2010p94} and summarize it below. Essentially this is the approach taken for all motifs.

The chemical reactions of the motif are 

\centerline{\xymatrix{
S_{0} + E \ar@<0.5ex>[r]^(0.6){a^E} & X  \ar@<0.5ex>[l]^(0.4){b^E} \ar[r]^(0.4){c^E} & S_{1} + E  &
S_{1} + F \ar@<0.5ex>[r]^(0.6){a^F} & Y  \ar@<0.5ex>[l]^(0.4){b^F} \ar[r]^(0.4){c^F} & S_{0} + F
}}

\noindent The corresponding system of differential equations is:

\vspace{-0.2cm}
\begin{minipage}[h]{0.45\textwidth}
\begin{align}
\dot{S_{0}} &=  b^{E} X +c^F Y - a^{E} E S_{0}    \nonumber \\
\dot{E} &= (b^E + c^E)X- a^E E S_{0}  \nonumber \\
\dot{F} &=(b^F + c^F)Y -a^F F S_{1} \nonumber 
\end{align}
\end{minipage}
\hspace{0.5cm}
\begin{minipage}[h]{0.45\textwidth}
\begin{align}
\dot{S_{1}} &=  c^E X +b^{F} Y - a^{F} F S_{1}   \label{eq1a2}  \\
\dot{X} &=  -(b^E + c^E)X + a^E E S_{0} \label{eq1a4} \\
\dot{Y} &= -(b^F + c^F)Y + a^F F S_1. \label{eq1a5}
\end{align}
\end{minipage}

\medskip
It follows that $\dot{E}+\dot{X}=0$, $\dot{F}+\dot{Y}=0$ and $\dot{S_0}+\dot{S_1}+\dot{X}+\dot{Y}=0$. Hence, the conservation laws are given by  
$$\overline{E} = E + X,\quad\oF=F+Y,\quad   \overline{S}   =   S_{0}+S_1+X+Y.$$
Due to the conservation laws some equations are redundant, for example $\dot{F}=0$ and $\dot{Y}=0$ are the same equation. If fixed total amounts are given, we have to solve a system of 6 equations in 6 variables consisting of the equations for the total amounts and, for instance, equations \eqref{eq1a2}-\eqref{eq1a5}. From the latter equations, we obtain the equivalent system
\begin{equation}\label{reduced}
X - \eta E S_{0}=0, \quad Y-\delta F S_1=0,\quad X-\mu Y=0\end{equation}
with constants $\eta=a^E/(b^E + c^E)$, $\delta=a^F/(b^F + c^F)$ and 
$\mu= c^F/ c^E.$

It follows that at steady state  $E=\oE - \mu Y$ and $F=\oF - Y$. Note that if both  $\oE,\oF$ are positive (i.e. non-zero), then we cannot have $E=0$ or $F=0$ at  steady state (for example if $E=0$, then $X=0$ from \eqref{eq1a4}, hence $\oE=0$). Let $\xi = \min(\oF,\oE/\mu)$ and  $\Gamma=[0,\xi)$. The variables $E,F$ are both positive and $Y$ is non-negative only if $Y$ is in $\Gamma$. Further  it follows from \eqref{reduced} that
$$S_0=\frac{\mu Y}{\eta(\oE -\mu Y)}, \quad\textrm{and}\quad  S_1=\frac{Y}{\delta(\oF - Y)}.$$
(Note that division by zero does not occur as $E=\oE -\mu Y$ and $F=\oF - Y$ are both greater than zero.) These two functions are non-negative, increasing and $\mmC^1$ for $Y\in \Gamma$.  When $Y$ tends to $\xi$, one of them tends to $+\infty$, while the other remains bounded.

Let $\Delta=\mu\oF-\oE$, so that  $\Gamma=[0,\oF)$  if $\Delta\leq 0$, and $\Gamma=[0,\oE/\mu)$ if $\Delta>0$.  Using the conservation law for $\oS$ we obtain:

\begin{result}[Motif (a)]\label{motif1a} 
Let a one-site modification cycle  be given with positive   total amounts $\oS,\oE,\oF$. Then  the system has a  unique BMSS.  Specifically, the BMSS satisfies $\oS=\varphi(Y)$ for $Y$ in $\Gamma=[0,\xi)$, $\xi=\min(\oF,\oE/\mu)$, where
$$\oS=\varphi(Y)=\frac{\mu Y}{\eta(\oE -\mu Y)}+\frac{Y}{\delta(\oF - Y)}+(1+\mu)Y$$
is an increasing rational $\mmC^1$ function which tends to infinity as $Y$ tends to $\xi$ and fulfills $\varphi(0)=0$.
  \end{result}

\begin{obs}\label{poly}
Since $\varphi$ is a rational function, the equation $\oS=\varphi(Y)$ can be written in polynomial form by   elimination of denominators. In the present case
$$p(Y)=\mu\delta Y(\oF - Y)+\eta Y(\oE -\mu Y)+\eta\delta(\,(1+\mu)Y-\oS\,)(\oF - Y)(\oE -\mu Y),$$
which is a third degree polynomial in $Y$. Note that $p(0)<0$, $p(\xi)>0$, $p(\zeta)<0$ with $0<\xi<\zeta=\max(\oF,\oE/\mu)$, and $p(Y)$ tends to $+\infty$ as $Y$ tends to $+\infty$; hence $p(Y)$ has three positive roots. However, only the {\it first} root is in $\Gamma$, and it corresponds to the only BMSS of the system. In some systems, several positive roots are BMSSs. This remark is applicable in all cases below whenever $\varphi$ is a rational function.
\end{obs}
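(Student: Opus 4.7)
The plan is to verify by direct manipulation that clearing denominators in the equation $\oS=\varphi(Y)$ — that is, multiplying through by $\eta\delta(\oE-\mu Y)(\oF-Y)$ — produces exactly the displayed polynomial $p(Y)$. Inspection of the leading term is then immediate: only the third summand can attain degree three, since it is a product of three linear factors in $Y$, and its leading coefficient $\eta\delta\mu(1+\mu)$ is strictly positive; the first two summands are quadratic. Hence $p$ is a cubic with positive leading coefficient, which at once gives $p(Y)\to+\infty$ as $Y\to+\infty$.

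Next I would evaluate $p$ at $0$, $\xi$, and $\zeta$. At $Y=0$ the first two summands vanish and the third collapses to $-\eta\delta\oS\oE\oF<0$. At $Y=\xi$ and $Y=\zeta$ the analysis splits symmetrically according to whether the extremum is attained at $\oF$ or at $\oE/\mu$: in either subcase one of the factors $(\oF-Y)$ or $(\oE-\mu Y)$ vanishes, wiping out two of the three summands and collapsing $p$ to a single monomial whose sign is that of the \emph{other} factor. That factor is strictly positive at $\xi$ because $\xi$ is the smaller of $\oF$ and $\oE/\mu$, and strictly negative at $\zeta$ because $\zeta$ is the larger; the hypothesis $\xi<\zeta$ keeps both inequalities strict.

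Having established $p(0)<0$, $p(\xi)>0$, $p(\zeta)<0$, and $p(Y)\to+\infty$, three applications of the intermediate value theorem produce roots in the open intervals $(0,\xi)$, $(\xi,\zeta)$, and $(\zeta,+\infty)$; since $p$ is cubic these are all of its roots, and they are three distinct positive numbers. The root in $(0,\xi)\subset\Gamma$ coincides with the unique BMSS supplied by Result~\ref{motif1a}, while the other two lie outside $\Gamma$, where at least one of $E=\oE-\mu Y$, $F=\oF-Y$ becomes negative, so they are not biologically meaningful.

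The only delicate point is the boundary configuration $\oF=\oE/\mu$, in which $\xi=\zeta$ and the middle IVT interval collapses. Here $(\oE-\mu Y)(\oF-Y)=\mu(\oF-Y)^2$ and the factor $(\oF-Y)$ can be pulled out of $p$; what remains is a quadratic with negative leading coefficient, negative at $Y=0$ and positive at $Y=\oF$, so the IVT still supplies one root in $\Gamma=[0,\oF)$ and a second in $(\oF,+\infty)$, with $Y=\oF$ itself furnishing the third (simple) root on the boundary. The conclusion — a unique BMSS corresponding to the smallest positive root of $p$ — therefore persists. Apart from this minor bookkeeping, the entire argument is routine polynomial arithmetic.
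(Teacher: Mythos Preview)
Your proposal is correct and follows precisely the approach the paper sketches: the remark in the paper simply asserts the sign pattern $p(0)<0$, $p(\xi)>0$, $p(\zeta)<0$ and the limiting behaviour without supplying details, and you have faithfully filled those in by direct evaluation. Your handling of the degenerate case $\oF=\oE/\mu$ (where $\xi=\zeta$ and the paper's strict inequality $\xi<\zeta$ fails) is an extra bit of care that the paper omits; it does not change the conclusion but makes the argument complete.
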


\begin{obs}\label{division_zero}
We argued above that $E$ and $F$ are non-zero if $\oE$ and $\oF$ are positive. This ensures that we do not divide by zero when constructing $\varphi$. For the remaining variables we only need to ensure non-negativity, cf.~Section \ref{BMSStot}.  For the other motifs, we  make use of similar reasoning.
\end{obs}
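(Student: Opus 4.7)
The plan is to verify the three assertions packaged in this remark: (i) $E,F>0$ at every BMSS whenever $\oE,\oF>0$; (ii) this positivity is exactly what prevents division by zero in the rational expressions defining $\varphi$; and (iii) for the remaining species, non-negativity alone is sufficient.

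For (i) I would argue by contradiction, formalizing the parenthetical aside already given in the text. Suppose $E=0$ at a BMSS. Setting $\dot X=0$ in \eqref{eq1a4} gives $-(b^E+c^E)X+a^E\cdot 0\cdot S_0=0$, and since $b^E+c^E>0$ this forces $X=0$. The conservation law $\oE=E+X$ then yields $\oE=0$, contradicting the standing assumption $\oE>0$. The argument for $F$ is symmetric, using \eqref{eq1a5} and $\oF=F+Y$. Hence $E,F>0$ at every BMSS.

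For (ii) I would simply inspect where divisions actually occur in the construction preceding Result \ref{motif1a}: the only denominators appearing anywhere in the formulas for $S_0$ and $S_1$ are the factors $\eta(\oE-\mu Y)=\eta E$ and $\delta(\oF-Y)=\delta F$. Since $\eta,\delta$ are positive rate constants and $E,F>0$ by (i), these denominators are strictly positive on all of $\Gamma$, so $\varphi$ is well-defined. For (iii) I would observe that $S_0,S_1,X,Y$ appear only as numerators or as additive summands, so the convention of Section \ref{BMSStot}—requiring only non-negativity for a concentration tuple to be biologically meaningful—suffices for them.

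The assertion that the same reasoning transfers to the remaining motifs rests on a uniform template: whenever an enzyme $G$ satisfies a conservation law of the form $\overline{G}=G+\sum_k Z_k$ in which every $Z_k$ is an intermediate complex containing $G$, setting $G=0$ at steady state forces each $Z_k=0$ through the corresponding equation $\dot Z_k=-(b_k+c_k)Z_k+a_k G\cdot(\text{substrate})=0$, whence $\overline{G}=0$, a contradiction. The only real obstacle is bookkeeping: for motifs in which a single enzyme participates in several reactions (e.g.\ Motifs (f), (g), (k), (l), where one enzyme acts on two sites or two substrates), one must check that every intermediate complex involving $G$ is accounted for in the conservation law $\overline{G}=G+\sum_k Z_k$. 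This is, however, automatic from the way the conservation laws are derived from the stoichiometric matrix.
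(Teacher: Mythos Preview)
Your proposal is correct and matches the paper's reasoning exactly: the paper presents this as a remark rather than a proved statement, relying on the parenthetical argument just above (``if $E=0$, then $X=0$ from \eqref{eq1a4}, hence $\oE=0$'') together with the inspection of denominators you describe. Your write-up simply formalizes and slightly expands that sketch, including the template for the other motifs, which the paper leaves implicit.
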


\paragraph{Motif (b).} Consider the system  in Motif (b) where the two catalyzing enzymes are the same. 
The chemical reactions of the system are given by

\centerline{\xymatrix{
S_{0} + E \ar@<0.5ex>[r]^(0.6){a_1} & X  \ar@<0.5ex>[l]^(0.4){b_1} \ar[r]^(0.4){c_1} & S_{1} + E  &
S_{1} + E \ar@<0.5ex>[r]^(0.6){a_2} & Y  \ar@<0.5ex>[l]^(0.4){b_2} \ar[r]^(0.4){c_2} & S_{0} + E
}}
\noindent
The corresponding system of differential equations is:

\vspace{-0.2cm}
\begin{minipage}[h]{\textwidth}
\begin{minipage}[t]{0.45\textwidth}
\begin{align}
\dot{S_{0}} &=  b_{1} X +  c_2 Y - a_{1} E S_{0}   \nonumber \\
\dot{E} &= (b_1 + c_1)X+(b_2 + c_2)Y - a_1 E S_{0} -a_2 E S_{1}  \nonumber
\end{align}
\end{minipage}
\begin{minipage}[t]{0.4\textwidth}
\begin{align}
\dot{S_{1}} &=  c_1 X +b_{2} Y - a_{2} E S_{1}   \label{eq1b2} \\
\dot{X} &=  -(b_1 + c_1)X + a_1 E S_{0} \label{eq1b4} \\
\dot{Y} &= -(b_2 + c_2)Y + a_2 E S_1. \label{eq1b5}\end{align}
\end{minipage}
\end{minipage}

\medskip
 We find that $\dot{E}+\dot{X}+\dot{Y}=0$ and $\dot{S_0}+\dot{S_1}+\dot{X}+\dot{Y}=0$. Hence, the conservation laws are given by  
$$\overline{E} = E + X+Y,\qquad  \overline{S}   =   S_{0}+S_1+X+Y.$$
If the total amounts are given, we need to solve a system of 5 equations in 5 variables consisting of those for the total amounts and, for instance, equations \eqref{eq1b2}-\eqref{eq1b5}. From the latter equations we obtain an equivalent system given by
$$X - \eta E S_{0}=0, \quad Y-\delta E S_1=0,\quad X-\mu Y=0$$
with constants $\eta=a_1/(b_1 + c_1)$, $\delta=a_2/(b_2 + c_2)$ and  $\mu= c_2/ c_1.$ 
Note that if $\oE> 0$, then $E=0$ cannot be a solution of the steady state equations and thus we require $E>0$ at any BMSS.

It follows that $E=\oE-(\mu+1)Y$. For $E>0$ and $Y\geq 0$ we have that  $Y$ must be in $\Gamma= [0,\oE/(1+\mu))$. Further, we  find 
 $$S_0 = \frac{\mu Y}{\eta (\oE-(\mu+1)Y)},\quad S_1=\frac{Y}{\delta (\oE-(\mu+1)Y)}.$$
 These functions are continuous and increasing for  $Y$ in $\Gamma$. In addition, all concentrations $S_0,S_1,X,E,F$ are non-negative as functions of $Y$ if and only if $Y\in \Gamma$. Using the conservation law for $\oS$ we obtain:

\begin{result}[Motif (b)]\label{motif1b} 
Let a one-site modification cycle be given with one enzyme acting as  kinase as well as phosphatase. Further, assume that the total amounts $\oS,\oE$ are positive. Then, the system has a  unique BMSS. 

Specifically, the BMSS satisfies $\oS=\varphi(Y)$ for $Y$ in $\Gamma=[0,\xi)$, $\xi=\oE/(1+\mu)$, where
$$\oS=\varphi(Y)= \frac{\mu Y}{\eta (\oE-(\mu+1)Y)}+\frac{ Y}{\delta (\oE-(\mu+1)Y)}+(1+\mu)Y$$
 is an increasing, rational $\mmC^1$ function which tends to infinity as $Y$ tends to $\xi$ and fulfills $\varphi(0)=0$.
 \end{result}

\paragraph{Motif (c).}
Consider now a one-site modification cycle with two competing kinases.  The chemical reactions of the system are given by  ($k=1,2$):

\centerline{\xymatrix{
S_{0} + E_k \ar@<0.5ex>[r]^(0.6){a^E_k} & X_k  \ar@<0.5ex>[l]^(0.4){b^E_k} \ar[r]^(0.4){c^E_k} & S_{1} + E_k  &
S_{1} + F \ar@<0.5ex>[r]^(0.6){a^F} & Y  \ar@<0.5ex>[l]^(0.4){b^F} \ar[r]^(0.4){c^F} & S_{0} + F
}}
\noindent
The corresponding system of differential equations is:

\vspace{-0.5cm}
\noindent
\begin{minipage}[t]{0.5\textwidth}
{\small\begin{align}
\dot{E_{k}} &= (b^E_k + c^E_k)X_k - a^E_k E_k S_{0} \nonumber \\
\dot{F} &= (b^F + c^F)Y - a^F F S_1 \nonumber  \\
\dot{S_{0}} &=  b^E_{1} X_{1}  + b^E_{2} X_{2}  +  c^F Y  - a^E_{1} E_{1}S_{0}   - a^E_{2} E_{2}S_{0}   \nonumber 
\end{align}}
\end{minipage}
\begin{minipage}[t]{0.48\textwidth}
{\small\begin{align}
\dot{X_k} &=  -(b^E_k + c^E_k)X_k + a^E_k E_k S_{0} \label{eq1c5} \\
\dot{Y} &= -(b^F + c^F)Y + a^F F S_1 \label{eq1c6} \\
\dot{S_{1}} &=  c^E_1 X_1 + c^E_2 X_2     +b^{F} Y - a^{F} F S_{1}\label{eq1c2}  
\end{align}}
\end{minipage}

\vspace{-0.3cm}\noindent
with $k=1,2$. 
It follows that  $\dot{X_k}+\dot{E_k}=0$, $\dot{Y}+\dot{F}=0$ and $\dot{S_0}+\dot{S_1}+\dot{X_1}+\dot{X_2}+\dot{Y}=0$, leading   to the conserved total amounts ($k=1,2$):
$$\overline{E}_k = E_{k} + X_{k},\quad \overline{F} = F + Y,\quad  \overline{S}  =   S_{0}+S_1 + X_{1}+X_2+Y. $$
Therefore, if total amounts are given, the steady states of the system are solutions to a system of 8 equations in 8 variables consisting of the equations for the total amounts (4 equations) and, for instance, equations \eqref{eq1c5} for $k=1,2$, \eqref{eq1c6}  and  \eqref{eq1c2}. From the latter equations we obtain an equivalent system given by
\begin{equation}\label{twokin}
X_k - \eta_k E_k S_{0}=0, \quad Y-\delta F S_1=0,\quad c_1^E X_1+c_2^EX_2 - c^F Y=0
\end{equation}
for $k=1,2$,  with constants $\eta_k=a^E_k/(b^E_k + c^E_k)$ and $\delta=a^F/(b^F + c^F)$. Note that if $\oE_k,\oF> 0$, then neither $E_k=0$ nor $F=0$ are solutions of the steady state equations. Thus, $E_k,F>0$ at any BMSS.

From the total amounts we have that $E_k=\oE_k-X_k$ and $F=\oF-Y$. Hence, for $E,F>0$ and $Y\geq 0$ we must have that $0\leq X_k<\oE_k$, $0\leq Y<\oF$ at BMSS.  We find 
 \begin{equation}\label{S1motifc}
S_1=\frac{Y}{\delta(\oF-Y)},
\end{equation}
 which is increasing in $Y$. Likewise, we obtain the following relation (with non-zero deno\-mi\-nators)
$$\frac{X_1}{\eta_1(\oE_1-X_1)}=S_0=\frac{X_2}{\eta_2(\oE_2-X_2)},$$
and it follows that
$$X_1=\phi_X(X_2)=\frac{\eta_1\oE_1 X_2}{\eta_2(\oE_2-X_2)+\eta_1X_2}.$$
This function is increasing and $\mmC^1$ for $X_2$ in $\Gamma_2= [0,\oE_2)$. Additionally,    $0\leq X_1= \phi_X(X_2)< \oE_1$ for $X_2$ in $\Gamma_2$. Note  that $S_0$ is also expressed as a non-negative,   increasing $\mmC^1$ function of $X_2\in \Gamma_2$.

From the last equation in \eqref{twokin} we obtain
$$Y=\phi_Y(X_2)=(c_1^E/c^F) \phi_X(X_2)+(c_2^E/c^F)X_2$$
which  is  a non-negative, increasing, rational  $\mmC^1$ function of $X_2$ for $X_2\in\Gamma_2$.  

If $\oF$ is in  $\phi_Y(\Gamma_2)$, then the condition $Y=\phi_Y(X_2)<\oF$ sets a more restrictive upper bound to $X_2$ than $\oE_2$ does. The functions $\phi_X,\phi_Y$ are defined for $X_2=\oE_2$, and we have $\phi_Y(\oE_2)  = (c_1^E/c^F) \oE_1+(c_2^E/c^F)\oE_2$. Let 
$\sigma=(c_1^E/c^F) \oE_1+(c_2^E/c^F)\oE_2$, $\xi'=\min(\oF,\sigma)$, $\xi=\phi_Y^{-1}(\xi')$ (which is well-defined) and $\Gamma=[0,\xi)\subseteq \Gamma_2$. Then  for $X_2\in \Gamma$, we have $Y<\oF$.

Since $S_1$ is a non-negative,  increasing  $\mmC^1$  function of $Y$ (for $Y<\oF$, \eqref{S1motifc}), it is also  a non-negative, increasing function of $X_2\in \Gamma$.   In conclusion, all steady state concentrations  are non-negative if and only if $X_2\in \Gamma$. Additionally, either $S_0$ or $S_1$ tend to infinity as $X_2$ approaches $\xi$. Using the conservation law for $\oS$ we obtain:

\begin{result}[Motif (c)]\label{motif1c} 
Let a one-site modification cycle be given with two different competing kinases and one phosphatase. Further, assume that positive total amounts $\oS,\oE_1,\oE_2,\oF$ are given. Then the system has a  unique BMSS. 

Specifically, the BMSS satisfies $\oS=\varphi(X_2)$ for $X_2$ in $\Gamma=[0,\xi)$, where
$$\oS=\varphi(X_2)=\frac{X_2}{\eta_2(\oE_2-X_2)} + \frac{\phi_Y(X_2)}{\delta(\oF-\phi_Y(X_2))}+\phi_X(X_2)+X_2+\phi_Y(X_2)$$
 is an increasing, rational $\mmC^1$ function which tends to infinity as $X_2$ tends to $\xi$ and fulfills $\varphi(0)=0$.
 \end{result}

\paragraph{Motif  (d).}
Consider a one-site modification cycle with two competing kinases and two competing phosphatases. The chemical reactions of the system are given by

\centerline{\xymatrix{
S_{0} + E_1 \ar@<0.5ex>[r]^(0.6){a^E_1} & X_1  \ar@<0.5ex>[l]^(0.4){b^E_1} \ar[r]^(0.4){c^E_1} & S_{1} + E_1  &
S_{1} + F_1 \ar@<0.5ex>[r]^(0.6){a^F_1} & Y_1  \ar@<0.5ex>[l]^(0.4){b^F_1} \ar[r]^(0.4){c^F_1} & S_{0} + F_1 \\
S_{0} + E_2 \ar@<0.5ex>[r]^(0.6){a^E_2} & X_2  \ar@<0.5ex>[l]^(0.4){b^E_2} \ar[r]^(0.4){c^E_2} & S_{1} + E_2  &
S_{1} + F_2 \ar@<0.5ex>[r]^(0.6){a^F_2} & Y_2  \ar@<0.5ex>[l]^(0.4){b^F_2} \ar[r]^(0.4){c^F_2} & S_{0} + F_2
}}
\noindent
The corresponding system of differential equations is:

\vspace{-0.5cm}
\noindent
\begin{minipage}[t]{0.65\textwidth}
{\small\begin{align}
\dot{X_k} &=  -(b^E_k + c^E_k)X_k + a^E_k E_k S_{0} \label{eq1d5} \\
\dot{Y_k} &= -(b^F_k + c^F_k)Y_k + a^F_k F_k S_1 \label{eq1d6} \\
\dot{S_{0}} &=  b^E_{1} X_{1}   +  c^F_{1} Y_{1} - a^E_{1} E_{1}S_{0} + b^E_{2} X_{2} +  c^F_{2} Y_{2} - a^E_{2} E_{2}S_{0}   \nonumber \\
\dot{S_{1}} &=  c^E_1 X_1 +b^{F}_1 Y_1 - a^{F}_1 F_1 S_{1}+ c^E_2 X_2 +b^{F}_2 Y_2 - a^{F}_2 F_2 S_{1}   \label{eq1d2}  
\end{align}}
\end{minipage}
\hspace{0.2cm}
\begin{minipage}[t]{0.3\textwidth}
{\small \begin{align}
\dot{E_{k}} &= (b^E_k + c^E_k)X_k - a^E_k E_k S_{0} \nonumber \\
\dot{F_{k}} &= (b^F_k + c^F_k)Y_k - a^F_k F_k S_1 \nonumber  
\end{align}}
\end{minipage}

\vspace{-0.5cm}\noindent
with $k=1,2$. 
It follows that  $\dot{X_k}+\dot{E_k}=0$, $\dot{Y_k}+\dot{F_k}=0$ and $\dot{S_0}+\dot{S_1}+\dot{X_1}+\dot{X_2}+\dot{Y_1}+\dot{Y_2}=0$, leading to the following conserved total amounts ($k=1,2$):
$$\overline{E}_k = E_{k} + X_{k},\quad \overline{F}_k = F_{k} + Y_k,\quad  \overline{S}  =   S_{0}+S_1 + X_{1}+X_2+Y_1+Y_2. $$
Therefore, if total amounts are given, we have to solve a system of 10 equations in 10 variables consisting of the equations for the total amounts (5 equations) and, for instance, equations \eqref{eq1d5}, \eqref{eq1d6} for $k=1,2$ and  \eqref{eq1d2}.  From the latter equations we obtain an equivalent system given by
\begin{equation}\label{twokinphos}
X_k - \eta_k E_k S_{0}=0, \quad Y_k-\delta_k F_k S_1=0,\quad c_1^E X_1+c_2^EX_2 - c_1^F Y_1 - c_2^FY_F=0
\end{equation}
for $k=1,2$ with constants $\eta_k=a^E_k/(b^E_k + c^E_k)$ and $\delta_k=a^F_k/(b^F_k + c^F_k)$.

If $\oE_k,\oF_k> 0$, then $E_k,F_k\neq 0$ at steady state. As for the previous motif, we have $E_k=\oE_k-X_k$ and $F_k=\oF_k-Y_k$; hence  $0\leq X_k<\oE_k$ and $0\leq Y_k<\oF_k$ is required for any BMSS. The concentration $S_0$ can be  expressed in two different ways as increasing $\mmC^1$ functions: As a function of $X_1$ and  as a function of  $X_2$. When these two expressions are equated we obtain the relation (similar to the relation obtained for the previous motif)
$$X_1=\phi_X(X_2)=\frac{\eta_1\oE_1 X_2}{\eta_2(\oE_2-X_2)+\eta_1X_2}.$$
It is a non-negative,  increasing $\mmC^1$ function for $X_2\in [0,\oE_2)$ such that $E_1=\oE_1-\phi_X(X_2)$ also is  a positive function of $X_2\in [0,\oE_2)$.  Note that $\phi_X(\oE_2)=\oE_1$ is well-defined. 

Similarly, $S_1$ can be expressed as  increasing $\mmC^1$ functions of $Y_1$ and of $Y_2$, respectively, which provide the relation
$$Y_1=\phi_Y(Y_2)=\frac{\delta_1\oF_1 Y_2}{\delta_2(\oF_2-Y_2)+\delta_1Y_2}.$$
It is a non-negative, increasing $\mmC^1$ function for $Y_2\in [0,\oF_2)$ and  $F_1=\oF_1-\phi_Y(Y_2)$ is a positive function of $Y_2\in [0,\oF_2)$.

Finally, the last relation in \eqref{twokinphos} gives
$$  c_1^E \phi_X(X_2)+c_2^EX_2 =   c_1^F \phi_Y(Y_2)+ c_2^FY_2. $$
The left  side is an increasing $\mmC^1$ function in $X_2$, the right side an increasing $\mmC^1$ function in $Y_2$ which tends to infinity as $Y_2$ tends to infinity.  Hence,  there exists an increasing $\mmC^1$ function $g(X_2)=Y_2$ defined on $X_2\in [0,\oE_2)$ relating $X_2$ to $Y_2$.

In summary, the concentrations $E_1,E_2,X_1,S_0$ are non-negative  functions of $X_2$ if and only if $X_2\in [0,\oE_2)$.  The concentrations $F_1,F_2,Y_1,S_1$ are non-negative if and only if $Y_2\in [0,\oF_2)$. Hence, to ensure that all concentrations are non-negative, we require $Y_2=g(X_2)<\oF_2$. Since $g$ is increasing, it is bounded above by $g(\oE_2)$ (well-defined).  Therefore, let $\Gamma=[0,\oE_2)$ if $g(\oE_2)\leq \oF_2$ and $\Gamma=[0,g^{-1}(\oF_2))$ otherwise. Using the conservation law for $\oS$ we obtain:

\begin{result}[Motif (d)]\label{motif1d} 
Let a one-site modification cycle with two competing kinases and two competing phosphatases be given. Further, assume that  the total amounts $\oS,\oE_1,\oE_2,\oF_1,\oF_2$ are positive. Then, the system has a  unique BMSS. 

Specifically, the BMSS satisfies $\oS=\varphi(X_2)$ for $X_2$ in $\Gamma=[0,\xi)$, where
$$\oS=\varphi(X_2)=\frac{X_2}{\eta_2(\oE_2-X_2)} + \frac{g(X_2)}{\delta_2(\oF_2 g(X_2))}+\phi_X(X_2)+X_2+\phi_Y^1(g(X_2))+g(X_2)$$
 is an increasing $\mmC^1$ function which tends to infinity as $X_2$ tends to $\xi$ and fulfills $\varphi(0)=0$.
 \end{result}

The function $g$ is not rational, hence neither $\varphi$ is rational.

\subsection{Two-site phosphorylation cycles}
\label{motiftwosite}

\paragraph{Motif  (e).}
First we  consider a two-site phosphorylation system in which modifications are carried out by different kinases and phosphatases for each phosphoform. For simplicity, we assume that both phosphorylation and dephosphorylation occur sequentially.
The chemical reactions of the system are:

\centerline{
\xymatrix{
S_{0} + E_1 \ar@<0.5ex>[r]^(0.6){a^E_1} & X_1  \ar@<0.5ex>[l]^(0.4){b^E_1} \ar[r]^(0.4){c^E_1} & S_{1} + E_1 & 
S_{1} + E_2 \ar@<0.5ex>[r]^(0.6){a^E_2} & X_2  \ar@<0.5ex>[l]^(0.4){b^E_2} \ar[r]^(0.4){c^E_2} & S_{2} + E_2 \\
S_{1} + F_1 \ar@<0.5ex>[r]^(0.6){a^F_1} & Y_1  \ar@<0.5ex>[l]^(0.4){b^F_1} \ar[r]^(0.4){c^F_1} & S_{0} + F_1 &
S_{2} + F_2 \ar@<0.5ex>[r]^(0.6){a^F_2} & Y_2  \ar@<0.5ex>[l]^(0.4){b^F_2} \ar[r]^(0.4){c^F_2} & S_{1} + F_2
}}
\noindent
The differential equations describing the  system are:

\vspace{-0.5cm}
\noindent
\begin{minipage}[t]{0.3\textwidth}
{\small \begin{align}
   \dot{X_k} &=   a^E_k E_k S_{k-1}-(b^E_k + c^E_k)X_k \nonumber  \\
   \dot{Y_k} &=  a^F_k F_k S_k-(b^F_k + c^F_k)Y_k , \nonumber  \\
\dot{S_{0}} &=  b^E_{1} X_{1} +  c^F_{1} Y_{1}  - a^E_{1} E_{1}S_{0}   \nonumber 
\end{align}}
\end{minipage}
\hspace{0.2cm}
\begin{minipage}[t]{0.65\textwidth}
{\small \begin{align}
\dot{E_{k}} &= (b^E_k + c^E_k)X_k - a^E_k E_k S_{k-1}  \label{eq1e3}  \\
\dot{F_{k}} &= (b^F_k + c^F_k)Y_k - a^F_k F_k S_k  \label{eq1e4}  \\
\dot{S_{2}} &=  c^E_2 X_2  +b^F_2 Y_2  - a^F_2 F_2 S_2  \label{eq1e2}  \\
\dot{S_{1}} &=  c^E_1 X_1 +b^E_{2} X_{2}   +b^F_1 Y_1 +  c^F_{2} Y_{2}  - (a^F_1 F_1+ a^E_{2} E_{2})S_{1}  \label{eq1e5} 
\end{align}}
\end{minipage}

\vspace{-0.5cm}
\noindent with $k=1,2$. We have $\dot{X_k}+\dot{E_k}=0$, $\dot{Y_k}+\dot{F_k}=0$ and $\dot{S_0}+\dot{S_1}+\dot{S_2}+\dot{X_1}+\dot{X_2}+\dot{Y_1}+\dot{Y_2}=0$, which  lead to  the following conserved total amounts ($k=1,2$):
\begin{equation}\label{totalamounts3}
\overline{E}_k = E_{k} + X_{k},\quad \overline{F}_k = F_{k} + Y_k,\quad  \overline{S}  =   S_{0}+S_1+S_2 + X_{1}+X_2+Y_1+Y_2. \end{equation}
Therefore, if total amounts are given, we have to solve a system of 11 equations in 11 variables consisting of those in \eqref{totalamounts3} and, for instance, equations \eqref{eq1e3}-\eqref{eq1e5}.
From the latter equations, we obtain 
\begin{equation}\label{twocycle_dif}
X_k = \eta_k E_k S_{k-1}, \quad Y_k=\delta_k F_k S_k,\qquad k=1,2
\end{equation}
with constants $\eta_k=a^E_k/(b^E_k + c^E_k)$ and $\delta_k=a^F_k/(b^F_k + c^F_k)$. Equation \eqref{eq1e4}  and \eqref{eq1e2} for $k=2$ give $c^E_2 X_2 - c^F_{2} Y_{2}=0.$ This relation together with equation \eqref{eq1e5}, \eqref{eq1e3} for $k=2$, and \eqref{eq1e4} for $k=1$ give
$c^E_1 X_1 - c^F_{1} Y_{1}=0.$ Therefore, we have that
$$X_k = \mu_k Y_k,\qquad \mu_k= c^F_{k}/ c^E_{k}. $$

Let $\Delta_k=\mu_k\oF_k-\oE_k$,  $\xi_k = \min(\oF_k,\oE_k/\mu_k)$  and $\Gamma_k=[0,\xi_k)$.
Note that if $\oE_k,\oF_k > 0$, then at steady state $E_k,F_k \neq 0$. Since $E_k=\oE_k-\mu_k Y_k$ and $F_k=\oF_k-Y_k$ at steady state, any BMSS must satisfy $Y_k\in \Gamma_k$ to ensure non-negativity of $X_k, Y_k$ and positivity of $E_k,F_k$ as functions of $Y_k$ for each $k$.

For fixed values of $S_2,X_2,Y_2$, the steady state values of the remaining variables satisfy the steady state equations of  Motif (a) (a one-site phosphorylation cycle) with species $S_0,S_1,X_1,Y_1,E_1,F_1$ and total amounts $\oE_1$, $\oF_1$ and $\oS - S_2-X_2-Y_2$. Therefore, using Result \ref{motif1a}, the BMSSs  of the system satisfy 
$$\oS = \varphi_1(Y_1)+(S_2+X_2+Y_2)$$
with $Y_1\in \Gamma_1$. Here $\varphi_1$ denotes the function $\varphi$ in Result \ref{motif1a}.

Using the second equality in \eqref{twocycle_dif} for $k=2$ together with the conservation law for $\oF_2$, we obtain
\begin{equation}\label{s2motife}
\varphi_2(Y_2)=S_2+X_2+Y_2=\frac{Y_2}{\delta_2(\oF_2-Y_2)}+ \mu_2 Y_2  + Y_2,
\end{equation}
which is a non-negative, increasing $\mmC^1$ function  of $Y_2\in \Gamma_2$. 
Consequently, the BMSSs of the system satisfy the relation
\begin{equation}\label{stot1} 
\oS = \varphi_1(Y_1) + \varphi_2(Y_2),
\end{equation}
for $Y_1\in \Gamma_1$ and $Y_2\in \Gamma_2$. The right hand side is an increasing function in both variables. 

From  the  equation $X_2=\eta_2 E_2 S_1$ together with  $S_1$ expressed as a function of $Y_1$ (similar to that of $S_2$ in equation \eqref{s2motife}) we obtain
$$X_2= \eta_2 (\oE_2-X_2) S_1= \frac{\eta_2 (\oE_2-X_2)  Y_1}{\delta_1(\oF_1 - Y_1) }.$$ 
Rewriting this equation yields 
\begin{equation}\label{f1}
X_2 =\frac{\eta_2 \oE_2 Y_1}{\delta_1 (\oF_1 - Y_1) + \eta_2 Y_1}, \quad \textrm{and hence} \quad Y_2 =f(Y_1)=\frac{\eta_2 \oE_2 Y_1}{\mu_2\delta_1 (\oF_1 - Y_1) + \mu_2\eta_2 Y_1}. 
\end{equation}
The function $f$ is an increasing function, which is non-negative and $\mmC^1$ for $Y_1\in \Gamma_1$. Additionally, $f$ is defined for $Y_1=\oF_1$ with $f(\oF_1)=\oE_2/\mu_2$.

By substitution of $f(Y_1)$ into \eqref{stot1}, the BMSSs of the system satisfy  
$$\oS = \varphi(Y_1)= \varphi_1(Y_1) + \varphi_2(f(Y_1))$$
for   $Y_1\in \Gamma_1$ and $f(Y_1)\in \Gamma_2$. Since $f$ is increasing, continuous and $f(0)=0$, this condition is equivalent to  $Y_1\in \Gamma=[0,\xi)$ with $\xi=\min(\xi_1, f^{-1}(\xi_2))$ (note that $f^{-1}(\xi_2)$ is well-defined). The function $\varphi$ is a rational function, which is increasing and $\mmC^1$ in $\Gamma$ and either $\varphi_1$ or $\varphi_2\circ f$ tends to infinity as $Y_1$ tends to $\xi$. 
Additionally,  the BMSS concentrations of all other species derived from the formulas above are non-negative functions of $Y_1$ if and only if  $Y_1\in \Gamma$. 

Using the conservation law for $\oS$ we obtain:
 
 \begin{result}[Motif (e)]\label{result_twocycle_dif} 
Let a two-site phosphorylation cycle be given with different kinases and phosphatases. Further assume that the total amounts $\oE_k,\oF_k,\oS$, $k=1,2$ are positive.   Then, the system has a  unique BMSS. 

Specifically, the BMSS satisfies $\oS=\varphi(Y_1)$ for $Y_1$ in $\Gamma=[0,\xi)$, where
$$\oS = \varphi(Y_1)= \varphi_1(Y_1) + \varphi_2(f(Y_1))$$
 is an increasing rational $\mmC^1$ function, which tends to infinity as $Y_1$ tends to $\xi$ and fulfills $\varphi(0)=0$.
 \end{result}

\paragraph{Motif  (f).}
Next, we consider    a two-site phosphorylation system where phosphorylation is catalyzed by the same kinase at both sites  but dephosphorylation is catalyzed by different phosphatases. Again, we assume sequential (de)phosphorylation.  The chemical reactions of the system are:

\centerline{
\xymatrix{
S_{0} + E \ar@<0.5ex>[r]^(0.6){a^E_1} & X_1  \ar@<0.5ex>[l]^(0.4){b^E_1} \ar[r]^(0.4){c^E_1} & S_{1} + E & 
S_{1} + E \ar@<0.5ex>[r]^(0.6){a^E_2} & X_2  \ar@<0.5ex>[l]^(0.4){b^E_2} \ar[r]^(0.4){c^E_2} & S_{2} + E \\
S_{1} + F_1 \ar@<0.5ex>[r]^(0.6){a^F_1} & Y_1  \ar@<0.5ex>[l]^(0.4){b^F_1} \ar[r]^(0.4){c^F_1} & S_{0} + F_1 &
S_{2} + F_2 \ar@<0.5ex>[r]^(0.6){a^F_2} & Y_2  \ar@<0.5ex>[l]^(0.4){b^F_2} \ar[r]^(0.4){c^F_2} & S_{1} + F_2
}}
\noindent
The differential equations describing the system  are the following:

\vspace{-0.5cm}
\noindent
\hspace{-0.4cm}
\begin{minipage}[t]{0.63\textwidth}
{\small\begin{align}
\dot{S_{0}} &=  b^E_{1} X_{1} +  c^F_{1} Y_{1}  - a^E_{1} E S_{0}   \nonumber   \\
\dot{Y_k} &=  -(b^F_k + c^F_k)Y_k +a^F_k F_k S_k  \nonumber  \\
\dot{E} &= (b^E_1 + c^E_1)X_1+(b^E_2 + c^E_2)X_2 - (a^E_1 S_{0} +a^E_2 S_{1})E\nonumber \\
\dot{S_{1}} &=  c^E_1 X_1 +b^E_{2} X_{2}+b^F_1 Y_1 +  c^F_{2} Y_{2} - (a^F_1 F_1+ a^E_{2} E) S_{1}    \label{eq1h4}
\end{align}}
\end{minipage}
\hspace{-0.7cm}
\begin{minipage}[t]{0.4\textwidth}
{\small \begin{align}
\dot{S_{2}} &=  c^E_2 X_2  +b^F_2 Y_2  - a^F_2 F_2 S_2  \label{eq1h3}  \\
\dot{X_k} &=  -(b^E_k + c^E_k)X_k  +a^E_k E S_{k-1} \label{eq1h1}  \\
\dot{F_{k}} &= (b^F_k + c^F_k)Y_k - a^F_k F_k S_k  \label{eq1h2} 
\end{align}}
\end{minipage}

\vspace{-0.3cm}
\noindent 
with $k=1,2$. The conservation laws are given by  
\begin{align}\label{totalamounts2}
\overline{E} &= E + X_{1}+X_2,\quad \overline{F}_k = F_{k} + Y_k,\quad  \overline{S}   =   S_{0}+S_1+S_2 + X_{1}+X_2+Y_1+Y_2, \nonumber \end{align}
$k=1,2$. Therefore, if total amounts are given, the system to be solved consists of 10 equations in 10 variables which are the equations for the total amounts and, for instance, equations \eqref{eq1h4}-\eqref{eq1h2} for $k=1,2$.
From \eqref{eq1h1} and \eqref{eq1h2} we obtain 
$$X_k = \eta_k E S_{k-1}, \quad Y_k=\delta_k F_k S_k,\qquad k=1,2$$
with constants $\eta_k=a^E_k/(b^E_k + c^E_k)$ and $\delta_k=a^F_k/(b^F_k + c^F_k)$.
Proceeding  as in the previous system, we find that 
$$X_k = \mu_k Y_k,\qquad \mu_k= c^F_{k}/ c^E_{k}. $$

Let  $\Delta_k=\mu_k\oF_k-\oE$,  $\xi_k = \min(\oF_k,\oE/\mu_k)$  and $\Gamma_k=[0,\xi_k)$. If $\oE,\oF_k > 0$, then at steady state $E,F_k \neq  0$. From $E=\oE - X_1-X_2$ and $F_k= \oF_k - Y_k$ we see that positivity of $E,F_k$ and non-negativity of $Y_k$ requires that (at least)  $Y_k\in \Gamma_k$.  If $Y_k\in\Gamma_k$, then $X_k$ is also a non-negative, increasing function of $Y_k$.

The situation resembles the situation of the previous system where catalysis is mediated by two different kinases. 
In both systems we  have $X_2=\mu_2Y_2$ and $S_2=\frac{Y_2}{\delta_2(\oF_2-Y_2)}$ and so  the concentration of $S_2$ is a  function of $Y_2$. It follows that 
$$\varphi_2(Y_2)=S_2+X_2+Y_2=\frac{Y_2}{\delta_2(\oF_2-Y_2)}+\mu_2Y_2+Y_2$$ 
is a non-negative, increasing continuous function of $Y_2\in \Gamma_2$.

For a fixed value of $Y_2$, the steady state values of the remaining variables (except $X_2$ and $S_2$) satisfy the steady state equations of a  one-site phosphorylation cycle  with species $S_0,S_1,X_1,Y_1,E,F_1$ and total amounts
$\oE-X_2=\oE-\mu_2Y_2$, $\oF_1$ and $\oS - \varphi_2(Y_2)$. Using Result \ref{motif1a} with  $\varphi_1(\cdot,Y_2)$ denoting $\varphi$ for a fixed $Y_2$, the BMSSs satisfy the relation
$$\oS = \varphi_1(Y_1,Y_2) + \varphi_2(Y_2)$$
 for any $Y_2\in \Gamma_2$ and $0\leq Y_1 < \min(\oF_1,(\oE-\mu_2Y_2)/\mu_1)$. Under these conditions, all concentrations of the other chemical species are non-negative functions of $Y_1, Y_2$. Note that the total amount of enzyme $\oE-\mu_2Y_2$ is part of the function $\varphi_1$ and hence $\varphi_1$ depends on $Y_2$. Indeed, we have
$$S_0=\frac{\mu_1 Y_1}{\eta_1(\oE -\mu_1 Y_1-\mu_2Y_2)}.$$
The function $\varphi_1$ is increasing in $Y_1$ and in $Y_2$.

The equation $X_2=\eta_2 E S_1$ combined with  $S_1$ expressed as a function of $Y_1$  provide (after isolation of $X_2$) the following relation at steady state
 $$Y_2=f(Y_1)=X_2/\mu_2=\frac{\eta_2(\oE -\mu_1 Y_1) Y_1}{\mu_2\delta_1( \oF_1 -Y_1) + \mu_2\eta_2 Y_1}. $$
This function resembles that in \eqref{f1} except from the quadratic term in the numerator which gives $f$ a very different analytic form from that in \eqref{f1}.

The function $f$ is $\mmC^1$ and  takes non-negative values for $Y_1<\xi_1$. Therefore, a BMSS satisfies
\begin{equation}\label{stot2}\oS = \varphi(Y_1)= \varphi_1(Y_1,f(Y_1)) + \varphi_2(f(Y_1)),\end{equation}
for $Y_1\in \Gamma_1$,  such that $f(Y_1)\in \Gamma_2$.  Since both $\varphi_1,\varphi_2$ are increasing functions, the behavior of $\varphi$ needs to be understood from the behavior of $f$. 

If we let 
$$\Gamma=\{Y_1\in \Gamma_1| f(Y_1)\in \Gamma_2 \},$$
then the concentrations of all the chemical species  are non-negative when expressed in term of  $Y_1$, if and only if $Y_1\in \Gamma$. If $Y_1=0$, then $f(0)=0$ and it follows that $\varphi(0)=0$; in particular, it follows that $0\in \Gamma$. 
Note that $\varphi$ is $\mmC^1$ in $\Gamma$; however $\Gamma$ might not be a connected interval as we will see below.

The function $f$ is $\mmC^1$ and non-negative for $Y_1\in \Gamma_1$. Depending on whether it is monotone or not, different forms of $\varphi$ are expected. These behaviors can be found by computing the derivative of $f$ (see Figure~\ref{ffig}). Let
\begin{align*}
\Lambda & =(1+\eta_2/\delta_1)\mu_1 \oF_1 - \oE.
\end{align*}
Observe that if $\Lambda\leq 0$ then $\Delta_1\leq 0$.

\begin{figure}[t]
\includegraphics{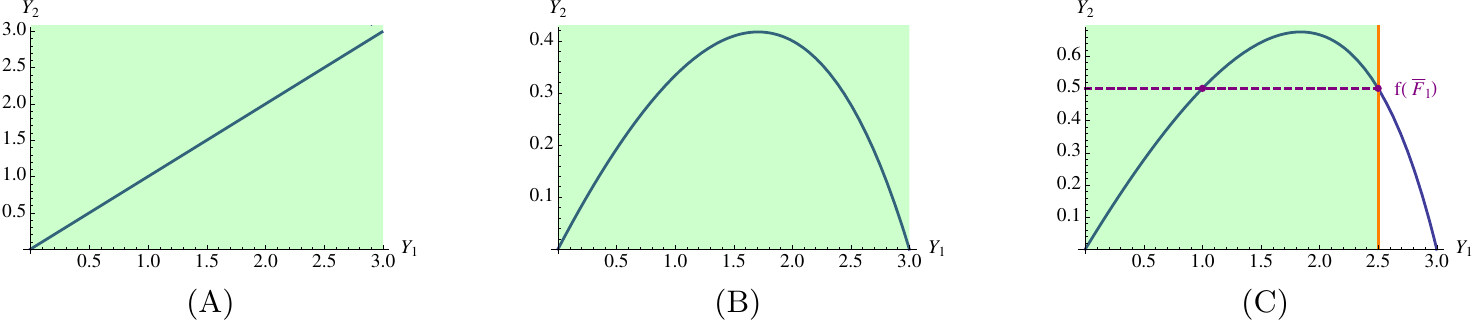}
\caption{Different behaviors of the function $f(Y_1)=Y_2$.  The region $\Gamma_1$ is marked in green. (A) corresponds to Proposition~\ref{motif1f1} (i) with $Y_1\in \Gamma_1=[0,\oF_1)$. (B) and (C) differ in whether (B) $Y_1\in \Gamma_1=[0,\oE/\mu_1)$  or (C) $Y_1\in \Gamma_1=[0,\oF_1)$; in both cases $f(\oE/\mu_1)=0$. If $\oF_2>f(\alpha)$ (the top point of the curve) then Proposition~\ref{motif1f1} (ii)(c) applies. If $\oF_2\leq f(\alpha)$ then Proposition~\ref{motif1f1} (ii)(b) applies for (B). In (C), if $\oF_2\leq f(\oF_1)$ (the dashed line) then Proposition~\ref{motif1f1} (ii)(a) applies, whereas Proposition~\ref{motif1f1} (ii)(b) applies if $f(\oF_1)<\oF_2\leq f(\alpha)$. }\label{ffig}
\end{figure}
 
\begin{proposition}\label{motif1f1} 
The following statements hold:
\begin{enumerate}[(i)]
\item If $\Lambda \leq 0$, then $f$ is an increasing $\mmC^1$ function of $Y_1\in \Gamma_1= [0,\oF_1)$. Further, we have  $\Gamma=[0,\min(\oF_1,f^{-1}(\xi_2))$ (with $f^{-1}(\xi_2)$ set to $+\infty$ if not defined).
\item If $\Lambda>0$, then there is $\alpha\in \Gamma_1=[0,\xi_1)$ such that $f'(\alpha)=0$, $f$ is increasing in $[0,\alpha)$ and decreasing in $(\alpha,\xi_1)$.
In this case, we obtain:
\begin{enumerate}[(a)]
\item  If $\oE-\mu_1\oF_1-\mu_2\oF_2\geq 0$,  then there is $\alpha_1\leq\alpha$ such that $f$ is increasing and $\mmC^1$  in  $\Gamma=[0,\alpha_1)$ and $f(\alpha_1)=\oF_2$. 
\item If $\oE-\mu_1\oF_1-\mu_2\oF_2<0$ and $\oF_2\leq f(\alpha)$, then
$\Gamma= [0,\alpha_1)\cup (\alpha_2,\xi_1)$ with $\alpha_1\leq\alpha\leq\alpha_2$ and $f(\alpha_1)=f(\alpha_2)=\oF_2$. Hence,  $f$ is increasing in $[0,\alpha_1)$ and decreasing in $(\alpha_2,\xi_1)$.
\item If $\oF_2>f(\alpha)$, then $\Gamma=[0,\xi_1)$.
\end{enumerate}
\end{enumerate}
\end{proposition}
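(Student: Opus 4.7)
The plan is to reduce the monotonicity analysis of $f$ to the sign of a single quadratic. Writing $f=N/D$ with $N(Y_1)=\eta_2(\oE-\mu_1Y_1)Y_1$ and $D(Y_1)=\mu_2[\delta_1(\oF_1-Y_1)+\eta_2 Y_1]$, the denominator stays positive on $[0,\oF_1]$, so the sign of $f'$ matches that of
\[
q(Y_1):=\delta_1\oE\oF_1-2\mu_1\delta_1\oF_1 Y_1-\mu_1(\eta_2-\delta_1)Y_1^2,
\]
obtained from $N'D-ND'$ after removing the positive factor $\eta_2\mu_2$. Substitution gives $q(0)=\delta_1\oE\oF_1>0$ and $q(\oF_1)=-\delta_1\oF_1\Lambda$, while $q'(Y_1)=-2\mu_1[\delta_1\oF_1+(\eta_2-\delta_1)Y_1]$; the bracket is linear in $Y_1$ with positive endpoint values $\delta_1\oF_1$ and $\eta_2\oF_1$, hence positive on $[0,\oF_1]$. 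Therefore $q$ is strictly decreasing on $[0,\oF_1]$ and has at most one root there.

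For part (i), $\Lambda\leq 0$ forces $\oE>\mu_1\oF_1$ (so $\xi_1=\oF_1$ and $\Gamma_1=[0,\oF_1)$) and $q(\oF_1)\geq 0$. Strict decrease of $q$ gives $q>0$ on $[0,\oF_1)$, so $f'>0$ there and $f$ is strictly increasing and $\mmC^1$ on $\Gamma_1$. By monotonicity, $f(Y_1)\in[0,\xi_2)$ reduces to $Y_1<f^{-1}(\xi_2)$ (when defined), giving $\Gamma=[0,\min(\oF_1,f^{-1}(\xi_2)))$. For part (ii), $\Lambda>0$ gives $q(\oF_1)<0<q(0)$, so by the intermediate value theorem there is a unique $\alpha\in(0,\oF_1)$ with $q(\alpha)=0$; strict decrease of $q$ makes it positive on $[0,\alpha)$ and negative on $(\alpha,\oF_1)$, so $f$ increases then decreases. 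When $\xi_1=\oE/\mu_1\leq\oF_1$, the identities $f(0)=f(\oE/\mu_1)=0$ together with $f>0$ in between force $\alpha<\oE/\mu_1$ by Rolle's theorem; either way $\alpha\in\Gamma_1$.

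For the three subcases I would compare $\oF_2$ with $f(\xi_1)$ and $f(\alpha)$, using $f(\oF_1)=(\oE-\mu_1\oF_1)/\mu_2$ when $\xi_1=\oF_1$ and $f(\xi_1)=0$ when $\xi_1=\oE/\mu_1$. In (a) the hypothesis forces $\xi_1=\oF_1$ and $f(\xi_1)\geq\oF_2$, so $f\geq\oF_2$ on the decreasing branch; a unique $\alpha_1\in(0,\alpha]$ solves $f(\alpha_1)=\oF_2$ by the intermediate value theorem on the increasing branch, yielding $\Gamma=[0,\alpha_1)$. In (b) both regimes for $\xi_1$ give $f(\xi_1)<\oF_2\leq f(\alpha)$, so each monotonic branch meets $\oF_2$ exactly once, producing $\alpha_1\leq\alpha\leq\alpha_2$ and $\Gamma=[0,\alpha_1)\cup(\alpha_2,\xi_1)$. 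In (c), $\oF_2>f(\alpha)=\sup_{\Gamma_1} f$ makes the constraint $f(Y_1)<\oF_2$ vacuous, so $\Gamma=\Gamma_1=[0,\xi_1)$. The main obstacle is keeping the two regimes for $\xi_1$ uniform in part (ii); handling the endpoint $\oE/\mu_1$ via Rolle's theorem, rather than substituting into the closed-form expression for $f$ (which formally gives a negative value when $\oF_1>\oE/\mu_1$), is what makes the argument regime-independent.
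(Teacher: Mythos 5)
Your reduction to the quadratic $q$ is exactly the paper's route (the paper writes $f'$ with numerator $\eta_2(\delta_1\oE\oF_1-2\delta_1\mu_1\oF_1 y+(\delta_1-\eta_2)\mu_1 y^2)$ and positive denominator), and your two micro-arguments are if anything cleaner: observing that $q'=-2\mu_1[\delta_1\oF_1+(\eta_2-\delta_1)Y_1]$ is a negative multiple of the (positive) denominator, so $q$ is strictly decreasing and has at most one root, replaces the paper's ad hoc lower bound for case (i) and its separate sign checks of $f'(\oF_1)$ and $f'(\oE/\mu_1)$ for case (ii); and your Rolle argument from $f(0)=f(\oE/\mu_1)=0$ is a legitimate substitute for the paper's direct verification that $f'(\oE/\mu_1)<0$. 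The identity $q(\oF_1)=-\delta_1\oF_1\Lambda$ is also what the paper uses. All of this part of your proof is correct.

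There is, however, one genuine gap in the subcase analysis (ii)(a)--(c): you consistently treat the defining condition of $\Gamma$ as $f(Y_1)<\oF_2$, but by definition $\Gamma=\{Y_1\in\Gamma_1 : f(Y_1)\in\Gamma_2\}$ with $\Gamma_2=[0,\xi_2)$ and $\xi_2=\min(\oF_2,\oE/\mu_2)$. If $\oE/\mu_2<\oF_2$ your argument says nothing, and for instance your conclusion in (c) that the constraint is ``vacuous'' would be false if $f(\alpha)\geq \oE/\mu_2$. The missing ingredient, which the paper supplies explicitly, is the a priori bound $\mu_2 f(Y_1)<\oE$ on all of $\Gamma_1$ (immediate from $\mu_2 f(Y_1)=\eta_2(\oE-\mu_1Y_1)Y_1/(\delta_1(\oF_1-Y_1)+\eta_2Y_1)\leq \oE-\mu_1Y_1$, or equivalently from $E=\oE-\mu_1Y_1-\mu_2Y_2>0$ at steady state). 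With $f(\alpha)<\oE/\mu_2$ in hand, the hypothesis $\oF_2\leq f(\alpha)$ of (b) forces $\xi_2=\oF_2$, and in (c) one gets $\xi_2>f(\alpha)=\sup_{\Gamma_1}f$ regardless of which term achieves the minimum; in (a) you already have $\oE\geq\mu_1\oF_1+\mu_2\oF_2>\mu_2\oF_2$, so $\xi_2=\oF_2$ there. Adding this one observation closes the argument; everything else goes through as you wrote it.
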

All possibilities are covered in the proposition. Indeed, the condition in (ii)(c) implies $\oE-\mu_1\oF_1-\mu_2\oF_2<0$: If not, according to (ii)(a) there would be $\alpha_1\leq\alpha$ such that $\oF_2=f(\alpha_1)\leq f(\alpha)<\oF_2$, which is a contradiction.

In the cases (i) and (ii)(a), $f$ is  an increasing $\mmC^1$ function in a connected interval $\Gamma=[0,\xi)$. Hence, by composition of functions, $\varphi$ in \eqref{stot2} is also an increasing $\mmC^1$ function of $Y_1\in \Gamma$ with $\varphi(0)=0$. Additionally, in both cases $\varphi$ tends to infinity as  $Y_1$ tends to $\xi$ since  either $Y_2=f(Y_1)$ tends to $\oF_2$ or $Y_1$ to $\oF_1$. We conclude that there is exactly one BMSS in each of these cases.

The cases (ii)(b) and (ii)(c) require further analysis. In case (ii)(b), let
$$\Gamma'=[0,\alpha_1), \qquad \Gamma''=(\alpha_2,\xi_1).$$ The values $\alpha_1,\alpha_2$ correspond to the values of $Y_1$ for which $f(Y_1)=\oF_2$ and hence they are the zeros of the denominator of $S_2$:
$$S_2=\frac{Y_2}{\delta_2(\oF_2-Y_2)}=\frac{Y_2}{\delta_2(\oF_2-f(Y_1))}.$$
 In $\Gamma'$, $f$ is  increasing  and therefore $\varphi$ is also increasing. It  tends to infinity as $Y_1$ tends to $\alpha_1$. Hence, for any value of $\oS$, there is a BMSS corresponding to a $Y_1\in \Gamma'$.

In $\Gamma''$, $f$ is a decreasing function and hence it is uncertain when/whether $\varphi$ is increasing. However, we find that (A) when $Y_1$ tends to $\alpha_2$ from the right,   the function $\varphi_2\circ f$ tends to $+\infty$, while $\overline{\varphi}_1=\varphi_1(\cdot,f(\cdot))$ is bounded; (B) when $Y_1$ tends to $\xi_1$ from the left, the function $\varphi_2\circ f$ is bounded, while $\overline{\varphi}_1$ tends to $+\infty$ (in fact, either $S_1$ or $S_0$ expressed in terms of $Y_1$ does). It follows that in the interval $\Gamma''=(\alpha_2,\xi_1)$, the function $\varphi$ starts  decreasing from infinity and ends  increasing towards infinity. By continuity, there exists a minimum $\oS_{\min}$ of $\varphi$ in this interval. We see that for $\oS> \oS_{\min}$, at least two values of $Y_1$  satisfy $\varphi(Y_1)=\oS$; hence  at least two  BMSSs exist in this interval. All together, we conclude that at least three BMSSs occur in this case,  one in  $\Gamma'$ and two in  $\Gamma''$. Note that when $\oS= \oS_{\min}$ then there are at least two (not at least three) as the two in $\Gamma''$ coincide.

\begin{figure}[!t]
\includegraphics{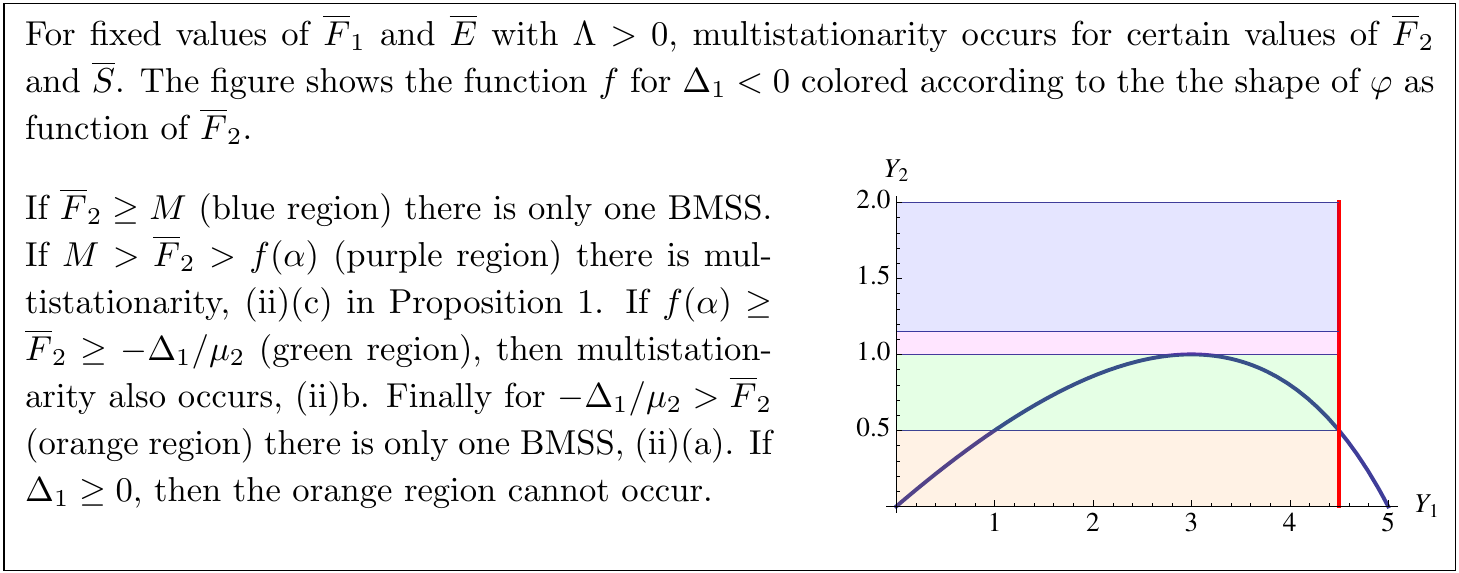}
\caption{Different shapes of $\varphi$ depending on $\oE,\oF_1,\oF_2$. }\label{regions}
\end{figure}

Finally, let us consider case (ii)(c), that is the case $\Lambda>0$, $\oF_2>f(\alpha)$ and consequently   $\Gamma=[0,\xi_1)$. The function $\varphi$ is increasing at $Y_1=0$ and tends to infinity as $Y_1$ approaches $\xi_1$. 
Multistationarity can only occur in the form of Figure~\ref{summary}(c) and there will be different values of $Y_1$ corresponding to the same value of $\oS$, only if the function $\varphi$ is not always increasing; that is if the derivative of $\varphi$ has more than two zeros.  
Equivalently, if there exists $Y_1$ for which $\varphi'(Y_1)<0$, then we are guaranteed multistationarity.  This result is stated in the following proposition:

\begin{proposition}\label{varphider}
Assume that $\Lambda>0$. Then there exists a value $M>f(\alpha)$ such that, for  $\oF_2\geq M$, the derivative of $\varphi$ is positive (except potentially for a finite number of points where it is zero). Hence,  $\varphi$ is increasing.  Further for  $M>\oF_2>f(\alpha)$,  there exist values of $Y_1$ for which  $\varphi'(Y_1)< 0$. 
\end{proposition}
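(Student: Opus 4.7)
The strategy is to isolate the dependence of $\varphi'$ on $\oF_2$. Differentiating $\varphi(Y_1) = \varphi_1(Y_1,f(Y_1)) + \varphi_2(f(Y_1))$ and using $\varphi_2'(Y_2;\oF_2) = (1+\mu_2) + \oF_2/[\delta_2(\oF_2-Y_2)^2]$ yields the decomposition
$$\varphi'(Y_1;\oF_2) \;=\; A(Y_1) \;+\; \frac{\oF_2\, f'(Y_1)}{\delta_2\,(\oF_2 - f(Y_1))^2},$$
where $A(Y_1) := \partial_1\varphi_1(Y_1,f(Y_1)) + [\partial_2\varphi_1(Y_1,f(Y_1)) + (1+\mu_2)]f'(Y_1)$ is independent of $\oF_2$ and equals the derivative of the pointwise limit $\varphi^\infty(Y_1) := \varphi_1(Y_1,f(Y_1)) + (1+\mu_2)f(Y_1) = \lim_{\oF_2\to\infty}\varphi(Y_1;\oF_2)$. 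The remainder term has the sign of $f'(Y_1)$ and its absolute value is strictly decreasing in $\oF_2$.

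For the first (upper) assertion I would establish two facts. First, $A(Y_1)>0$ on $\Gamma=[0,\xi_1)$ except possibly at finitely many points; this is the main technical step, and I would do it by clearing denominators so that $A$ becomes a rational function with strictly positive denominator, and then verifying the numerator polynomial is non-negative on $\Gamma$, either by explicit factorization or by regrouping into manifestly non-negative pieces using the constraints $\oE > \mu_1 Y_1 + \mu_2 f(Y_1)$ and $Y_1 < \oF_1$. Second, the remainder is bounded uniformly on $\Gamma$:
$$\Bigl|\tfrac{\oF_2\, f'(Y_1)}{\delta_2(\oF_2-f(Y_1))^2}\Bigr| \;\leq\; \frac{\oF_2\,\max_{\Gamma}|f'|}{\delta_2(\oF_2-f(\alpha))^2} \;\xrightarrow{\oF_2\to\infty}\; 0,$$
since $f'$ is bounded on $[0,\xi_1]$ and $f(Y_1)\leq f(\alpha)$. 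Combining these with the blow-up $A(Y_1)\to+\infty$ as $Y_1\to\xi_1$ (coming from $\partial_1\varphi_1$, regardless of whether $\oF_1 - Y_1\to 0$ or $\oE-\mu_1 Y_1-\mu_2 f(Y_1)\to 0$) gives $\varphi'(Y_1;\oF_2)\geq 0$ on $\Gamma$ once $\oF_2$ is sufficiently large. Setting $M$ equal to the infimum of those $\oF_2>f(\alpha)$ for which this holds yields the threshold.

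For the second (lower) assertion I would pick $Y_1 = \alpha+\epsilon$ with $\epsilon>0$ small. Taylor-expanding at $\alpha$ gives $f'(\alpha+\epsilon) = f''(\alpha)\epsilon + O(\epsilon^2)$, with $f''(\alpha)<0$ strict since $\alpha$ is a non-degenerate interior maximum of the rational function $f$, and $\oF_2 - f(\alpha+\epsilon) = (\oF_2-f(\alpha)) + \tfrac{1}{2}|f''(\alpha)|\epsilon^2 + O(\epsilon^3)$. Choosing $\epsilon \sim \sqrt{\oF_2-f(\alpha)}$ makes the remainder of order $-(\oF_2-f(\alpha))^{-1/2}$, which eventually dominates the bounded quantity $A(\alpha+\epsilon)$ and forces $\varphi'(\alpha+\epsilon;\oF_2)<0$. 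The monotonicity of $\varphi'(Y_1;\oF_2)$ in $\oF_2$ (strictly increasing wherever $f'(Y_1)<0$) then extends this negativity from a neighborhood of $f(\alpha)$ to the whole open interval $(f(\alpha),M)$.

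The main obstacle is verifying $A>0$ on $\Gamma$; the rational expression for $f$ is not elementary and a direct algebraic check is cumbersome. A cleaner route would be to recognize $\varphi^\infty$ as the steady-state function of the reduced motif obtained from Motif (f) by letting $S_2\to 0$ while retaining $X_2,Y_2$ as sequestered enzyme forms --- a monostationary variant of the one-site cycle --- and conclude monotonicity from that reduction.
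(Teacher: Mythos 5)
Your decomposition $\varphi'(Y_1;\oF_2)=A(Y_1)+\oF_2 f'(Y_1)/[\delta_2(\oF_2-f(Y_1))^2]$, with $A$ independent of $\oF_2$ and equal to the derivative of the limit curve $\varphi^\infty=\overline{\varphi}_1+(1+\mu_2)f$, organizes the argument exactly as the paper does: the paper writes $\varphi'=\partial\overline{\varphi}_1/\partial Y_1+(\partial\varphi_2/\partial Y_2)f'$, notes that only $\partial\varphi_2/\partial Y_2$ depends on $\oF_2$ and is decreasing in it, and defines $M$ as the infimum of those $\oF_2$ for which $\varphi'>0$ on all of $\Gamma$, using the same up-set/monotonicity-in-$\oF_2$ observation you invoke at the end. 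Your treatment of the second assertion is a genuinely different and valid route: the paper obtains $\varphi'<0$ for $\oF_2$ slightly above $f(\alpha)$ by citing the case (ii)(b) of Proposition~\ref{motif1f1} (at $\oF_2=f(\alpha)$ the set $\Gamma$ pinches off at $\alpha$ and $\varphi$ descends from $+\infty$ on $\Gamma''$) together with continuity of $\varphi'$ in $\oF_2$, whereas you expand at $Y_1=\alpha+\epsilon$ with $\epsilon\sim\sqrt{\oF_2-f(\alpha)}$ and let the remainder blow up like $-(\oF_2-f(\alpha))^{-1/2}$; the required non-degeneracy $f''(\alpha)<0$ does hold, since the numerator of $f'$ is a quadratic crossing zero transversally at $\alpha$, so this part is fine and arguably more self-contained than the paper's.

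The gap is in the first assertion. Its entire content reduces to the strict positivity of $A$ on the decreasing branch of $f$ --- equivalently, that $\varphi^\infty$ is increasing --- and you explicitly leave this unproven, calling it ``the main obstacle'' and offering only a sketch (clear denominators and check a polynomial) plus a vague reduction to a ``monostationary variant'' that is not one of the motifs already analyzed and would be circular as stated. Note also that your hedge ``$A>0$ except possibly at finitely many points'' is incompatible with the conclusion you want: if $A(y_0)=0$ at some $y_0$ with $f'(y_0)<0$, then $\varphi'(y_0;\oF_2)=A(y_0)+\oF_2f'(y_0)/[\delta_2(\oF_2-f(y_0))^2]<0$ for \emph{every} finite $\oF_2$, and no threshold $M$ exists; so you genuinely need $A>0$ wherever $f'<0$, not merely off a finite set. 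The paper closes this step by recording that on $I=(\alpha,\xi_1)$ the quantity $-f'$ is bounded above, $\partial\overline{\varphi}_1/\partial Y_1\geq 1+\mu_1$ is bounded below, and the $\oF_2$-dependent factor multiplying $f'$ shrinks uniformly, yielding the key inequality \eqref{phi_dev} for large $\oF_2$; in your bookkeeping the constant $(1+\mu_2)$ portion of $\partial\varphi_2/\partial Y_2$ survives the limit $\oF_2\to\infty$, so what must actually be compared are the two explicit bounds --- i.e., precisely the inequality $A>0$ --- and supplying that estimate is the step you still have to carry out before the existence of a finite $M$ is established.
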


We conclude that for all $\oF_2\geq M$, there is exactly one BMSS for any value of $\oS$, while for all $M> \oF_2>f(\alpha)$ there is multistationarity for certain values of $\oS$. In particular, multistationarity occurs for $\oS$ satisfying $\oS_{\min}\leq \oS \leq \oS_{\max}$, where $\oS_{min}$ is the smallest  local minimum (excluding $0$) and $\oS_{max}$ the largest local  maximum of $\varphi$ (excluding infinity), cf. Figure~\ref{summary}(c).

Based on numerical examples we have made the following observation: When  multistationarity occurs, the function $\varphi$ has one local minima and one local maxima, $\beta_1\leq \beta_2$, resulting in at most three BMSSs. When $\oF_2$  increases, $\beta_1$ increases, while $\beta_2$ decreases.
For some $\oF_2$, $\beta_1=\beta_2$ and  the decreasing part of $\varphi$ is lost and  $\varphi$ becomes increasing. In fact, the value of $\oF_2$ for which $\beta_1=\beta_2$ is $M$. 

All together, this implies that for fixed values of $\oE,\oF_1$, the value of $\oF_2$ determines whether  values of $\oS$  for which multistationarity occurs exist. Figure~\ref{regions} shows how the number of steady states changes with $\oF_2$.

We conclude that:
 \begin{result}[Motif (f)]
Let a two-site phosphorylation cycle be given with one kinase ca\-ta\-lyzing phosphorylation at both sites and two different phosphatases catalyzing dephosphorylation. Further, assume  that positive total amounts $\oE,\oF_1,\oF_2,\oS$ are given. 

Then, the BMSSs satisfy $\oS=\varphi(Y_1)$ for $Y_1\in \Gamma=[0,\xi)$, where $\varphi$ is a $\mmC^1$ function with $\varphi(0)=0$. 

Let  $\Lambda=(1+\eta_2/\delta_1)\mu_1 \oF_1- \oE$. Then we have:
\begin{enumerate}[(i)]
\item If $\Lambda\leq 0$ or  $\oE-\mu_1\oF_1-\mu_2\oF_2\geq 0$ then there is a unique BMSS.
\item If $\Lambda> 0$ then there exists values of $\oF_2$ and $\oS$ such that the system has more than one BMSS. Further,  there is an upper bound to $\oF_2$ for which  multistationarity can occur. This bound is independent of $\oS$.
\end{enumerate}
 \end{result}

\paragraph{Motif (g).}
Next, we consider   a  two-site phosphorylation cycle, where phosphorylation is catalyzed  by the same kinase at both sites and dephosphorylation is catalyzed by the same phosphatase at both sites. Again, we assume sequential phosphorylation.  The chemical reactions of the system are:

\centerline{
\xymatrix{
S_{0} + E \ar@<0.5ex>[r]^(0.6){a^E_1} & X_1  \ar@<0.5ex>[l]^(0.4){b^E_1} \ar[r]^(0.4){c^E_1} & S_{1} + E & 
S_{1} + E \ar@<0.5ex>[r]^(0.6){a^E_2} & X_2  \ar@<0.5ex>[l]^(0.4){b^E_2} \ar[r]^(0.4){c^E_2} & S_{2} + E \\
S_{1} + F \ar@<0.5ex>[r]^(0.6){a^F_1} & Y_1  \ar@<0.5ex>[l]^(0.4){b^F_1} \ar[r]^(0.4){c^F_1} & S_{0} + F &
S_{2} + F \ar@<0.5ex>[r]^(0.6){a^F_2} & Y_2  \ar@<0.5ex>[l]^(0.4){b^F_2} \ar[r]^(0.4){c^F_2} & S_{1} + F
}}
\noindent
The differential equations describing the system  are the following:

\vspace{-0.5cm}
\noindent
\hspace{-0.4cm}
\begin{minipage}[t]{0.62\textwidth}
{\small\begin{align}
\dot{S_{0}} &=  b^E_{1} X_{1} +  c^F_{1} Y_{1} - a^E_{1} E S_{0}   \nonumber   \\
\dot{E} &= (b^E_1 + c^E_1)X_1+(b^E_2 + c^E_2)X_2 - (a^E_1 S_{0} +a^E_2 S_{1})E\nonumber \\
\dot{F} &= (b^F_1 + c^F_1)Y_1 +(b^F_2 + c^F_2)Y_2 - (a^F_1 S_1 +a^F_2 S_2)F   \nonumber \\
\dot{S_{1}} &=  c^E_1 X_1 +b^E_{2} X_{2}+b^F_1 Y_1 +  c^F_{2} Y_{2} - (a^F_1 F+ a^E_{2} E) S_{1}    \label{eq1g4}
\end{align}}
\end{minipage}
\hspace{-0.4cm}
\begin{minipage}[t]{0.4\textwidth}
{\small \begin{align}
\dot{S_{2}} &=  c^E_2 X_2  +b^F_2 Y_2  - a^F_2 F S_2  \label{eq1g3}  \\
\dot{X_k} &=  -(b^E_k + c^E_k)X_k + a^E_k E S_{k-1}  \label{eq1g1}  \\
\dot{Y_k} &=  -(b^F_k + c^F_k)Y_k +a^F_k F S_k  \label{eq1g2} 
\end{align}}
\end{minipage}

\vspace{-0.3cm}
\noindent 
with $k=1,2$. As in the previous system, the conservation laws are given by  
\begin{align}\label{totalamounts2}
\overline{E} &= E + X_{1}+X_2,\quad \overline{F} = F + Y_1+Y_2,\quad  \overline{S}   =   S_{0}+S_1+S_2 + X_{1}+X_2+Y_1+Y_2. \nonumber \end{align}
Therefore, if total amounts are given, the system to be solved consists of 9 equations in 9 variables which are the equations for the total amounts and, for instance, equations \eqref{eq1g4}-\eqref{eq1g2}.
From \eqref{eq1g1} and \eqref{eq1g2} we obtain 
\begin{equation}\label{relations2}
X_k = \eta_k E S_{k-1}, \quad Y_k=\delta_k F S_k,\qquad k=1,2\end{equation}
with constants $\eta_k=a^E_k/(b^E_k + c^E_k)$ and $\delta_k=a^F_k/(b^F_k + c^F_k)$.
From \eqref{eq1g3} and \eqref{eq1g2} with $k=2$, and from \eqref{eq1g4}, \eqref{eq1g3}, \eqref{eq1g2} with $k=1,2$ and \eqref{eq1g1} with $k=2$, we obtain
$$X_k = \mu_k Y_k,\qquad \mu_k= c^F_{k}/ c^E_{k}. $$

If $\oE,\oF > 0$ then  $E,F \neq 0$ at steady state. We isolate $E,F$ from the corresponding conservation laws and write them as functions of $Y_1,Y_2$. Since the denominators are non-zero, we find
$$ S_0 = \frac{\mu_1 Y_1}{\eta_1(\oE - \mu_1Y_1-\mu_2Y_2)},\quad S_2= \frac{Y_2}{\delta_2(\oF-Y_1-Y_2)}.$$ 
From $Y_1=\delta_1 F S_1=\delta_1(\oF-Y_1-Y_2)S_1$ we obtain
$Y_1= \frac{\delta_1 S_1(\oF - Y_2)}{1+\delta_1 S_1}$.
Now from the equality $X_2= \eta_2 E S_1=\eta_2(\oE-\mu_1Y_1-\mu_2Y_2)S_1$, we find that
$$Y_2 = \frac{\eta_2 S_1 (\oE + \delta_1(\oE - \mu_1 \oF)S_1)}{\mu_2 + \mu_2(\delta_1+\eta_2)S_1 + \delta_1\eta_2(\mu_2-\mu_1)S_1^2}. $$ 
We can therefore write all concentrations as functions of $S_1$. Let 
\begin{align*}
p_1 (S_1) &= \oE + \delta_1(\oE - \mu_1 \oF)S_1 \\
p_2(S_1) & = \mu_2\oF + \eta_2(\mu_2 \oF-\oE)S_1 \\
p_3 (S_1) &= \mu_2 + \mu_2(\delta_1+\eta_2)S_1 + \delta_1\eta_2(\mu_2-\mu_1)S_1^2.
\end{align*}
Then we have
$$Y_1 =\frac{\delta_1 S_1 p_2(S_1)}{p_3(S_1)}, \quad Y_2 =\frac{\eta_2 S_1 p_1(S_1)}{p_3(S_1)}, \quad S_0 =\frac{\mu_1 \delta_1 S_1 p_2(S_1)}{\mu_2 \eta_1 p_1(S_1)}, \quad S_2 =\frac{\eta_2 S_1 p_1(S_1)}{\delta_2 p_2(S_1)}. $$ 

This leads to the following relation for the BMSSs
$$ \oS =  \varphi(S_1)= S_{0}+S_1+S_2 + X_{1}+X_2+Y_1+Y_2$$ 
with $\varphi(0)=0$ and where each term is considered a function of $S_1$, as defined above. For this motif, a $\varphi$-function in terms of an intermediate complex cannot be obtained: The relationship between $S_1$ and $Y_1$ or $Y_2$ is in general not invertible.

 We next seek to determine the values of $S_1$ that lead to non-negative concentrations of the species; that is to determine the region $\Gamma$, where $S_1$ is defined. From \eqref{relations2} we see that if $Y_k,S_0,S_1,S_2$ are non-negative, then so are $E, F, X_k$ at steady state. The roots of $p_1,p_2$ are
$$\xi_1= \frac{\oE}{\delta_1 ( \mu_1 \oF - \oE)}, \qquad \xi_2 = \frac{\mu_2 \oF }{ \eta_2 ( \oE - \mu_2 \oF)},$$
 respectively. Since $p_1, p_2$ are in the denominators of $S_0, S_2$, these polynomials are not allowed to vanish.  Let $\Delta_1 = \mu_1 \oF - \oE$ and $\Delta_2 = \mu_2 \oF - \oE$. Then
\begin{enumerate}[(i)]
\item  If $\Delta_2 <0$, then $p_2(S_1)> 0$ if and only if $S_1< \xi_2$.  If $\Delta_2\geq 0$, then $p_2(S_1)> 0$ for all  $S_1\geq 0$.
\item  If $\Delta_1 \leq 0$, then $p_1(S_1)> 0$ for all $S_1>0$. Otherwise,  if $\Delta_1 > 0$, then $p_1(S_1)> 0$ 
if and only if $S_1< \xi_1$.
\end{enumerate}

The polynomial $p_3(S_1)$ has degree 2 and is positive  whenever  $S_1$ is positive and  $\mu_2\geq\mu_1$. If $\mu_2<\mu_1$, then there is exactly one positive root $\xi$. Therefore, we have the following situations:
\begin{enumerate}[A.]
\item If $\mu_2\geq \mu_1$ ($\Delta_2\geq\Delta_1$), we require $p_1(S_1),p_2(S_1)>0$. Three different scenarios occur:
\begin{enumerate}[1)]
\item $\Delta_1>0$ and $\Delta_2>0$, i.e. $\oE < \mu_1 \oF \leq \mu_2 \oF$, $\Gamma=[0,\xi_1)$.
\item $\Delta_1\leq 0$ and $\Delta_2 \geq 0$, i.e. $\mu_1 \oF \leq \oE \leq \mu_2 \oF$, $\Gamma=[0,+\infty)$.
\item $\Delta_1<0$ and $\Delta_2<0$, i.e.  $\mu_1 \oF \leq \mu_2 \oF < \oE $, $\Gamma=[0,\xi_2)$.
\end{enumerate}
\item If $\mu_1> \mu_2$ ($\Delta_1>\Delta_2$), we have
\begin{enumerate}[1)]
\item $\Delta_1>0$ and $\Delta_2\geq 0$, i.e. $\oE \leq  \mu_2 \oF < \mu_1 \oF$, $\Gamma=[0,\widetilde{\xi}_1)$ for $\widetilde{\xi}_1= \min(\xi,\xi_1)$.
\item $\Delta_1\geq 0$ and $\Delta_2<0$, i.e. $\mu_2 \oF < \oE \leq  \mu_1 \oF$, $\Gamma=[0,\widetilde{\xi})\cup (\overline{\xi},+\infty)$ for $\widetilde{\xi}= \min(\xi_1,\xi_2,\xi)$ and $\overline{\xi}=\max(\xi_1,\xi_2,\xi)$.
\item $\Delta_1<0$ and $\Delta_2<0$, i.e. $\mu_2 \oF < \mu_1 \oF < \oE $, $\Gamma=[0,\widetilde{\xi}_2)$ for $\widetilde{\xi}_2= \min(\xi,\xi_2)$.
\end{enumerate}
\end{enumerate}
Observe that in all cases, the function $\varphi$ is $\mmC^1$ and non-negative in $\Gamma$. In the cases A.1, A.3, B.1 and B.3, when $S_1$ approaches  the upper limit of $\Gamma$, then $\varphi$ tends to infinity, since at least one of $Y_1,Y_2,S_0,S_2$ does. In case A.2 the function tends to infinity as $S_1$ tends to infinity. In all these cases, multistationarity would appear in the form of Figure~\ref{summary}(c), that is, $\varphi$ should decrease for some values of $S_1$.

In case B.2, $\Gamma$ is not connected so we let $\Gamma=\Gamma'\cup \Gamma''$. When $S_1$ tends to $\widetilde{\xi}$, then $\varphi$ tends to infinity. It follows that for every $\oS$, there is one BMSS located in $\Gamma'$. Additionally, when $S_1$ approaches $\overline{\xi}$ from the right, $\varphi$ also tends to infinity, implying that $\varphi$ comes down  from infinity in $\Gamma''$. When $S_1$ tends to infinity, $\varphi$ tends to infinity. This implies that in case B.2 the function resembles that in Figure~\ref{summary}(b), potentially with more increasing and decreasing parts. In any case, when  $\oS$ is large, multistationarity occurs and there are at least three steady states. 

Let us consider the derivatives with respect to $S_1$ of the following summands in $\varphi$:
\begin{align*}
\frac{\partial(Y_1+Y_2)}{\partial S_1} &= \frac{\mu_2(\eta_2\oE + \delta_1\mu_2\oF) + 2 \delta_1\eta_2\mu_1(\mu_2-\mu_1)\oF S_1 + \eta_2\delta_1 (\mu_2-\mu_1)\Delta_2 S_1^2}{p_3(S_1)^2} \\
\frac{\partial(\mu_1Y_1+\mu_2Y_2)}{\partial S_1} &= \frac{\mu_2^2(\eta_2\oE + \delta_1\mu_1\oF) + 2 \delta_1\eta_2\mu_2(\mu_2-\mu_1)\oE S_1 - \eta_2\delta_1^2\mu_2 (\mu_2-\mu_1)\Delta_1 S_1^2}{p_3(S_1)^2}
\\
\frac{\partial S_0}{\partial S_1} &= \frac{\delta_1\mu_1\mu_2\oE\ \oF + 2 \eta_2\oE\Delta_2S_1 -\mu_1 \eta_2\delta_1^2\Delta_1\Delta_2 S_1^2}{(\eta_1\mu_2\ p_1(S_1))^2}
\\
\frac{\partial S_2}{\partial S_1} &= \frac{\eta_2\mu_2\oE\ \oF - 2 \eta_2 \delta_1\mu_2 \oF\Delta_1S_1 - \eta_2^2\delta_1\Delta_1\Delta_2 S_1^2}{(\delta_2\ p_2(S_1))^2}
\end{align*}

In case A.2 these derivatives are all positive since $\Delta_1\leq 0$, $\Delta_2 \geq 0$ and $\mu_2-\mu_1\geq 0$.
Hence, there is exactly one steady state. For the remaining cases, one can always find combinations of parameters for which the function $\varphi$ is decreasing for some value of  $S_1$.
Thus, further analysis of the derivatives or the function $\varphi$ itself is required. For example, in case A.3, the following choices of numerical values provide multiple steady states:  $\oF=3$, $\oE=10$, $\delta_1=10$, $\delta_2=100$, $\mu_1=1$, $\mu_2=3$, $\eta_1=0.002$, $\eta_2=100$.

We conclude that:
 \begin{result}[Motif (g)] 
Let a two-site phosphorylation cycle be given with phosphorylation at both sites being catalyzed  by the same kinase and dephosphorylation at both sites being catalyzed by the same phosphatase. Further, assume  that the total amounts $\oE,\oF,\oS$ are positive. Then, any BMSSs satisfy $\oS=\varphi(S_1)$ for $S_1\in \Gamma$ (with $\Gamma$ defined  above)  where $\varphi$ is a $\mmC^1$ function with $\varphi(0)=0$. 

Additionally,
\begin{enumerate}[(i)]
\item If $\mu_1 \oF \leq \oE \leq \mu_2 \oF$, then for any total amount $\oS$ there is a unique BMSS.
\item If $\mu_2 \oF < \oE < \mu_1 \oF$, then there exists a value $\oS_{min}$ such that for any $\oS > \oS_{min}$ the system has at least three BMSSs.
\end{enumerate}
 \end{result}

\subsection{Modification of two different substrates}
\label{motiftwodiff}

\paragraph{Motif  (h).}
In this system, two cycles are connected through a joint catalyzing kinase. The chemical reactions of the system are:

\centerline{
\xymatrix{
S_{0} + E \ar@<0.5ex>[r]^(0.6){a^E_1} & X_1  \ar@<0.5ex>[l]^(0.4){b^E_1} \ar[r]^(0.4){c^E_1} & S_{1} + E & 
P_{0} + E \ar@<0.5ex>[r]^(0.6){a^E_2} & X_2  \ar@<0.5ex>[l]^(0.4){b^E_2} \ar[r]^(0.4){c^E_2} & P_{1} + E \\
S_{1} + F_1 \ar@<0.5ex>[r]^(0.6){a^F_1} & Y_1  \ar@<0.5ex>[l]^(0.4){b^F_1} \ar[r]^(0.4){c^F_1} & S_{0} + F_1 &
P_{1} + F_2 \ar@<0.5ex>[r]^(0.6){a^F_2} & Y_2  \ar@<0.5ex>[l]^(0.4){b^F_2} \ar[r]^(0.4){c^F_2} & P_{0} + F_2
}}

\noindent
The differential equations describing the system  are the following:

\vspace{-0.5cm}
\noindent
\begin{minipage}[t]{0.55\textwidth}
{\small \begin{align}
\dot{S_{0}} &=  b^E_{1} X_{1}+  c^F_{1} Y_{1}   - a^E_{1} E S_{0}   \nonumber   \\
\dot{S_{1}} &=  b^F_{1} Y_{1} +  c^E_{1} X_{1}  - a^F_{1} F_1 S_{1}   \label{cherry5}  \\
\dot{X_1} &= -(b^E_1 + c^E_1)X_1 + a^E_1 E S_0  \label{cherry1}  \\
\dot{Y_1} &= -(b^F_1 + c^F_1)Y_1 + a^F_1 F_1 S_1  \label{cherry3}  \\
\dot{F_{1}} &= (b^F_1 + c^F_1)Y_1 - a^F_1 F_1 S_1   \nonumber \\
\dot{E} &= (b^E_1 + c^E_1)X_1+(b^E_2 + c^E_2)X_2 - (a^E_1 S_{0} +a^E_2 P_{0})E.\nonumber
\end{align}}
\end{minipage}
\hspace{-0.1cm}
\begin{minipage}[t]{0.4\textwidth}
{\small \begin{align}
\dot{P_{0}} &=  b^E_{2} X_{2} +  c^F_{2} Y_{2}   - a^E_{2} E P_{0}  \nonumber   \\
\dot{P_{1}} &=  b^F_{2} Y_{2} +  c^E_{2} X_{2}  - a^F_{2} F_2 P_{1}  \label{cherry6}   \\
\dot{X_2} &= -(b^E_2 + c^E_2)X_2 + a^E_2 E P_0   \label{cherry2}  \\
\dot{Y_2} &= -(b^F_2 + c^F_2)Y_2 + a^F_2 F_2 P_1   \label{cherry4} \\
\dot{F_{2}} &= (b^F_2 + c^F_2)Y_2 - a^F_2 F_2 P_1  \nonumber 
\end{align}}
\end{minipage}

\vspace{-0.3cm}
The conservation laws are given by  
\begin{align}\label{totalamounts_cherry}
\overline{E} &= E + X_{1}+X_2,\quad \overline{F}_k = F_{k} + Y_k,\quad  \overline{S}   =   S_0+S_1+ X_1+Y_1,
\quad  \overline{P}   =   P_0+P_1 +X_2+Y_2, \nonumber 
\end{align}
with $k=1,2$. Therefore, if total amounts are given, the system to be solved consists of 11 equations in 11 variables which are the equations for the total amounts (5 equations) and, for instance, equations \eqref{cherry5}-\eqref{cherry4}. From  \eqref{cherry1}, \eqref{cherry3}, \eqref{cherry2} and \eqref{cherry4} we obtain 
\begin{equation*} 
X_1 = \eta_1 E S_{0}, \quad X_2 = \eta_2 E P_{0}, \quad Y_1=\delta_1 F_1 S_1, \quad Y_2=\delta_2 F_2 P_1,
\end{equation*}
with constants $\eta_k=a^E_k/(b^E_k + c^E_k)$ and $\delta_k=a^F_k/(b^F_k + c^F_k)$.
From these equations, \eqref{cherry5} and \eqref{cherry6} we obtain
$$X_k = \mu_k Y_k,\qquad \mu_k= c^F_{k}/ c^E_{k}. $$
If $\oE,\oF_k> 0$, then at steady state $E,F_k\neq 0$. Since $F_k=\oF -Y_k$ we require $0\leq Y_k<\oF_k$ for any BMSS.
Using the conservation laws for $\oP$ and $\oF_2$ we have
$$\oP= \frac{\mu_2 Y_2}{\eta_2 E}  + \frac{Y_2}{\delta_2(\oF_2-Y_2)}+\mu_2 Y_2+Y_2,$$
and it follows that $E$ is an increasing $\mmC^1$ function of $Y_2$
$$E=g(Y_2)=\frac{\mu_2 \delta_2 Y_2(\oF_2-Y_2)}{\eta_2(\delta_2\oP(\oF_2-Y_2)-Y_2-\delta_2(\mu_2+1)Y_2(\oF_2-Y_2))}$$
The numerator is non-negative for $0\leq  Y_2<\oF_2$. The denominator is a degree two polynomial in $Y_2$ with positive independent and leading coefficients. For $Y_2=\oF_2$, the denominator is negative and thus has two positive real roots. It follows that for $E$ to be non-negative we require  $Y_2\in [0,\xi_2)$ where $\xi_2<\oF_2$  is the first positive root of the denominator. We have that $g(0)=0$ and  that $g(Y_2)$ goes to infinity as $Y_2$ tends to $\xi_2$. To sum up, for $0\leq Y_2<\xi_2$, the steady state values of $F_2,P_0,P_1,X_2,E$ are non-negative as well.

Using the conservation law for $\oE$, it follows that
\begin{equation}\label{totEmotifh}
\oE=E+X_1+X_2=g(Y_2)+\mu_1 Y_1+\mu_2 Y_2,
\end{equation}
and $Y_2$ is a decreasing $\mmC^1$ function of $Y_1$, $h(Y_1)=Y_2$, defined on $[0,\oE/\mu_1]$ such that $h(\oE/\mu_1)=0$ and $h(0)<\xi_2$. Consequently,  $g(h(Y_1))=E$ is a decreasing $\mmC^1$  function defined on $[0,\oE/\mu_1]$ and since $E>0$, we require $Y_1\in [0,\oE/\mu_1)$. 

Let $ \Gamma=[0,\xi_1)$  with $\xi_1=\min(\oF_1,\oE/\mu_1)$ and consider the conservation law for $\oS$. We obtain
$$\oS= \varphi(Y_1) = \frac{\mu_1 Y_1}{\eta_1 g(h(Y_1))}  + \frac{Y_1}{\delta_1(\oF_1-Y_1)}+\mu_1 Y_1+Y_1.$$
Then, $\varphi$ is an increasing $\mmC^1$  function defined on $ \Gamma$ with image $\overline{\R}_+$ and all steady state concentrations are non-negative if and only if  $Y_1\in \Gamma$. Therefore, we have shown:

\begin{result}[Motif (h)]\label{motif1h} 
Let two one-site modification cycles with joint kinase and distinct phosphatases be given. Further, assume that the total amounts $\oS,\oP,\oE,\oF_k$, $k=1,2$, are positive. Then, the system has a  unique BMSS. 

Specifically, the BMSS satisfies $\oS=\varphi(Y_1)$ for $Y_1\in \Gamma=[0,\xi_1)$, where
$$\oS=\varphi(Y_1)=  \frac{\mu_1 Y_1}{\eta_1 g(h(Y_1))}  + \frac{Y_1}{\delta_1(\oF_1-Y_1)}+\mu_1 Y_1+Y_1$$
 is an increasing $\mmC^1$ function which tends to infinity as $Y_1$ tends to $\xi_1$ and fulfills $\varphi(0)=0$.
 \end{result}

\begin{obs}
In equation \eqref{totEmotifh}, if we isolate $Y_1$ instead of $Y_2$, we would obtain a relationship between $Y_1$ and $Y_2$, $Y_1=\overline{h}(Y_2)$ and the steady state relation in terms of $Y_2$, $\oS = \psi(Y_2)$. In this case, $Y_2\in (\xi,\xi')$ where $\xi=\overline{h}(\xi_1)$ and $\xi'$ is the pre-image of  $\oE$  of the increasing function $\mu_2Y_2+g(Y_2)$. If $\xi_1=\oE/\mu_1$, then $\xi=0$. 
\end{obs}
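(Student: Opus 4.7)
The plan is to verify the remark by carrying out the reparametrization explicitly and then translating the admissibility constraints for $Y_1$ into corresponding constraints on $Y_2$. Throughout, I rely on the relation
\[
\oE = g(Y_2) + \mu_1 Y_1 + \mu_2 Y_2
\]
from \eqref{totEmotifh}, and on the properties of $g$ already established (namely, $g$ is non-negative, increasing, $\mmC^1$ on $[0,\xi_2)$, with $g(0)=0$ and $g(Y_2)\to\infty$ as $Y_2\to\xi_2$).

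First I would isolate $Y_1$ to define
\[
\overline{h}(Y_2) = \frac{\oE - g(Y_2) - \mu_2 Y_2}{\mu_1}.
\]
Setting $G(Y_2) = g(Y_2) + \mu_2 Y_2$, which is a non-negative, increasing, $\mmC^1$ function on $[0,\xi_2)$ with $G(0)=0$ and $G(Y_2)\to\infty$ as $Y_2\to\xi_2$, it follows that $\overline{h}$ is a decreasing $\mmC^1$ function. Since $Y_2 = h(Y_1)$ was obtained by solving the same implicit equation in the other direction, $\overline{h}$ is the inverse of $h$ on the relevant intervals; the steady state relation in terms of $Y_2$ is then $\oS = \psi(Y_2) := \varphi(\overline{h}(Y_2))$.

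Next I would translate the constraint $Y_1\in \Gamma=[0,\xi_1)$ into a constraint on $Y_2$. The condition $Y_1\ge 0$ is equivalent to $G(Y_2)\le \oE$; because $G$ is continuous, strictly increasing, and blows up at $\xi_2$, there is a unique $\xi'\in(0,\xi_2)$ with $G(\xi')=\oE$, and the condition becomes $Y_2\le \xi'$. The condition $Y_1<\xi_1$ is equivalent to $G(Y_2)>\oE-\mu_1\xi_1$; since $\xi_1\le \oE/\mu_1$, the right-hand side is non-negative, and by the same monotonicity argument there is a unique $\xi\in[0,\xi')$ with $G(\xi)=\oE-\mu_1\xi_1$, giving $Y_2>\xi$. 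Combining yields $Y_2\in(\xi,\xi')$. The identification $\xi_1=\overline{h}(\xi)$ (equivalently $\xi=h(\xi_1)$, which I believe is the intended reading of the notation $\xi=\overline{h}(\xi_1)$) is immediate from $G(\xi)=\oE-\mu_1\xi_1$ and the definition of $\overline{h}$.

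Finally, the special case is handled by noting that if $\xi_1=\oE/\mu_1$, then $\oE-\mu_1\xi_1=0$, so $G(\xi)=0$, and since $G$ is increasing with $G(0)=0$ this forces $\xi=0$. The only mild obstacle is bookkeeping of open versus closed endpoints and the direction of the monotonicity when passing from $Y_1$ to $Y_2$; once the decreasing nature of $\overline{h}$ is noted, the inversion of inequalities is automatic, and no new analytic input beyond what was used for Result on Motif (h) is required.
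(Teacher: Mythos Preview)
Your verification is correct and follows precisely the argument the remark is meant to summarize; the paper gives no separate proof for this observation, and your direct computation of $\overline{h}$, the monotonicity of $G(Y_2)=g(Y_2)+\mu_2Y_2$, and the translation of the constraint $Y_1\in[0,\xi_1)$ into $Y_2\in(\xi,\xi']$ is exactly the intended justification. Your reading of $\xi=\overline{h}(\xi_1)$ as $\xi=h(\xi_1)$ (equivalently $\xi_1=\overline{h}(\xi)$) is the only one that type-checks and is confirmed by the special case $h(\oE/\mu_1)=0$; the endpoint discrepancy (closed at $\xi'$ versus open) is immaterial since $Y_1=0$ corresponds to $\oS=0$.
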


\paragraph{Motif (i).}
In this system, two cycles are connected through a joint catalyzing kinase and a joint catalyzing phosphatase. The chemical reactions of the system are:

\centerline{
\xymatrix{
S_{0} + E \ar@<0.5ex>[r]^(0.6){a^E_1} & X_1  \ar@<0.5ex>[l]^(0.4){b^E_1} \ar[r]^(0.4){c^E_1} & S_{1} + E & 
P_{0} + E \ar@<0.5ex>[r]^(0.6){a^E_2} & X_2  \ar@<0.5ex>[l]^(0.4){b^E_2} \ar[r]^(0.4){c^E_2} & P_{1} + E \\
S_{1} + F \ar@<0.5ex>[r]^(0.6){a^F_1} & Y_1  \ar@<0.5ex>[l]^(0.4){b^F_1} \ar[r]^(0.4){c^F_1} & S_{0} + F &
P_{1} + F \ar@<0.5ex>[r]^(0.6){a^F_2} & Y_2  \ar@<0.5ex>[l]^(0.4){b^F_2} \ar[r]^(0.4){c^F_2} & P_{0} + F
}}
\noindent
The differential equations describing the system  are the following:

\vspace{-0.5cm}
\noindent
\hspace{-0.4cm}
\begin{minipage}[t]{0.6\textwidth}
{\small \begin{align}
\dot{S_{0}} &=  b^E_{1} X_{1}+  c^F_{1} Y_{1} - a^E_{1} E S_{0}    \nonumber   \\
\dot{S_{1}} &=  b^F_{1} Y_{1} +  c^E_{1} X_{1} - a^F_{1} F S_{1}    \label{GMcherry5}  \\
\dot{X_1} &= -(b^E_1 + c^E_1)X_1 + a^E_1 E S_0  \label{GMcherry1}  \\
\dot{Y_1} &= -(b^F_1 + c^F_1)Y_1 + a^F_1 F S_1  \label{GMcherry3}  \\
\dot{F} &= (b^F_1 + c^F_1)Y_1 + (b^F_2 + c^F_2)Y_2  - (a^F_1 S_1+a^F_2 P_1)F   \nonumber \\
\dot{E} &= (b^E_1 + c^E_1)X_1+(b^E_2 + c^E_2)X_2 - (a^E_1 S_{0} +a^E_2 P_{0})E.\nonumber
\end{align}}
\end{minipage}
\hspace{-0.1cm}
\begin{minipage}[t]{0.4\textwidth}
{\small \begin{align}
\dot{P_{0}} &=  b^E_{2} X_{2} +  c^F_{2} Y_{2} - a^E_{2} E P_{0}    \nonumber   \\
\dot{P_{1}} &=  b^F_{2} Y_{2} +  c^E_{2} X_{2} - a^F_{2} F P_{1}   \label{GMcherry6}   \\
\dot{X_2} &= -(b^E_2 + c^E_2)X_2 + a^E_2 E P_0   \label{GMcherry2}  \\
\dot{Y_2} &= -(b^F_2 + c^F_2)Y_2 + a^F_2 F P_1   \label{GMcherry4} 
\end{align}}
\end{minipage}

\vspace{-0.3cm}
The conservation laws are given by  
\begin{align}\label{totalamounts_cherry}
\overline{E} &= E + X_{1}+X_2,\; \overline{F} = F+Y_1+Y_2,\;  \overline{S}   =   S_0+S_1+ X_1+Y_1,
\;  \overline{P}   =   P_0+P_1 +X_2+Y_2. \nonumber 
\end{align}
If total amounts are given, the system to be solved consists of 10 equations in 10 variables which are chosen to be the equations for the total amounts (4 equations) and  equations \eqref{GMcherry5}-\eqref{GMcherry4}. As usual we derive
\begin{equation*} 
X_1 = \eta_1 E S_{0}, \quad X_2 = \eta_2 E P_{0}, \quad Y_1=\delta_1 F S_1, \quad Y_2=\delta_2 F P_1,\quad X_k = \mu_k Y_k,
\end{equation*}
$k=1,2$, with constants $\eta_k=a^E_k/(b^E_k + c^E_k)$, $\delta_k=a^F_k/(b^F_k + c^F_k)$ and $ \mu_k= c^F_{k}/ c^E_{k}$.

Let $\xi_k=\min(\oF,\oE/\mu_k)$, $k=1,2$.
For fixed $Y_2\in [0,\xi_2)$, the equations above provide the steady state equations corresponding to  a one-site cycle with species $E,F,S_0,S_1,X_1,Y_1$ and total amounts $\oS,\oE-\mu_2Y_2, \oF-Y_2$. Analogously, for fixed $Y_1\in [0,\xi_1)$, we obtain the equations for a one-site cycle with species $E,F,P_0,P_1,X_2,Y_2$ and total amounts $\oP,\oE-\mu_1Y_1, \oF-Y_1$. Therefore, using Result \ref{motif1a} we obtain that the steady states are solutions to the system
\begin{align*}
\oS &= \varphi_1(Y_1,Y_2)=   \frac{\mu_1 Y_1}{\eta_1 (\oE-\mu_1Y_1-\mu_2Y_2)}  + \frac{Y_1}{\delta_1(\oF-Y_1-Y_2)}+\mu_1 Y_1+Y_1 \\
\oP & = \varphi_2(Y_1,Y_2)=  \frac{\mu_2 Y_2}{\eta_2 (\oE-\mu_1Y_1-\mu_2Y_2)}  + \frac{Y_2}{\delta_2(\oF-Y_1-Y_2)}+\mu_2 Y_2+Y_2
\end{align*}
for any $Y_1,Y_2$ and for the solution to be a BMSS we require that $Y_1+Y_2<\oF$ and $\mu_1Y_1+\mu_2Y_2<\oE$. All species concentrations derived with $Y_1,Y_2$ fulfilling these conditions are non-negative.

Note that $\varphi_1,\varphi_2$ are increasing functions of both $Y_1$ and $Y_2$. Therefore, using $\varphi_1$ for a fixed $\oS$, there is a decreasing $\mmC^1$ function $f(Y_2)=Y_1$, defined for $Y_2\in [0,\xi_2)$. Further, since $f(Y_2)$ is the steady state value of $Y_1$ in the first cycle with total amounts $\oS,\oE-\mu_2Y_2, \oF-Y_2$, the derived concentrations $E,F,S_0,S_1,X_1$ are also non-negative. Thus, so are $P_0,P_1,X_2$ for $Y_2\in [0,\xi_2)$.

Let $\Gamma=[0,\xi_2)$ so that all concentrations are non-negative if and only if $Y_2\in \Gamma$. We conclude that the steady states of the system are described by a $\mmC^1$ function 
$$\oP = \varphi(Y_2) = \varphi_2(f(Y_2),Y_2)$$
defined for $Y_2\in \Gamma$.

The behavior of the function  $\varphi$ determines the presence/absence of multiple steady states. Note that when $Y_2$ tends to $\xi_2$, then $Y_1=f(Y_2)$ tends to zero and $\varphi$ tends to infinity, implying that $\varphi$ increases as we approach the upper limit of $\Gamma$. Additionally, at $Y_2=0$, $f(Y_2)$ is finite, $\varphi(0)=0$ and hence the function is increasing at $0$ as well (it is positive for $Y_2>0$). We conclude that the existence of at least one steady state is guaranteed and multistationarity occurs if  $\varphi'(Y_2)<0$ for some $Y_2\in \Gamma$. 

To proceed, let $\psi=(\varphi_1,\varphi_2):\R^2\rightarrow \R^2$ and let $J_{\psi}$ be the Jacobian matrix of $\psi$, that is, 
$$J_{\psi} = \left( \begin{array}{cc} \frac{\partial \varphi_1}{\partial Y_1} &  \frac{\partial \varphi_1}{\partial Y_2} \\ 
 \frac{\partial \varphi_2}{\partial Y_1} &  \frac{\partial \varphi_2}{\partial Y_2} \end{array}\right).$$ 
Then, a simple observation  (proved in Proposition \ref{jacobian_det}) shows that $\varphi'(Y_2)<0$ if and only if $\det(J_{\psi}(f(Y_2),Y_2))<0$. Therefore, if $\det(J_{\psi}(Y_1,Y_2))<0$ for some values  $\oE,\oF,Y_1,Y_2$, then for total amounts $\oS=\varphi_1(Y_1,Y_2)$ and $\oP=\varphi_2(Y_1,Y_2)$, the system exhibits multistationarity.

\begin{proposition}\label{motif1i_prop}
Let $\sigma=(\mu_1-\mu_2)( \mu_1\delta_1\eta_2 - \mu_2\delta_2\eta_1)$ and $\overline{\Gamma}=\{(Y_1,Y_2)\in \R^2|\ Y_1+Y_2<\oF,\  \mu_1Y_1+\mu_2Y_2<\oE\}$. 
\begin{enumerate}[(i)]
\item If $\sigma\geq 0$, then $\det(J_{\psi}(Y_1,Y_2))>0$ for all $(Y_1,Y_2)\in \overline{\Gamma}$.
 \item Assume that $\sigma< 0$. If (a) $\mu_1-\mu_2>0$, and either $\mu_2\oF \geq \oE $ or $\oE \geq \mu_1 \oF$, or (b) $\mu_2-\mu_1>0$, and either $\mu_1\oF \geq \oE$ or $\oE \geq \mu_2 \oF$, then $\det(J_{\psi}(Y_1,Y_2))>0$ for all $(Y_1,Y_2)\in \overline{\Gamma}$.
 \item If $\sigma< 0$ and either $\mu_1\oF > \oE  >\mu_2 \oF$ or  $\mu_2\oF > \oE  >\mu_1 \oF$, then there exist values $(Y_1,Y_2)\in \overline{\Gamma}$, such that $\det(J_{\psi}(Y_1,Y_2))<0$.
\end{enumerate}
\end{proposition}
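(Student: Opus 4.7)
The plan is to compute $\det(J_\psi)$ in a form that isolates the sole sign-indefinite contribution. Writing $A=\oE-\mu_1Y_1-\mu_2Y_2>0$ and $B=\oF-Y_1-Y_2>0$ on $\overline{\Gamma}$, and introducing the shorthand
\[
D_k=\frac{\mu_k}{\eta_kA}+\frac{1}{\delta_kB}+(1+\mu_k),\qquad P_k=\frac{\mu_kY_k}{\eta_kA^2},\qquad Q_k=\frac{Y_k}{\delta_kB^2},
\]
direct differentiation of the formulas for $\varphi_1,\varphi_2$ gives $\partial\varphi_k/\partial Y_k=D_k+\mu_kP_k+Q_k$ and $\partial\varphi_k/\partial Y_j=\mu_jP_k+Q_k$ for $j\neq k$. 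Expanding the $2\times2$ determinant, the $\mu_1\mu_2P_1P_2$ and $Q_1Q_2$ products cancel between the diagonal and off-diagonal contributions, leaving
\[
\det(J_\psi)=D_1D_2+D_1(\mu_2P_2+Q_2)+D_2(\mu_1P_1+Q_1)+(\mu_1-\mu_2)(P_1Q_2-P_2Q_1),
\]
and a direct substitution identifies the last summand as $\sigma Y_1Y_2/(\eta_1\eta_2\delta_1\delta_2 A^2B^2)$. Every other summand is non-negative on $\overline{\Gamma}$ (with $D_1D_2>0$ strict), so part (i) is immediate.

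For part (iii), I would argue by an asymptotic analysis at the point $(Y_1^\star,Y_2^\star)$ where $A$ and $B$ vanish simultaneously. Solving the linear system gives
\[
Y_1^\star=\frac{\oE-\mu_2\oF}{\mu_1-\mu_2},\qquad Y_2^\star=\frac{\mu_1\oF-\oE}{\mu_1-\mu_2},
\]
with both coordinates strictly positive exactly under the hypothesis $\mu_1\oF>\oE>\mu_2\oF$ (the case $\mu_2\oF>\oE>\mu_1\oF$ is symmetric). Parametrising the approach to this point from inside $\overline{\Gamma}$ so that $A$ and $B$ are simultaneously of order $\epsilon\to0$, one has $D_k=O(\epsilon^{-1})$ and $P_k,Q_k=O(\epsilon^{-2})$, so each of the first three summands is at most $O(\epsilon^{-3})$, whereas the $\sigma$-summand has order $\epsilon^{-4}$. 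For $\sigma<0$ and $\epsilon$ small, the $\sigma$-summand dominates and $\det(J_\psi)<0$.

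Part (ii) is the main obstacle. The hypotheses on the total amounts preclude the simultaneous vanishing of $A$ and $B$ on $\overline{\Gamma}$: the identities
\[
A-\mu_1B=(\oE-\mu_1\oF)+(\mu_1-\mu_2)Y_2,\qquad \mu_2B-A=(\mu_2\oF-\oE)+(\mu_1-\mu_2)Y_1
\]
imply $A\geq\mu_1B$ when $\oE\geq\mu_1\oF$ and $A\leq\mu_2B$ when $\mu_2\oF\geq\oE$ (and analogously in case (ii)(b)). The plan is to use this affine comparison to convert the offending factor $1/(A^2B^2)$ into something of the same order as the leading positive contributions $\mu_1\mu_2\oE/(\eta_1\eta_2A^3)$ coming from $D_1\mu_2P_2+D_2\mu_1P_1$ and $\oF/(\delta_1\delta_2B^3)$ coming from $D_1Q_2+D_2Q_1$. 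The most delicate step will be organising these positive summands into an explicit pointwise bound $T\geq|\sigma|R$ on $\overline{\Gamma}$, where $T$ and $R$ denote the sum of the three positive terms and $Y_1Y_2/(\eta_1\eta_2\delta_1\delta_2A^2B^2)$ respectively. I expect the affine bound together with the constraints $\mu_1Y_1+\mu_2Y_2<\oE$ and $Y_1+Y_2<\oF$ to close the argument, but the algebraic bookkeeping is the hardest part.
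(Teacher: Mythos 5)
Your decomposition of $\det(J_\psi)$ is correct and is essentially the paper's: the paper likewise isolates the single sign-indefinite summand $\sigma Y_1Y_2/(\eta_1\eta_2\delta_1\delta_2 A^2B^2)$ (with $A=C_E$, $B=C_F$ in its notation), so part (i) is immediate, and your part (iii) is the paper's argument as well — let $A$ and $B$ vanish simultaneously at the intersection of the two lines, which lies in $\overline{\Gamma}$ exactly under the stated ordering of $\oE$ against $\mu_1\oF,\mu_2\oF$, and note that the negative term has the denominator of highest total degree. Those two parts are fine.

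Part (ii), however, is not a proof: you state a plan, concede that ``the algebraic bookkeeping is the hardest part,'' and never produce the bound $T\geq|\sigma|R$. Moreover the specific comparison you propose (the pointwise inequalities $A\geq\mu_1B$ or $A\leq\mu_2B$) does not obviously close the argument: converting $1/(A^2B^2)$ into $1/(AB^3)$ or $1/(A^3B)$ leaves you needing an estimate like $AY_1+\mu_2BY_2\geq\mu_1Y_1Y_2$ against positive terms whose numerators carry only $Y_1$ or $Y_2$ singly, and such an inequality fails near the boundary where $A$ or $B$ is small with $Y_1,Y_2$ bounded away from zero. The missing idea is a regrouping, not a size comparison. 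The paper splits the negative term between two blocks built from the cross terms, using the identity
\[
\frac{1}{AB}+\frac{Y_1}{AB^2}+\frac{\mu_2Y_2}{A^2B}+\frac{(\mu_2-\mu_1)Y_1Y_2}{A^2B^2}
=\frac{(\oE-\mu_1Y_1)(\oF-Y_2)-\mu_1Y_1Y_2}{A^2B^2}
=\frac{\oE\,\oF-\mu_1Y_1\oF-Y_2\oE}{A^2B^2}
\]
(and its mirror image with the roles of the indices exchanged, yielding numerator $\oE\,\oF-Y_1\oE-\mu_2Y_2\oF$). Under (ii)(a) with $\oE\geq\mu_1\oF>\mu_2\oF$ one uses $\mu_kY_k\oF\leq Y_k\oE$ to bound both numerators below by $\oE B>0$, and with $\mu_1\oF>\mu_2\oF\geq\oE$ one uses $Y_k\oE\leq\mu_kY_k\oF$ to bound them below by $\oF A>0$; case (ii)(b) is symmetric. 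That renders every block of the determinant manifestly positive. Until you either reproduce this regrouping or supply an explicit pointwise inequality $T\geq|\sigma|R$ valid on all of $\overline{\Gamma}$, part (ii) remains unproven.
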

The proof  of this proposition is found in Appendix \ref{B}. Using it, we derive the following result:

\begin{result}[Motif (i)]\label{motif1i} 
Let two one-site modification cycles with joint kinase and joint phosphatase be given. Further, assume that the total amounts $\oS,\oP,\oE,\oF$ are positive. Then, the BMSSs satisfy $\oP=\varphi(Y_2)$ for $Y_2\in \Gamma$, where $\varphi$ is a $\mmC^1$ function which tends to infinity as $Y_2$ tends to $\xi_2$ and fulfills $\varphi(0)=0$.

Let  $\sigma=(\mu_1-\mu_2)( \mu_1\delta_1\eta_2 - \mu_2\delta_2\eta_1)$. Then:
\begin{itemize}
\item The function $\varphi$ is always increasing if either
$(i)$ $\sigma\geq 0$ or $(ii)$  $\sigma< 0$ and either (a) $\mu_1-\mu_2>0$, together with $\mu_2\oF \geq \oE $ or $\oE \geq \mu_1 \oF$, or  (b) $\mu_2-\mu_1>0$, together with $\mu_1\oF \geq \oE$ or $\oE \geq \mu_2 \oF$.
\item If $\sigma< 0$ and either $\mu_1\oF > \oE  >\mu_2 \oF$ or  $\mu_2\oF > \oE  >\mu_1 \oF$, then, there exist values $Y_2\in \Gamma$ for which $\varphi'(Y_2)<0$. Hence multistationarity occurs.  In this case  the total amounts  $\oS,\oP$ are required to be large.

\end{itemize}
 \end{result}

\subsection{Cascade motifs}
\label{motifcascade}

\paragraph{Motif (j).} 
We consider here the combination of two one-site modification cycles  in a cascade motif with a specific phosphatase acting in each layer. The chemical reactions of the system are:
 
 \centerline{
\xymatrix{
S_{0} + E \ar@<0.5ex>[r]^(0.6){a^E_1} & X_1  \ar@<0.5ex>[l]^(0.4){b^E_1} \ar[r]^(0.4){c^E_1} & S_{1} + E & 
P_{0} + S_1 \ar@<0.5ex>[r]^(0.6){a^E_2} & X_2  \ar@<0.5ex>[l]^(0.4){b^E_2} \ar[r]^(0.4){c^E_2} & P_{1} + S_1 \\
S_{1} + F_1 \ar@<0.5ex>[r]^(0.6){a^F_1} & Y_1 \ar@<0.5ex>[l]^(0.4){b^F_1} \ar[r]^(0.4){c^F_1} & S_{0} + F_1 &
P_{1} + F_2 \ar@<0.5ex>[r]^(0.6){a^F_2} & Y_2  \ar@<0.5ex>[l]^(0.4){b^F_2} \ar[r]^(0.4){c^F_2} & P_{0} + F_2
}}
\noindent
The differential equations describing the  system  are the following:

\vspace{-0.3cm}
\noindent
\begin{minipage}[t]{0.33\textwidth}
{\small \begin{align}
\dot{S_{0}} &=  b^E_{1} X_{1} +  c^F_{1} Y_{1} - a^E_{1} E S_{0}    \nonumber   \\
\dot{P_{0}} &=  b^E_{2} X_{2}  +  c^F_{2} Y_{2} - a^E_{2} S_1 P_{0}   \nonumber   \\
\dot{E} &= (b^E_1 + c^E_1)X_1- a^E_1 E S_{0} \nonumber \\
\dot{F_1} &= (b^F_1 + c^F_1)Y_1 - a^F_1 F_1 S_1    \nonumber \\
\dot{F_2} &=  (b^F_2 + c^F_2)Y_2 - a^F_2 F_2 P_1    \nonumber 
\end{align}}
\end{minipage}
\begin{minipage}[t]{0.65\textwidth}
{\small \begin{align}
\dot{X_1} &= a^E_1 E S_{0} -(b^E_1 + c^E_1)X_1  \label{eq1jj1} \\
\dot{X_2} &= a^E_2 S_1 P_{0} -(b^E_2 + c^E_2)X_2  \label{eq1jj4} \\
\dot{Y_1} &=a^F_1 F_1 S_1 -(b^F_1 + c^F_1)Y_1   \label{eq1jj2}  \\
\dot{Y_2} &=a^F_2 F_2 P_1 -(b^F_2 + c^F_2)Y_2  \label{eq1jj6} \\
\dot{P_{1}} &=  b^F_{2} Y_{2} +  c^E_{2} X_{2} - a^F_{2} F_2 P_{1}  \label{eq1jj3}  \\
\dot{S_{1}} &=  c^E_1 X_1  +b^F_1 Y_1  + (b^E_2 + c^E_2)X_2- (a^F_1 F_1 +a^E_2 P_{0})S_1. \label{eq1jj5}   
\end{align}}\end{minipage}

\vspace{0.3cm}
The conservation laws are given by  
$$
\overline{E} = E + X_{1},\quad \overline{F}_k = F_k + Y_k,\quad   \overline{S}   =   S_{0}+S_1+ X_{1}+X_2+Y_1,\quad \overline{P}= P_0+P_1+X_2+Y_2$$
for $k=1,2$. 
If total amounts are given, the system to be solved consists of 11 equations in 11 variables which are the equations for the total amounts and, for instance, equations \eqref{eq1jj1}-\eqref{eq1jj5}. As usual we obtain 
$$ X_1 = \eta_1 E S_{0}, \quad X_2 = \eta_2 S_1 P_{0}, \quad Y_1=\delta_1 F_1 S_1,\quad Y_2=\delta_2 F_2 P_1$$
with constants $\eta_k=a^E_k/(b^E_k + c^E_k)$ and $\delta_k=a^F_k/(b^F_k + c^F_k)$.
Equations \eqref{eq1jj6},\eqref{eq1jj3} ($k=2$) and \eqref{eq1jj4},\eqref{eq1jj2},\eqref{eq1jj5}, ($k=1$)  provide the  relations
$$X_k = \mu_k Y_k,\qquad \mu_k= c^F_{k}/ c^E_{k}$$
for $k=1,2$.

The concentrations $E,F_1,F_2,X_1,X_2$ are solved in terms of $Y_1,Y_2$ from the total amounts $\oE,\oF_1,\oF_2$ and the two relations above. If $\oE,\oF_k> 0$, then $E,F_k\neq  0$ at steady state. Hence, any BMSS satisfies $0\leq Y_k<\oF_k$, $0\leq Y_1<\oE/\mu_1$. 

Let $\xi_1 = \min(\oF_1,\oE/\mu_1)$. Then we find  
$$S_0=\frac{\mu_1 Y_1}{\eta_1(\oE -\mu_1 Y_1)} \quad\textrm{and}\quad  S_1=\frac{Y_1}{\delta_1(\oF_1 - Y_1)},$$
which are non-negative, increasing, continuous functions of  $Y_1\in [0,\xi_1)$. It follows that 
$$\varphi_1(Y_1)= S_{0}+S_1+ X_{1}+Y_1$$
 is a non-negative, increasing, continuous function of $Y_1\in [0,\xi_1)$. Also, it tends to infinity as $Y_1$ tends to $\xi_1$ and thus the image of $\varphi_1$ over $[0,\xi_1)$ is $\overline{R}_+$. From the conservation law for  $\oS$ we find
$$ Y_2 = f(Y_1)= \frac{1}{\mu_2} (\oS - \varphi_1(Y_1)), $$ 
which is a decreasing function in $Y_1\in[0,\xi_1)$. The concentration $Y_2$ is non-negative provided $\varphi_1(Y_1)\leq \oS$. Let $\xi_2= \varphi_1^{-1}(\oS)$ ($\varphi_1$ is increasing, hence invertible). Then $Y_2,S_0,S_1,X_1$ are all functions of $Y_1$ and non-negative provided $Y_1\in [0,\xi_2]$. We have that $f(\xi_2)=0$ and $f(0)=\oS/\mu_2$. It follows from the inverse function theorem that $f$ can be inverted so that there exists a continuous decreasing function $Y_1=g(Y_2)$ for $Y_2\in [0,\oS/\mu_2]$ with $g(0)=\xi_2$ and $g(\oS/\mu_2)=0$.  Then $S_0,S_1,X_1,Y_1$ are all functions of $Y_2$ and non-negative provided $Y_2\in  [0,\oS/\mu_2]$.
Note that if $\oS>0$ then $S_1=0$ is not a solution of the steady state equations and hence any BMSS satisfies $S_1>0$.
Since $S_1=g(Y_2)/(\delta_1(\oF_1-g(Y_2)))$ we require  $Y_2<\oS/\mu_2$.

Let $\xi= \min(\oF_2,\oS/\mu_2)$ and $\Gamma=[0,\xi)$. 
Next we find the relations
$$P_0 = \frac{\mu_2 Y_2}{\eta_2S_1},\quad \textrm{and},\quad P_1=\frac{Y_2}{\delta_2(\oF_2 - Y_2)},$$
which are non-negative continuous functions of $Y_2\in \Gamma$. Since $S_1$ is increasing in $Y_1$ and thus decreasing in $Y_2$, we have that both $P_0,P_1$ are increasing in $Y_2$. 

We have seen that all concentrations at steady state are non-negative if and only if $Y_2\in \Gamma$. Further, when $Y_2$ tends to $\xi$, either $P_0$ or $P_1$ tend to infinity. Using the conservation law for $\oP$ we obtain:

\begin{result}[Motif (j)]\label{motif1j} 
Let a cascade of one-site modification cycles   be given with positive  total amounts $\oS,\oP,\oE,\oF_1,\oF_2$. Then the system has a  unique BMSS. 
Specifically, the BMSS satisfies $\oP=\varphi(Y_2)$ for $Y_2$ in $\Gamma= [0,\xi)$, where
$$\oP=\varphi(Y_2)=\frac{\delta_1 \mu_2Y_2(\oF_1-g(Y_2))}{g(Y_2)}+\frac{Y_2}{\delta_2(\oF_2 - Y_2)}+(1+\mu_2)Y_2$$
is an increasing  $\mmC^1$ function, which tends to infinity as $Y_2$ tends to $\xi$ and fulfills $\varphi(0)=0$.
  \end{result}

\paragraph{Motif (k).}
We consider here the combination of two one-site modification cycles  in a cascade motif where the phosphatase is not layer specific; that is  the same phosphatase acts in both layers.  The chemical reactions of the system are:
 
 \centerline{
\xymatrix{
S_{0} + E \ar@<0.5ex>[r]^(0.6){a^E_1} & X_1  \ar@<0.5ex>[l]^(0.4){b^E_1} \ar[r]^(0.4){c^E_1} & S_{1} + E & 
P_{0} + S_1 \ar@<0.5ex>[r]^(0.6){a^E_2} & X_2  \ar@<0.5ex>[l]^(0.4){b^E_2} \ar[r]^(0.4){c^E_2} & P_{1} + S_1 \\
S_{1} + F \ar@<0.5ex>[r]^(0.6){a^F_1} & Y_1 \ar@<0.5ex>[l]^(0.4){b^F_1} \ar[r]^(0.4){c^F_1} & S_{0} + F &
P_{1} + F \ar@<0.5ex>[r]^(0.6){a^F_2} & Y_2  \ar@<0.5ex>[l]^(0.4){b^F_2} \ar[r]^(0.4){c^F_2} & P_{0} + F
}}
\noindent
The differential equations describing the  system  are the following:

\vspace{-0.4cm}
\noindent
\hspace{-0.4cm}
\begin{minipage}[t]{0.63\textwidth}
{\small \begin{align}
\dot{S_{0}} &=  b^E_{1} X_{1}+  c^F_{1} Y_{1}  - a^E_{1} E S_{0}   \nonumber   \\
\dot{P_{0}} &=  b^E_{2} X_{2} +  c^F_{2} Y_{2}  - a^E_{2} S_1 P_{0}   \nonumber   \\
\dot{E} &= (b^E_1 + c^E_1)X_1- a^E_1 E S_{0} \nonumber \\
\dot{F} &= (b^F_1 + c^F_1)Y_1  + (b^F_2 + c^F_2)Y_2 - (a^F_1 S_1+a^F_2 P_1)F    \nonumber \\
\dot{S_{1}} &=  c^E_1 X_1  +b^F_1 Y_1  + (b^E_2 + c^E_2)X_2- (a^F_1 F +a^E_2 P_{0})S_1 \label{eq1j5}  
\end{align}}
\end{minipage}
\hspace{-0.2cm}
\begin{minipage}[t]{0.38\textwidth}
{\small \begin{align}
\dot{X_1} &= a^E_1 E S_{0} -(b^E_1 + c^E_1)X_1  \label{eq1j1} \\
\dot{X_2} &= a^E_2 S_1 P_{0} -(b^E_2 + c^E_2)X_2  \label{eq1j4} \\
\dot{Y_1} &=a^F_1 F S_1 -(b^F_1 + c^F_1)Y_1   \label{eq1j2}  \\
\dot{Y_2} &=a^F_2 F P_1 -(b^F_2 + c^F_2)Y_2  \label{eq1j6} \\
\dot{P_{1}} &=  b^F_{2} Y_{2} +  c^E_{2} X_{2} - a^F_{2} F P_{1}.  \label{eq1j3}  
\end{align}}\end{minipage}

\vspace{-0.3cm}
The conservation laws are given by  
$$
\overline{E} = E + X_{1},\quad \overline{F} = F + Y_1+Y_2,\quad   \overline{S}   =   S_{0}+S_1+ X_{1}+X_2+Y_1,\quad \overline{P}= P_0+P_1+X_2+Y_2.$$
If total amounts are given, the system to be solved consists of 10 equations in 10 variables which are the equations for the total amounts and, for instance, equations \eqref{eq1j5}-\eqref{eq1j3}. From the latter equations, we obtain 
$$ X_1 = \eta_1 E S_{0}, \quad X_2 = \eta_2 S_1 P_{0}, \quad Y_1=\delta_1 F S_1,\quad Y_2=\delta_2 F P_1$$
with constants $\eta_k=a^E_k/(b^E_k + c^E_k)$ and $\delta_k=a^F_k/(b^F_k + c^F_k)$.
Equations \eqref{eq1j6},\eqref{eq1j3} ($k=2$) and \eqref{eq1j5},\eqref{eq1j4} and \eqref{eq1j2} ($k=1$)  provide
$$X_k = \mu_k Y_k,\qquad \mu_k= c^F_{k}/ c^E_{k}$$
for $k=1,2$.

 Note  that if $\oE,\oF>0$, then $E,F\neq  0$ at steady state. Therefore, any BMSS satisfies $0\leq Y_k< \oF$ and $Y_1< \oE/\mu_1$. If $S_1=0$, then it follows from the steady state equations that $S_0=X_1=X_2=Y_1=0$ and thus $\oS=0$. Hence, if $\oS>0$ we have that $S_1>0$ and so $Y_2=X_2/\mu_2<\oS/\mu_2$ for any BMSS.

Let $\xi_2=\min(\oS/\mu_2,\oF)$, $\xi_1=\min(\oE/\mu_1,\oF)$ and $\Gamma_k=[0,\xi_k)$, $k=1,2$. 
For  fixed $Y_2\in \Gamma_2$, let $\xi_1(Y_2)=\min(\oE/\mu_1,\oF-Y_2)$ and note that $\xi_1(Y_2)\leq \xi_1$.  For $Y_2$ fixed,   the steady states  satisfy the steady state equations of a  one-site phosphorylation cycle  with species $S_0,S_1,X_1,Y_1,E,F$ and positive total amounts $\oE$, $\oF-Y_2$ and $\oS - \mu_2Y_2$ (these are independent of $\oP$).  Using Result \ref{motif1a} with $\varphi_1(Y_1,Y_2)$ denoting $\varphi(Y_1)$ for the fixed $Y_2$, the BMSSs satisfy the relation
\begin{equation*}
\oS  = \Phi(Y_1,Y_2)=\varphi_1(Y_1,Y_2)+\mu_2Y_2 
\end{equation*}
 for  $0\leq  Y_1 < \xi_1(Y_2)\leq \xi_1$. 
  Under these conditions, the concentrations $S_0,S_1,E,X_1,Y_1$, $X_2,Y_2$, $F$ are non-negative. Note that 
 $$S_0=\frac{\mu_1Y_1}{\eta_1(\oE-\mu_1Y_1)},\quad S_1=\frac{Y_1}{\delta_1(\oF-Y_1-Y_2)},$$
 (with non-zero denominators)  and $\varphi_1(Y_1,Y_2)=S_0+S_1+X_1+Y_1$, such that  only $S_1$ depends on $Y_2$.  Since the BMSS is unique in the one-site cycle, it follows that for every $Y_2\in \Gamma_2$ there exists $Y_1\in \Gamma_1$ ($Y_1$ is non-zero provided $\oS> 0$) satisfying $\oS=\Phi(Y_1,Y_2)$. Thus, there is a function $f$ so that  for any BMSS,
  \begin{equation}\label{fprop4}
  Y_1=f(Y_2).
  \end{equation}
 The function $f$ is $\mmC^1$ and decreasing for $Y_2\in \Gamma_2$. Indeed, this follows from the implicit function theorem and the fact that $\Phi$ is  $\mmC^1$ in $\Gamma_2$, the derivative with respect to $Y_1$ is  positive  and the derivative with respect to $Y_2$ is positive too. This can be checked by direct computation.

By construction, we have $0<f(Y_2)<\xi_1(Y_2)$  for every $Y_2$. Using Remark \ref{poly}, the value $Y_1=f(Y_2)$  is the first positive root of the polynomial in $Y_1$ (for fixed $Y_2$) obtained from $\Phi(Y_1,Y_2)$ by elimination of denominators. Also, if $Y_2=0$ then $Y_1=f(0)$ is the BMSS value of a one-site phosphorylation system with total amounts $\oS,\oE,\oF$ and hence it is  positive.
 
Consider now the conservation law for $\oP$:
 $$P_0=\frac{\mu_2Y_2}{\eta_2 S_1}=\frac{\mu_2\delta_1(\oF-Y_1-Y_2) Y_2}{\eta_2 Y_1}, \qquad P_1=\frac{Y_2}{\delta_2 F}=\frac{Y_2}{\delta_2 (\oF-Y_1-Y_2)},$$
 and hence $\oP=\varphi_2(Y_1,Y_2)$ is   a function of $Y_1$ and $Y_2$, which is $\mmC^1$ with respect to each variable. Further, each summand $P_0,P_1,X_2$ and $Y_2$ of $\oP$ is non-negative if $Y_2\in \Gamma_2$ and so all concentrations at steady state are non-negative if and only if $Y_2\in \Gamma_2$. Accordingly, we let $\Gamma=\Gamma_2$. 
 
The BMSSs are characterized by the relation
 $$\oP=\varphi(Y_2)= \varphi_2(f(Y_2),Y_2).$$
The function $\varphi$ is $\mmC^1$  for $Y_2\in \Gamma$.  Further, $\varphi(0)=0$ and as $Y_2$ tends $\xi_2$, $f(Y_2)$ tends to zero and hence $\varphi(Y_2)$ tends to $+\infty$. Consequently, there is at least one BMSS. 

Determination of the number of  BMSSs follows from the behavior of the function $\varphi$ in $\Gamma$. Since $\Gamma$ is a connected interval, multistationarity can only occur if $\varphi'(Y_2)<0$ for some value $Y_2\in \Gamma$ (Figure~\ref{summary}(c)). In this case multiple BMSSs occur for  $\oP_{\min}\leq \oP\leq \oP_{\max}$ for some $\oP_{\min}<\oP_{\max}$. Analysis  of the behavior of this function is not straightforward but still some general conclusions can be derived.

Clearly $Y_2+\mu_2Y_2$ is an increasing function of $Y_2$. The derivatives of $P_0,P_1$ with respect to $Y_2$ are 
$$\frac{\partial P_0}{\partial Y_2}= \frac{\mu_2\delta_1}{\eta_2f^2} (f(\oF - f - 2Y_2) - f'Y_2(\oF-Y_2)) \quad \textrm{and}\quad 
\frac{\partial P_1}{\partial Y_2} = \frac{\oF - f + f'Y_2}{\delta_2 (\oF-f-Y_2)^2}.$$
 Either $\frac{\partial P_0}{\partial Y_2}$ or $\frac{\partial P_1}{\partial Y_2}$ must be negative for $\varphi'(Y_2)<0$. 
 
Let $\Delta_1=\mu_1\oF - \oE$.

\begin{proposition}\label{nomulticascade}
Let $\overline{\sigma }= \min(\oS,\oP)$ and $\sigma_{\mu} = \min (1+\mu_1,\mu_2/2). $ 
If  $\Delta_1>0$ and $$\sigma_{\mu}(\oF - \oE/\mu_1) > \overline{\sigma}$$
then $\frac{\partial P_0}{\partial Y_2}(Y_2),\frac{\partial P_1}{\partial Y_2}(Y_2)>0$ for all $Y_2\in \Gamma$. Consequently,  multistationarity cannot occur.
\end{proposition}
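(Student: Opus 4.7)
The plan is to prove $\partial P_0/\partial Y_2>0$ and $\partial P_1/\partial Y_2>0$ separately on $\Gamma$ by combining two elementary a priori bounds that follow from $\Delta_1>0$ and the conservation laws. Since $f(Y_2)<\xi_1(Y_2)\le \oE/\mu_1$ and $\Delta_1>0$ forces $\oE/\mu_1<\oF$, I first record $\oF-f(Y_2)\ge \oF-\oE/\mu_1=\Delta_1/\mu_1>0$ for every $Y_2\in\Gamma$. The conservation law for $\oP$ gives $\mu_2 Y_2=X_2\le \oP$, and from $\oS=\varphi_1(f(Y_2),Y_2)+\mu_2 Y_2\ge \mu_2 Y_2$ we get $\mu_2 Y_2\le \oS$; hence $\mu_2 Y_2\le \overline{\sigma}$. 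Because $\sigma_\mu\le 1+\mu_1$ and $\sigma_\mu\le \mu_2/2$, the hypothesis $\sigma_\mu(\oF-\oE/\mu_1)>\overline{\sigma}$ then yields both $(1+\mu_1)(\oF-\oE/\mu_1)>\mu_2 Y_2$ and $(\mu_2/2)(\oF-\oE/\mu_1)>\mu_2 Y_2$.

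For $\partial P_0/\partial Y_2$, since $f$ is decreasing on $\Gamma$ and $Y_2<\oF$, the contribution $-f'(Y_2)\,Y_2(\oF-Y_2)$ in the given formula is non-negative, so it suffices to check $\oF-f-2Y_2>0$. The second consequence above gives $Y_2<(\oF-\oE/\mu_1)/2$, and together with $f\le \oE/\mu_1$ this yields $\oF-f-2Y_2>\oF-\oE/\mu_1-2Y_2>0$, hence strict positivity.

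For $\partial P_1/\partial Y_2$, I would differentiate $\oS=\Phi(f(Y_2),Y_2)=\varphi_1(f(Y_2),Y_2)+\mu_2 Y_2$ implicitly to obtain $f'=-\partial_2\Phi/\partial_1\Phi$ with both partials positive, so the numerator $\oF-f+f'Y_2$ shares the sign of $(\oF-f)\partial_1\Phi-Y_2\partial_2\Phi$. Expanding the partials of $\varphi_1$ and applying the algebraic identity
\[
(\oF-f)(\oF-Y_2)-Y_2 f=\oF(\oF-f-Y_2)
\]
to collapse the two $\delta_1$-terms, this reduces to
\[
(\oF-f)\frac{\mu_1\oE}{\eta_1(\oE-\mu_1 f)^2}+\frac{\oF}{\delta_1(\oF-f-Y_2)}+(1+\mu_1)(\oF-f)-\mu_2 Y_2.
\]
The first three summands are strictly positive, and $\oF-f\ge \oF-\oE/\mu_1$ combined with the first consequence above gives $(1+\mu_1)(\oF-f)>\mu_2 Y_2$, closing the sign estimate.

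Since $\partial_{Y_2}(X_2+Y_2)=1+\mu_2>0$ as well, every summand of $\varphi(Y_2)=P_0+P_1+X_2+Y_2$ then has strictly positive derivative on the connected interval $\Gamma$, so $\varphi$ is strictly increasing and one-to-one, which rules out multistationarity. The main obstacle I anticipate is the cancellation in the $\partial P_1/\partial Y_2$ computation above: without recognising that the two $1/(\oF-f-Y_2)^2$-singularities coming from $\partial_1\Phi$ and $\partial_2\Phi$ combine into the single positive term $\oF/(\delta_1(\oF-f-Y_2))$, the sign question does not reduce to the clean comparison between $(1+\mu_1)(\oF-f)$ and $\mu_2 Y_2$ that the hypothesis is tailored to.
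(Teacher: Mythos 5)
Your proof is correct and follows essentially the same route as the paper: you reduce the sign of $\partial P_1/\partial Y_2$ to that of $(\oF-f)\Phi_1-Y_2\Phi_2$, use the same cancellation $(\oF-f)(\oF-Y_2)-Y_2f=\oF(\oF-f-Y_2)$ to collapse the $\delta_1$-terms, and handle $\partial P_0/\partial Y_2$ via $\oF-f-2Y_2>0$, with the bounds $\mu_2Y_2\le\overline{\sigma}$ and $f<\oE/\mu_1$ supplying exactly the comparisons $(1+\mu_1)(\oF-\oE/\mu_1)>\overline{\sigma}$ and $(\mu_2/2)(\oF-\oE/\mu_1)>\overline{\sigma}$ that the paper uses. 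The only difference is presentational (you argue positivity directly rather than negating necessary conditions for negativity), so no further comment is needed.
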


Finally, let us analyze the situation where  $\oS$ is large.  

\begin{proposition}\label{varphinf} 
The following statements hold:
\begin{enumerate}[(i)]
\item Fix $Y_2> \oF-\oE/\mu_1$. Then,  as $\oS$ tends to $+\infty$, $\varphi'(Y_2)$ becomes positive, $\varphi'(Y_2)>0$. 
\item Assume that $\Delta_1>0$. Let $\Delta_2=27\mu_1\mu_2^2\delta_1^2\eta_2\oE(\mu_1\oF-\oE) -\delta_2(\delta_1\mu_2\mu_1\oF-(\mu_2\delta_1+\eta_2(1+\mu_2))\oE)^3$. If $\Delta_2<0$, then $\varphi$ decreases for some $Y_2\leq\oF-\oE/\mu_1$ as $\oS$ becomes large.  
\end{enumerate}
\end{proposition}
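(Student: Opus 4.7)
The plan is to analyze the two asymptotic regimes of $f(Y_2)$ as $\oS \to \infty$ separately, corresponding to the two branches of $\xi_1(Y_2) = \min(\oE/\mu_1, \oF - Y_2)$. In both regimes, since $\oS = \Phi(f(Y_2), Y_2)$ with $\Phi = \varphi_1 + \mu_2 Y_2$ bounded on compact subsets of $\{Y_1 < \xi_1(Y_2)\}$, letting $\oS \to \infty$ forces $f(Y_2) \to \xi_1(Y_2)$.

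For part (i), fix $Y_2 > \oF - \oE/\mu_1$, so $\xi_1(Y_2) = \oF - Y_2 < \oE/\mu_1$ and the blow-up is driven by $S_1 = Y_1/(\delta_1(\oF - Y_1 - Y_2))$ alone (the factor $\oE - \mu_1 Y_1$ stays bounded away from $0$). Setting $\epsilon := \oF - Y_2 - f(Y_2)$, the dominant balance yields $\epsilon \sim (\oF - Y_2)/(\delta_1 \oS)$. Implicit differentiation of $\oS = \Phi(f(Y_2), Y_2)$, combined with the identity $\partial_{Y_1}S_1 - \partial_{Y_2}S_1 = 1/(\delta_1(\oF - Y_1 - Y_2))$, gives $1 + f'(Y_2) \sim \epsilon/(\oF - Y_2)$. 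I then rewrite $\varphi'(Y_2) = (\partial_{Y_2}\varphi_2 - \partial_{Y_1}\varphi_2) + \partial_{Y_1}\varphi_2 \cdot (1 + f'(Y_2))$ and observe that the singular contributions come entirely from $P_1 = Y_2/(\delta_2(\oF - Y_1 - Y_2))$: the pieces $\partial_{Y_2}P_1 - \partial_{Y_1}P_1 = 1/(\delta_2\epsilon)$ and $\partial_{Y_1}P_1\cdot(1+f'(Y_2)) \sim Y_2/(\delta_2\epsilon(\oF - Y_2))$ sum to $\oF/(\delta_2\epsilon(\oF-Y_2)) \sim \delta_1\oF\oS/(\delta_2(\oF-Y_2)^2)$, which tends to $+\infty$.

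For part (ii), fix $Y_2 \in [0, c)$ with $c := \oF - \oE/\mu_1 > 0$, so $\xi_1(Y_2) = \oE/\mu_1$ and the blow-up is now driven by $S_0$, while $S_1, P_0, P_1$ remain bounded. Pointwise, $\varphi(Y_2) \to \varphi^\infty(Y_2) := \varphi_2(\oE/\mu_1, Y_2)$. Because $\partial_{Y_1}\Phi$ is dominated by the unbounded $\partial_{Y_1}S_0$ while $\partial_{Y_2}\Phi$ stays bounded, implicit differentiation gives $f'(Y_2) \to 0$, and hence $\varphi'(Y_2) \to (\varphi^\infty)'(Y_2)$ pointwise on $[0, c)$. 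A direct computation gives
\begin{equation*}
h(Y_2) := (\varphi^\infty)'(Y_2) = A(c - 2Y_2) + \frac{c}{\delta_2(c - Y_2)^2} + (1+\mu_2), \qquad A := \frac{\mu_1\mu_2\delta_1}{\eta_2\oE},
\end{equation*}
which is strictly convex on $Y_2 < c$ with unique minimizer $Y_2^\ast = c - (c/(A\delta_2))^{1/3}$ and minimum value $m := 3 A^{2/3}(c/\delta_2)^{1/3} - Ac + (1+\mu_2)$.

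It remains to match $m < 0$ with $\Delta_2 < 0$. Writing $Ac - (1+\mu_2) = K/(\eta_2\oE)$ with $K := \mu_2\delta_1(\mu_1\oF - \oE) - \eta_2(1+\mu_2)\oE$, cubing the inequality $3A^{2/3}(c/\delta_2)^{1/3} < Ac - (1+\mu_2)$ reduces after a short algebraic manipulation to exactly $\Delta_2 < 0$; conversely, $\Delta_2 < 0$ together with $\Delta_1 > 0$ forces $K > 0$, which justifies the cubing. A brief convexity check rules out the boundary case: if $Y_2^\ast < 0$, then $Ac^3 < c/\delta_2$ yields $3A^{2/3}(c/\delta_2)^{1/3} > 3Ac$, so $m > 2Ac + (1+\mu_2) > 0$, contradicting $\Delta_2 < 0$. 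Hence $\Delta_2 < 0$ places $Y_2^\ast$ in $[0, c)$ with $h(Y_2^\ast) < 0$, and the pointwise convergence $\varphi'(Y_2^\ast) \to h(Y_2^\ast)$ yields $\varphi'(Y_2^\ast) < 0$ for all sufficiently large $\oS$, producing a decreasing part of $\varphi$ in $[0, \oF - \oE/\mu_1]$. The main obstacles are the careful bookkeeping of the asymptotic expansions in part (i) and the algebraic identification $m < 0 \iff \Delta_2 < 0$, including the sign checks needed to cube the inequality without losing information.
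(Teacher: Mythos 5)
Your proof is correct and follows essentially the same route as the paper: pass to the limit $\oS\to\infty$, where $f(Y_2)$ tends to $\oF-Y_2$ in case (i) and to $\oE/\mu_1$ in case (ii), and then read off the sign of the limiting derivative. The only differences are in execution — in (i) you track the explicit rate $\epsilon\sim(\oF-Y_2)/(\delta_1\oS)$ and show $\varphi'\to+\infty$ where the paper only argues the sign of $dP_0/dY_2$ and $dP_1/dY_2$, and in (ii) you minimize the convex function $h=(\varphi^\infty)'$ directly rather than locating the negative values of the paper's cubic $p(Y_2)=\delta_2(c-Y_2)^2h(Y_2)$, arriving at the same condition $\Delta_2<0$.
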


Implicitly Proposition~\ref{varphinf}  assumes that  $\oP$ is large too: In (i), $Y_2$ is fixed and $\oS$ becomes large restricting the possible values of $Y_1$ and $\oP$. In (ii), the same is in play. Thus, contradictory conclusions cannot be reached from the two propositions. For example,  if $\oF\gg \oE$, then $\Delta_2<0$ and Proposition~\ref{varphinf}(ii) guarantees that for $\oS$ (and thus $\oP$) large,  multistationarity exists. If $\oF\gg \oE$ and $\oP, \oS$ fixed, Proposition~\ref{nomulticascade} ensures monostationarity.
Also, it follows from Proposition~\ref{varphinf} (i) that if $ \oF-\oE/\mu_1<0$, multistationarity for $\oS$ large cannot occur. 

Multistationarity can also occur if $\Delta_1\leq 0$. For instance, the total amounts $\oF=4,\oE=10,\oS=50,\oP=195$ and rate constants $\mu_2=15, \mu_1=2,\eta_2=\eta_1=\delta_1=1,\delta_2=0.1$ produce three steady states  with $Y_2=2.25,2.78,3.26$, respectively. However, contrary to the  situation with the reversed inequality,  $\Delta_1>0$, multistationarity cannot occur for $\oS$ large (Proposition \ref{varphinf}(i)).

We have the following result:

\begin{result}[Motif (k)]\label{result_cascade2} 
Let a cascade be given with one phosphatase acting in both layers. Further, assume that the  total amounts $\oE,\oF,\oS,\oP$ are positive. 
\begin{enumerate}[(i)]
\item If $\oF \gg \oE$ and $\oS$ large, then there exist values $\oP_{\min}<\oP_{\max}$ such that for  all $\oP_{\min}\leq \oP\leq \oP_{\max}$ the system admits more than one BMSS.
\item If $\oF - \oE/\mu_1 > \overline{\sigma}/\sigma_{\mu}$ with $\overline{\sigma }= \min(\oS,\oP)$ and $\sigma_{\mu} = \min (1+\mu_1,\mu_2/2)$, then multistationarity cannot occur.
\item If $\oF - \oE/\mu_1 <0$,   then  multistationarity cannot occur for large $\oS$.
\end{enumerate}
\end{result}

We conclude that multistationarity occurs in this motif. Note that statement (iii) implies that if  multistationarity occurs for some values $\oE/\mu_1>\oF$ and some $\oS,\oP$, then increasing  $\oS$ eventually makes  the system monostationary.   Observe also that according to (i) and (ii)  together, $\oP_{min}$ is required to be so large that the condition $\oF - \oE/\mu_1 > \oP_{min}/\sigma_{\mu}$ is not fulfilled.

 \paragraph{Motif (l).}
This motif is a combination of Motif (c) and Motif (j).  The chemical reactions of the system are:
 
 \centerline{
\xymatrix@C=14pt{
S_{0} + E \ar@<0.5ex>[r]^(0.6){a^E_1} & X_1  \ar@<0.5ex>[l]^(0.4){b^E_1} \ar[r]^(0.4){c^E_1} & S_{1} + E & 
P_{0} + S_1 \ar@<0.5ex>[r]^(0.6){a^E_2} & X_2  \ar@<0.5ex>[l]^(0.4){b^E_2} \ar[r]^(0.4){c^E_2} & P_{1} + S_1 & P_{0} + E \ar@<0.5ex>[r]^(0.6){a^E_3} & X_3  \ar@<0.5ex>[l]^(0.4){b^E_3} \ar[r]^(0.4){c^E_3} & P_{1} + E\\
S_{1} + F_1 \ar@<0.5ex>[r]^(0.6){a^F_1} & Y_1 \ar@<0.5ex>[l]^(0.4){b^F_1} \ar[r]^(0.4){c^F_1} & S_{0} + F_1 &
P_{1} + F_2 \ar@<0.5ex>[r]^(0.6){a^F_2} & Y_2  \ar@<0.5ex>[l]^(0.4){b^F_2} \ar[r]^(0.4){c^F_2} & P_{0} + F_2
}}
\noindent
The differential equations describing the  system  are the following:

\noindent
\hspace{-0.3cm}
\begin{minipage}[h]{0.62\textwidth}
{\small\begin{align}
\dot{P_{0}} &=  b^E_{2} X_{2} +b^E_3X_3 +  c^F_{2} Y_{2} - a^E_{2} S_1 P_{0}-a^E_3EP_0    \nonumber   \\
\dot{E} &= (b^E_1 + c^E_1)X_1+(b^E_3 + c^E_3)X_3- (a^E_1 S_{0}+ a^E_3 P_{0})E \nonumber \\
\dot{F}_1 &= (b^F_1 + c^F_1)Y_1 - a^F_1 F_1 S_1 \nonumber \\
\dot{F}_2 &=  (b^F_2 + c^F_2)Y_2 - a^F_2 F_2 P_1    \nonumber \\
\dot{P_{1}} &=  b^F_{2} Y_{2}+  c^E_{2} X_{2}+c_3^EX_3  - a^F_{2} F_2 P_{1}    \label{eq1l3}   \\
\dot{S_{1}} &=  c^E_1 X_1  +b^F_1 Y_1  + (b^E_2 + c^E_2)X_2- (a^F_1 F_1 +a^E_2 P_{0})S_1 \label{eq1l5}  
\end{align}}
\end{minipage}
\hspace{-0.4cm}
\begin{minipage}[h]{0.39\textwidth}
{\small\begin{align}
\dot{S_{0}} &=  b^E_{1} X_{1} +  c^F_{1} Y_{1} - a^E_{1} E S_{0}    \nonumber   \\
\dot{X_1} &= a^E_1 E S_{0} -(b^E_1 + c^E_1)X_1  \label{eq1l1} \\
\dot{X_2} &= a^E_2 S_1 P_{0} -(b^E_2 + c^E_2)X_2  \label{eq1l4} \\
\dot{X_3} &= a^E_3 E P_{0} -(b^E_3 + c^E_3)X_3 \label{eq1l7} \\
\dot{Y_1} &=a^F_1 F_1 S_1 -(b^F_1 + c^F_1)Y_1   \label{eq1l2}  \\
\dot{Y_2} &=a^F_2 F_2 P_1 -(b^F_2 + c^F_2)Y_2  \label{eq1l6} 
\end{align}}
\end{minipage}

\vspace{-0.3cm}
The conservation laws are given by  
$$
\overline{E} = E + X_{1}+X_3,\quad \overline{F}_k = F_k + Y_k,\quad   \overline{S}   =   S_{0}+S_1+ X_{1}+X_2+Y_1,\quad \overline{P}= P_0+P_1+X_2+Y_2+X_3.$$
If total amounts are given, the system to be solved consists of 12 equations in 12 variables which are the equations for the total amounts and, for instance, equations \eqref{eq1l3}-\eqref{eq1l6}. We obtain 
$$ X_1 = \eta_1 E S_{0}, \quad X_2 = \eta_2 S_1 P_{0},\quad X_3=\eta_3 E P_0, \quad Y_1=\delta_1 F_1 S_1,\quad Y_2=\delta_2 F_2 P_1$$
with constants $\eta_k=a^E_k/(b^E_k + c^E_k)$ and $\delta_k=a^F_k/(b^F_k + c^F_k)$.
Equations \eqref{eq1l5},\eqref{eq1l4},\eqref{eq1l2}, and \eqref{eq1l3},\eqref{eq1l6}  provide
$$X_1 = \mu_1 Y_1,\quad \mu_2^{-1}X_2+\mu_3^{-1}X_3-Y_2=0$$ with $\mu_1= c^F_{1}/ c^E_{1},$ $\mu_2=c_2^F/c_2^E$ and $\mu_3=c_2^F/c_3^E$.

We write $X_1$ as a function of $Y_1$ and isolate $E,F_1,F_2$ using the conservation laws for the total amounts $\oE,\oF_1,\oF_2$.  If $\oE,\oF_k> 0$ then $E,F_k\neq 0$ at steady state and it follows that $\oE - X_1-X_3>0$ for any BMSS. Further, for any BMSS we require $0\leq Y_k<\oF_k$ and $Y_1<\oE/\mu_1$. It follows as well that $X_2<\mu_2\oF_2$.  If in addition, $\oS> 0$ and $\oP>0$ then also   $S_0,S_1,X_1,Y_1,X_2,P_0,P_1,Y_2,X_3>0$ for any BMSS.

Let $\xi_1=\min(\oF_1,\oE/\mu_1)$, $\Gamma_1=(0,\xi_1)$ and $\Gamma_2=(0,\mu_2\oF_2)$. It is convenient for this motif to exclude 0 in the intervals. A necessary condition for positivity (i.e. a BMSS) is thus $Y_1\in \Gamma_1,X_2\in \Gamma_2$. We proceed to write $S_1$ as an increasing $\mmC^1$ functions of $Y_1\in \Gamma_1$ and $P_0$ as a $\mmC^1$ function increasing in $X_2\in \Gamma_2$ and  decreasing in $Y_1\in \Gamma_1$:
$$S_1= \frac{Y_1}{\delta_1(\oF_1-Y_1)},\quad P_0 =\frac{\delta_1X_2(\oF_1-Y_1)}{\eta_2Y_1}.$$
 Using $X_3=\eta_3 E P_0$, we express $X_3$ as a positive $\mmC^1$ function, increasing in $X_2\in \Gamma_2$ and decreasing in $Y_1\in \Gamma_1$:
$$ X_3 =\Phi_X(Y_1,X_2)= \frac{\delta_1\eta_3X_2(\oF_1-Y_1)(\oE-\mu_1Y_1)}{\eta_2Y_1 + \delta_1\eta_3X_2(\oF_1-Y_1)}.$$
  We proceed to write $Y_2$ as a positive $\mmC^1$ function, increasing in $X_2\in\Gamma_2$ and decreasing in $Y_1\in\Gamma_1$:
$$ Y_2 = \Phi_Y(Y_1,X_2)=\mu_2^{-1}X_2+ \mu_3^{-1}\Phi_X(Y_1,X_2).$$ 
For $Y_2<\oF_2$, we require $(Y_1,X_2)\in \Gamma'$ with  
$$\Gamma' = \{(Y_1,X_2)|\ Y_1\in \Gamma_1,\ X_2\in \Gamma_2,\  \Phi_Y(Y_1,X_2)<\oF_2 \}. $$
From $P_1=Y_2/(\delta_2F_2)$ we have 
$$P_1= \frac{ \Phi_Y(Y_1,X_2)}{\delta_2(\oF_2- \Phi_Y(Y_1,X_2))},$$ which is  increasing in $X_2$, decreasing in $Y_1$ and positive and continuous provided $(Y_1,X_2)\in \Gamma'$.  Finally, 
$$ S_0=\frac{\mu_1Y_1}{\eta_1(\oE-\mu_1Y_1-\Phi_X(Y_1,X_2))} $$
which is positive and $\mmC^1$ in $\Gamma'$.

To sum up: All concentrations at steady state are positive   if only if $(Y_1,X_2)\in \Gamma'$.  To find a final relation between $X_2,Y_1$, we consider the total amount $\oP$. For $Y_1,X_2$ in $\Gamma'$, 
$$\oP= \varphi_P(Y_1,X_2),$$  
where $\varphi_P$ a positive $\mmC^1$ function. Note also that for every fixed value of $Y_1\in \Gamma_1$, $\varphi_P(Y_1,\cdot)$ increases in $X_2$, is well-defined at $X_2=0$ where it is zero and tends to infinity as $\Phi_Y(Y_1,X_2)$ tends to $\oF_2$. Therefore, for $\oP>0$ and any $Y_1\in \Gamma_1$, there exists $0<X_2$ satisfying $\oP=\varphi_P(Y_1,X_2)$ and  $\Phi_Y(Y_1,X_2)<\oF_2$.  Thus, there exists a function $f$ defined on $\Gamma_1$ for which 
$$ X_2=f(Y_1)$$ 
at steady state and $(Y_1,f(Y_1))\in \Gamma'$. Since  $\varphi_P$ is increasing in $X_2$ and decreasing in $Y_1$, the function $f$ is $\mmC^1$ and increasing in $\Gamma:= \Gamma_1$. 

All concentrations at steady state are positive if and only if $Y_1\in \Gamma$.
The concentrations $S_1,X_1,Y_1$ are increasing functions of $Y_1$ and independent of $X_2$. $X_2$ is increasing in $Y_1$ (after substitution of $f$). $S_0$ might be increasing or decreasing in $Y_1$ since $\Phi_X(Y_1,X_2)$ is not increasing in $Y_1$. In any case, if we insert these values into $\oS$, we obtain that the BMSSs are given by a relation
$$ \oS = \varphi(Y_1),$$
where $\varphi$ is a positive $\mmC^1$ function defined on $\Gamma$.
When $Y_1$ tends to $\xi_1$, the function tends to infinity, since $f(Y_1)$ is bounded by $\mu_2\oF_2$. When $Y_1$ tends to zero, then $X_2$ tends to zero as well as  can be seen from the equations $Y_1=\delta_1F_1S_1$ and $X_2=\eta_2S_1P_0$ and the fact that $P_0$ is bounded by $\oP$. Therefore, there is at least one BMSS. Multistationarity can only occur in the form of Figure~\ref{summary}(c), implying the existence of lower and upper bounds on $\oS$, $\oS_{min}<\oS_{max}$, for the existence of multistationarity.

If $S_0$ is increasing in $Y_1$, then so is $\varphi$ and there is exactly one BMSS. The derivative of $S_0$ with respect to $Y_1$ is
$$\frac{\partial S_0}{\partial Y_1} = \frac{\mu_1(\eta_2\oE + \delta_1\eta_3 X_2 (\mu_1\oF_1-\oE))}{\eta_1\eta_2(\oE - \mu_1Y_1)^2}+\frac{\partial S_0}{\partial X_2}f'. $$ 
Since $(\partial S_0/ \partial X_2)f'>0$, 
we have that if (a) $\mu_1\oF_1-\oE\geq 0$ or (b) $\mu_1\oF_1-\oE<0$ and $\eta_2\oE + \delta_1\eta_3 \oF_2 (\mu_1\oF_1-\oE)\geq 0$, then $\frac{\partial S_0}{ \partial Y_1}>0$ and hence  multistationarity cannot occur.

\begin{proposition}\label{motifllemma}  Fix the total amount $\oF_1$. Then there exist total amounts $\oP,\oS, \oF_2, \oE$ satisfying $\mu_1\oF_1-\oE<0$   for which $\varphi'(Y_1)<0$ for some $Y_1$.
\end{proposition}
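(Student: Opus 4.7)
The plan is to use the decomposition
$$\frac{\partial S_0}{\partial Y_1} = \frac{\mu_1(\eta_2\oE + \delta_1\eta_3 X_2 (\mu_1\oF_1-\oE))}{\eta_1\eta_2(\oE - \mu_1Y_1)^2}+\frac{\partial S_0}{\partial X_2}f'(Y_1)$$
displayed just before the proposition, together with the fact that the remaining summands of $\varphi=S_0+S_1+X_1+X_2+Y_1$ are all increasing in $Y_1$. Hence the only way to force $\varphi'(Y_1)<0$ is to drive the explicit first summand of $\partial S_0/\partial Y_1$ sufficiently negative while keeping the positive contributions controlled.

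First I pick $\oE$ strictly larger than $\mu_1\oF_1$, so that the constraint $\mu_1\oF_1-\oE<0$ of the proposition is satisfied. I then pick $\oF_2$ large enough that the admissible region for $X_2$ contains values for which the factor $\eta_2\oE + \delta_1\eta_3 X_2 (\mu_1\oF_1-\oE)$ is negative; this requires $X_2>\eta_2\oE/(\delta_1\eta_3(\oE-\mu_1\oF_1))$, which is feasible since $X_2$ can be taken close to its upper bound $\mu_2\oF_2$. For any $Y_1\in\Gamma$ and any $X_2^\ast$ in this admissible range, setting $\oS:=\varphi(Y_1)$ and $\oP:=\varphi_P(Y_1,X_2^\ast)$ turns the pair $(Y_1,X_2^\ast)$ into a genuine BMSS with $X_2^\ast=f(Y_1)$, and so the four total amounts demanded by the proposition are thereby produced.

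The remaining step is to show that the positive contributions to $\varphi'(Y_1)$ do not compensate the negative leading term. Using the implicit function theorem on $\oP=\varphi_P(Y_1,X_2)$, I would write
$$f'(Y_1)=-\frac{\partial\varphi_P/\partial Y_1}{\partial\varphi_P/\partial X_2},$$
and inspect the explicit rational forms of $P_0,P_1,X_3,Y_2$ to check that, in the regime where $\oF_2$ and $X_2^\ast$ both go to infinity at comparable rates while $Y_1$ is held at a fixed interior point of $\Gamma$, the first summand of $\partial S_0/\partial Y_1$ grows linearly in $X_2^\ast$, whereas $f'(Y_1)$, $\partial S_0/\partial X_2$, and the derivatives of $S_1$ and $X_1$ stay uniformly bounded. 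For $\oF_2$ large enough the negative term dominates and $\varphi'(Y_1)<0$.

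The main obstacle is precisely this asymptotic control of $f'(Y_1)$: since $f$ is defined only implicitly and $\varphi_P$ depends on $X_3$, itself a non-trivial rational function of $(Y_1,X_2)$, the growth rates of $\partial\varphi_P/\partial X_2$ and $\partial\varphi_P/\partial Y_1$ as $\oF_2\to\infty$ must be compared carefully. A cleaner alternative, in the spirit of the numerical example given for Motif~(k), is to exhibit an explicit set of parameter values and verify $\varphi'(Y_1)<0$ by direct substitution; since the proposition is a pure existence statement, one concrete instance suffices.
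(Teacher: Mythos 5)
Your overall strategy coincides with the paper's: hold $Y_1$ and $\oF_1$ fixed, let $X_2$ and $\oF_2$ grow, use the explicit first summand of $\partial S_0/\partial Y_1$ as the negative driver, and control $f'=-(\partial\varphi_P/\partial Y_1)/(\partial\varphi_P/\partial X_2)$ via the implicit function theorem. The gap is the step you yourself flag as ``the main obstacle'': the claim that $f'$ stays bounded when $\oF_2$ and $X_2^\ast$ go to infinity ``at comparable rates'' is precisely where the proof lives, and that phrasing does not secure it. Boundedness of $f'$ hinges on how $\oF_2-\Phi_Y(Y_1,X_2)$ behaves, because $\partial P_1/\partial X_2$ and $-\partial P_1/\partial Y_1$ both carry the factor $\oF_2/(\oF_2-\Phi_Y)^2$. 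If $\oF_2$ grows even slightly faster than $\mu_2^{-1}X_2+\mu_3^{-1}\Phi_X$, so that $\oF_2-\Phi_Y\to\infty$, these terms become negligible; the denominator $\partial\varphi_P/\partial X_2$ then stays bounded while the numerator still contains $-\partial P_0/\partial Y_1=\delta_1\oF_1X_2/(\eta_2Y_1^2)$, so $f'$ grows linearly in $X_2$. Since $f'$ enters $\varphi'$ with the positive coefficient $1+\mu_1\delta_1\eta_3(\oF_1-Y_1)/(\eta_1\eta_2(\oE-\mu_1Y_1))$, the positive contribution then competes at the same order as your negative leading term and the sign of the limit is no longer determined (for $\oE$ large and fixed it is in fact positive). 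So ``comparable rates'' is not a sufficient specification, and the decisive comparison is left undone.

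The paper closes exactly this gap by pinning the coupling down: it keeps $\oF_2-\Phi_Y=K$ constant, i.e.\ $\oF_2=K+\mu_2^{-1}X_2+\mu_3^{-1}\Phi_X$, and additionally lets $\oE\to\infty$ with $X_2=\oE^{1+p}$, $p>0$. Then both $\partial\varphi_P/\partial Y_1$ and $\partial\varphi_P/\partial X_2$ grow linearly in $X_2$, so $f'$ converges to the explicit constant $K^2\delta_1\delta_2\mu_2^2\oF_1/(\eta_2Y_1^2)+\mu_1\mu_2/\mu_3$, while the leading negative term behaves like $-\mu_1\delta_1\eta_3X_2/(\eta_1\eta_2\oE)=-\mu_1\delta_1\eta_3\oE^{p}/(\eta_1\eta_2)\to-\infty$, forcing $\varphi'<0$. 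Your fallback --- exhibiting a single numerical instance --- would indeed suffice for an existence statement, but you do not actually produce one, so as written the argument is incomplete; to repair it along your own lines you must state the coupling between $\oF_2$ and $X_2^\ast$ explicitly and verify that it keeps $\oF_2-\Phi_Y$ bounded.
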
 

It follows from the discussion above the proposition that   $$\eta_2\oE + \delta_1\eta_3 \oF_2 (\mu_1\oF_1-\oE)<0$$ for multistationarity to occur. In fact, $\oF_2$ and $\oE$ are chosen so large that this condition is fulfilled. The proof of this proposition is provided in  Appendix A. 

\begin{result}[Motif (l)]\label{result_cascade3} 
Consider a cascade  with  different phosphatases acting in the two layers and  where  the kinase of the first layer also acts in the second layer. Assume that the  total amounts $\oE,\oF_1,\oF_2,\oS,\oP$ are positive. 
\begin{enumerate}[(i)]
\item If (a) $\mu_1\oF_1-\oE\geq 0$ or (b) $\mu_1\oF_1-\oE<0$ and $\eta_2\oE + \delta_1\eta_3 \oF_2 (\mu_1\oF_1-\oE)\geq 0$, then   the system has exactly one BMSS.
\item For any  $\oF_1$ and $\oE,\oF_2$ large  and satisfying $\mu_1\oF_1-\oE<0$ and $\eta_2\oE + \delta_1\eta_3 \oF_2 (\mu_1\oF_1-\oE)< 0$, there exist values $\oP,\oS$ for which the system displays multistationarity. Further, in this case, there exists $\oS_{min}<\oS_{max}$ for which multiple steady states occur for $\oS_{min} \leq \oS\leq \oS_{max}$.
\end{enumerate}
\end{result}

\section{Stability analysis}
\label{unstable}
We provide here the mathematical details of the stability analysis. We prove equation [3] in the main text and show that for the motifs exhibiting multistationarity, the equilibrium points for which $\varphi'<0$ are unstable.

\subsection{The determinant of the Jacobian}
We prove here equation [3] of the main text. 
For any $1\leq j\leq n$, we define the projection $\pi^{(j)}$ of $\R^n$ to $\R^{n-1}$ that removes  the $j$-th coordinate by
\begin{eqnarray*}
\pi^{(j)}: \R^n & \rightarrow  & \R^{n-1} \\
(x_1,\dots,x_n) & \mapsto & (x_1,\dots,\widehat{x_{j}},\dots,x_n).
\end{eqnarray*}
For simplicity, let $x^{(j)}=\pi^{(j)}(x)$ for any $x\in \R^n$ and let  $\Omega^{(j)}= \pi^{(j)}(\Omega)\subseteq \R^{n-1}$ for any  open set $\Omega\subset \R^n$. Note that the latter is also an open set (projection maps are open maps).

For  a differentiable function $f=(f_1,\dots,f_n)\colon\Omega\subseteq \R^n\rightarrow \R^n$, we denote by $J_{f}(x)$ the Jacobian of $f$ at $x\in \Omega$.  The $(i,j)$ entry of   $J_f(x)$  is $\partial f_i/\partial x_j(x)$. 

In the next proposition, $\Omega$ is an open neighborhood of $z$, suitably chosen.

\begin{proposition}\label{reorder}\label{jacobian_det}
Let $f=(f_1,\dots,f_n)\colon\Omega\subseteq \R^n\rightarrow \R^n$ be a differentiable function defined on an open set $\Omega$ such that there is $z\in \Omega$ with $f(z)=0$. Assume that $x_j$ can be eliminated from the equation $f_i=0$; that is, there exists a differentiable function $\psi\colon\Omega^{(j)}\subseteq \R^{n-1}\rightarrow \R$ such that $x_j=\psi(x^{(j)})$ on $\{x\in\Omega| f_i(x)=0\}$.
Define $\bar{f} \colon \Omega^{(j)} \rightarrow \R^{n-1}$ by
$\bar{f}_k(x) = f_k(x_1,\dots,x_{j-1},\psi(x),x_j,\dots,x_{n-1})$ for all $k\neq i$.
Then, the determinant of the Jacobian of $f$ at  $z$ satisfies
\begin{equation}\label{jacobian}
 (-1)^{i+j}\frac{\partial f_i}{\partial x_j}(z)\ \det(J_{\bar{f}}(z^{(j)})) = \det(J_f(z)). 
 \end{equation}

\end{proposition}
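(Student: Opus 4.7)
The plan is to reduce $\det(J_f(z))$ to an $(n-1)\times(n-1)$ determinant through one round of elementary column operations, and then to recognise the reduced matrix as $J_{\bar{f}}(z^{(j)})$. The elimination hypothesis means the identity
\[
f_i\bigl(x_1,\dots,x_{j-1},\psi(x^{(j)}),x_{j+1},\dots,x_n\bigr)\equiv 0
\]
holds on a neighbourhood of $z^{(j)}$. Differentiating implicitly with respect to $x_l$ for each $l\neq j$ gives
\[
\frac{\partial \psi}{\partial x_l}(z^{(j)})\;=\;-\,\frac{\partial f_i/\partial x_l(z)}{\partial f_i/\partial x_j(z)},
\]
the denominator being nonzero (this is the implicit-function-theorem hypothesis tacitly needed for $\psi$ to exist). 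Applying the chain rule to $\bar{f}_k$ for $k\neq i$ yields, for $l\neq j$,
\[
\frac{\partial\bar{f}_k}{\partial x_l}(z^{(j)})\;=\;\frac{\partial f_k}{\partial x_l}(z)+\frac{\partial f_k}{\partial x_j}(z)\cdot\frac{\partial\psi}{\partial x_l}(z^{(j)}).
\]

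I would then perform the column operation of adding $(\partial\psi/\partial x_l)(z^{(j)})$ times column $j$ of $J_f(z)$ to column $l$, for every $l\neq j$; this leaves the determinant invariant. After these additions the block outside row $i$ and column $j$ coincides with $J_{\bar{f}}(z^{(j)})$ by the chain-rule identity, while the entries of row $i$ outside column $j$ vanish by the implicit-differentiation formula. Expanding the determinant along row $i$, whose only surviving entry is $\partial f_i/\partial x_j(z)$ in column $j$, produces the factor $(-1)^{i+j}\,\partial f_i/\partial x_j(z)$ multiplied by the determinant of the remaining minor, which equals $\det(J_{\bar{f}}(z^{(j)}))$. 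This gives the claimed identity.

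The only real delicacy is the bookkeeping: one must check that deletion of row $i$ and column $j$ from the modified Jacobian, together with the inherited ordering of the indices $k\neq i$ and $l\neq j$, matches the indexing of rows and columns of $J_{\bar{f}}(z^{(j)})$ induced by the projection $\pi^{(j)}$. Because both orderings preserve the inherited order on indices, no additional sign arises beyond the $(-1)^{i+j}$ coming from the cofactor expansion, and the argument closes.
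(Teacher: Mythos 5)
Your argument is correct and reaches the identity by a route that is organized differently from the paper's. You modify $J_f(z)$ by column operations (adding $\frac{\partial\psi}{\partial x_l}(z^{(j)})$ times column $j$ to column $l$ for each $l\neq j$), which zeroes out row $i$ except in column $j$ and turns the complementary block into $J_{\bar f}(z^{(j)})$, and then finish with a single cofactor expansion along row $i$. The paper instead reduces to $i=j=1$ and works in the opposite direction: it substitutes the chain-rule expression into the columns of $J_{\bar f}(z^{(1)})$, expands that determinant multilinearly, kills every cross term containing two or more copies of the rescaled first column by linear dependence, and matches the surviving terms with the Laplace expansion of $\det(J_f(z))$ along the first row. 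Both proofs rest on exactly the same two identities (implicit differentiation of $f_i$ composed with $\psi$, and the chain rule for $\bar f_k$); your column-operation formulation avoids the combinatorial bookkeeping of the multilinear expansion and is arguably the cleaner of the two.

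One loose end: you justify dividing by $\frac{\partial f_i}{\partial x_j}(z)$ by saying its nonvanishing is ``tacitly needed for $\psi$ to exist.'' That is not quite right. The proposition only hypothesizes that a differentiable $\psi$ exists, and such a $\psi$ can exist at points where $\frac{\partial f_i}{\partial x_j}$ vanishes (nonvanishing is the sufficient condition supplied by the implicit function theorem, not a necessary one). The paper treats this case separately: differentiating the identity $f_i(x_1,\dots,\psi(x^{(j)}),\dots,x_n)\equiv 0$ gives $\frac{\partial f_i}{\partial x_j}(z)\frac{\partial\psi}{\partial x_l}(z^{(j)})+\frac{\partial f_i}{\partial x_l}(z)=0$, so if $\frac{\partial f_i}{\partial x_j}(z)=0$ then the entire $i$-th row of $J_f(z)$ vanishes, hence $\det(J_f(z))=0$ and both sides of the claimed identity are zero. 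Adding that one observation closes the gap.
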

\begin{proof}
It is sufficient to show that the proposition holds for $i=j=1$. For other values of $i$ and $j$ the result is obtained by 
reorganizing rows and columns and keeping track of the sign of the determinant.
For $i=j=1$, the proposition states that
$$
\frac{\partial f_1}{\partial x_1}(z)\ \det(J_{\bar{f}}(z^{(1)})) = \det(J_f(z)). 
$$
where $z^{(1)}=(z_2,\dots,z_n)$ and $z=(z_1,\dots,z_n)$.

If the Jacobian is written in column vector notation, we have that
$$ J_f(z) = (D_1f(z),\dots,D_nf(z))  \quad \textrm{where } D_kf = \Big(\frac{\partial f_1}{\partial x_k},\dots,\frac{\partial f_n}{\partial x_k}\Big)^T.  $$
That is, $D_kf$ is the vector of the partial derivatives of $f$ with respect to $x_k$.

By assumption, $z=(\psi(z^{(1)}),z_2,\dots,z_n)\in \R^n$. Observe that if $(\partial f_1/\partial x_1)(z)= 0$ then from $\frac{\partial f_1}{\partial x_1}(z)\frac{\partial \psi}{\partial x_k}(z^{(1)}) +\frac{\partial f_1}{\partial x_k}(z)=0$, we have that $(\partial f_1/\partial x_k)(z)= 0$ for all $k$. Consequently, $\det(J_f(z))=0$ (the matrix has one row of zeroes) and hence the proposition holds. 

Hence, we can assume that $(\partial f_1/\partial x_1)(z)\neq 0$. By implicit differentiation,
\begin{equation}\label{psi_jac}
\frac{\partial \psi}{\partial x_k}(z^{(1)}) =-\frac{\partial f_1/\partial x_k}{\partial f_1/\partial x_1}(z).
\end{equation}
Additionally, by the chain rule we have ($k\not=1)$
$$\frac{\partial \bar{f}_l}{\partial x_k}(z^{(1)}) = \frac{\partial f_l}{\partial x_k}(z) + \frac{\partial f_l}{\partial x_1}(z)\frac{\partial \psi}{\partial x_k}(z^{(1)}).$$
Let $f^{(1)} = (f_2,\dots,f_n)$ be the projection of $f$ onto the last $n-1$ coordinates. 
From the equation above, it follows that 
\begin{align*}
J_{\bar{f}}(z^{(1)}) &= (D_2\bar{f},\dots,D_n\bar{f})(z^{(1)}) =(D_2f^{(1)} ,\dots,D_nf^{(1)} )(z) \\ & \qquad \phantom{ (D_2\bar{f},\dots,D_n\bar{f})(z^{(1)}) }+ \Big(D_1f^{(1)}(z)\frac{\partial \psi}{\partial x_2}(z^{(1)}),\dots,D_1f^{(1)}(z)\frac{\partial \psi}{\partial x_n}(z^{(1)})\Big),
\end{align*}
where matrices are written as column vectors.

Note that $\frac{\partial \psi}{\partial x_k}(z^{(1)})$ is  a scalar and  $D_1f^{(1)}(z)\frac{\partial \psi}{\partial x_k}(z^{(1)})$ is the vector with $l$-th component $\frac{\partial f_l}{\partial x_1}(z)\frac{\partial \psi}{\partial x_k}(z^{(1)})$. Further, using the multilinear expansion of a determinant, we obtain
\begin{align*}
\det( J_{\bar{f}}(z^{(1)})) &= \det((D_2f^{(1)} ,\dots,D_nf^{(1)} )(z)) \\ &+ \sum_{k=2}^n
 \det(D_2f^{(1)}(z),\dots, D_1f^{(1)}(z)\frac{\partial \psi}{\partial x_k}(z^{(1)}),\dots,D_nf^{(1)}(z)) \\
 & + \sum_{\substack{\{k_1,\dots,k_l\}\subset \{2,\dots,n\}\\ l>1}}
\det(a_2,\dots,a_n),
\end{align*}
where $a_s= D_sf^{(1)}(z)$ for $s\neq k_1,\dots,k_l$ and $a_s= D_1f^{(1)}(z)\frac{\partial \psi}{\partial x_s}(z^{(1)})$ otherwise. In each of these summands, there are at least two terms of the second form, say $D_1f^{(1)}(z)\frac{\partial \psi}{\partial x_k}(z^{(1)})$ and $D_1f^{(1)}(z)\frac{\partial \psi}{\partial x_m}(z^{(1)})$ for some $k\neq m$. If  the two columns are non-zero, they are linearly dependent and the determinant of each of the matrices $(a_2,\dots,a_n)$ is zero. Further, note that 
\begin{align*}
& \det(D_2f^{(1)}(z),\dots, D_1f^{(1)}(z)\frac{\partial \psi}{\partial x_k}(z^{(1)}),\dots,D_nf^{(1)}(z))  \\ & \phantom{holaholaholahola}=\frac{\partial \psi}{\partial x_k}(z^{(1)}) \det((D_2f^{(1)},\dots, D_1f^{(1)},\dots,D_nf^{(1)})(z))  \\
& \phantom{holaholaholahola} =(-1)^{k-2}\frac{\partial \psi}{\partial x_k}(z^{(1)}) \det((D_1f^{(1)},\dots,\widehat{D_kf^{(1)}},\dots,D_nf^{(1)})(z))
\end{align*}

Finally, using \eqref{psi_jac},  we obtain
\begin{align*}
 \frac{\partial f_1}{\partial x_1}(z)\det(J_{\bar{f}}(z^{(1)})) &= \sum_{k=1}^n (-1)^{k-1}\frac{\partial f_1}{\partial x_k}(z)  \det(D_1f^{(1)},\dots,\widehat{D_kf^{(1)}},\dots,D_nf^{(1)})(z) \\ & = \det(J_f(z)) \end{align*}
by considering the  development of the determinant of $J_f(z)$ along the first row.
\end{proof}

\paragraph{Application.}
Let a dynamical system in $\Omega\subseteq \R^n$ be given
$$\dot{x}=f(x) $$
with $x=(x_1,\dots,x_n)$ and $f=(f_1,\dots,f_n)$. Let $z=(z_1,\dots,z_n)$ be an equilibrium point, i.e., $f(z_1,\dots,z_n)=0$. Let $\det(J_f(z))$ be the determinant of the Jacobian of $f$ at $z$. We make the following observation:
\begin{itemize}
\item If $n$ is odd and $\det(J_f(z))> 0$, then $z$ is unstable.
\item If $n$ is even and $\det(J_f(z)) < 0$, then $z$ is unstable.
\end{itemize}
Indeed, if $z$ is  (asymptotically) stable and $\det(J_f(z))\not=0$, all  eigenvalues have negative real parts. Since $\det(J_f(z))$ is a real number (it is the determinant of a real matrix), the complex eigenvalues come in pairs of conjugates $a+bi, a-bi$ and the product of the eigenvalues is a positive number. If $n$ is odd and all eigenvalues have negative real parts, their product must be negative and hence the determinant of the Jacobian, which equals the product of the eigenvalues, must be negative. If $n$ is even and $z$ (asymptotically) stable, then the product of the eigenvalues must be positive.

An equilibrium point satisfies $f(z_1,\dots,z_n)=0$. If elimination of variables can be applied then we can use Proposition \ref{jacobian_det} to track the sign of the determinant  and potentially detect instability.

\paragraph{Chemical reaction networks.}
Consider any of the motifs, or in general, any chemical reaction network with conservation laws. 
The conserved total amounts imply that the dynamics of the associated dynamical system takes place in a fixed subspace of $\R^n$. In general, we have a dynamical system
$$\dot{x}=f(x)   $$
and a series of (independent) conservation laws
\begin{equation}\label{cons}
 g_i(x) - c_i=0, \qquad i=1\dots,k
\end{equation}
where $g_i$ a linear function of $x$ satisfying $g_i(\dot{x})=0$ and $c_i\in \R$ (these correspond to the total amounts). By independence we mean that the rank of this linear system is $k$. The conservation laws do not depend on the rate constants and $g_i$ does not depend on the total amounts.

The existence of conservation laws implies that the determinant of the Jacobian of $f$ at any point $x$ is zero, since the matrix has linear relations among the rows. Therefore, stability of equilibrium points cannot be analyzed directly, but need to be considered inside the stoichiometry class  they belong to. 

Since \eqref{cons} is a linear system of rank $k$, Gauss elimination allows the elimination of $k$ variables.
For simplicity, we can rename the variables such that the eliminated variables are $x_1,\dots,x_k$ and those that remain are $x_{k+1},...,x_n$. Apply the same renaming to the functions $f_i$, such that if $x_l$ is now variable $x_j$, function $f_l$ is labeled $f_j$. By elimination, there exist (polynomial) functions 
$$x_i= \psi_i(x_{k+1},\dots,x_n,c), \quad i=1,\dots,k$$
such that  $f_i (\psi_1,\dots,\psi_k,x_{k+1},\dots,x_n)=0$. Here $c=(c_1,\dots,c_k)$ is the vector of initial total amounts.

For a fixed stoichiometry class $c=(c_1,\dots,c_k)$ and $\bar{x}=(x_{k+1},\dots,x_n)$,  let 
$$\bar{f}_{j}(\bar{x},c) = f_j (\psi_1(\bar{x},c),\dots,\psi_k(\bar{x},c),x_{k+1},\dots,x_n),\quad j=k+1,\dots,n. $$
To investigate stability of an equilibrium point $(z_1,\dots,z_n)$ belonging to the stoichiometry class $c$ with  $z_i-\psi_i(z_{k+1},\dots,z_n,c)=0$, we consider the the eigenvalues of the Jacobian  of the function
$$(\bar{f}_{k+1}(\bar{z},c),\dots, \bar{f}_{n}(\bar{z},c)),$$
evaluated at $\bar{z}=(z_{k+1},\dots,z_n)$. This function corresponds to the system called \emph{reduced} in the main text.

By Proposition \ref{jacobian_det} the sign of the determinant of this Jacobian at $\bar{z}$ is exactly the sign
of the determinant of the Jacobian of the system
$$
S_f=\begin{cases}
x_i-\psi_i(\bar{x},c)=0 & i=1,\dots,k \\
f_i(x)=0 & i=k+1,\dots,n
\end{cases}
$$
evaluated at $z$. Indeed, the process leading from this system to the reduced system consists of successive eliminations with $i=j=1$ and the derivative of the eliminated function corresponding  to the eliminated variable is $1$ (and thus positive).

Note that the reduced system has $n-k$ variables. Let $J(S_f)$ denote the Jacobian of $S_f$ and let $z$ be an equilibrium point with total amounts $c$. We conclude that:

\begin{result}\label{criterion} With the notation introduced above:
\begin{itemize}
\item If $n-k$ is odd and $\det(J(S_f)(z))> 0$, then $z$ is unstable.
\item If $n-k$ is even and $\det(J(S_f)(z))< 0$, then $z$ is unstable.
\end{itemize}
\end{result}

\subsection{Instability in the multistationary motifs}

Here we show  how to apply Result \ref{criterion} to each of the multistationary motifs: (f), (g), (i), (k), (l). Using Proposition \ref{jacobian_det}, we find for  all motifs but Motif (l)  that
$$ \sign(\varphi'(z_i))= \sign (\det(J(S_f)(z))),$$ 
where $z_i$ is the variable of $\varphi$ (typically an intermediate complex $Y$) and $n-k$ is even. For Motif (l), $n-k$ is odd and there is a change of sign. Hence, using Result \ref{criterion},  the steady state $z$ is unstable whenever  $\varphi$ is decreasing.
In particular, let $\epsilon=\pm 1$ (i.e.~either $\epsilon=+1$ or $\epsilon=-1$) such that 
$$\epsilon \cdot \sign(\varphi'(z_i))=\sign (\det(J(S_f)(z))).$$ 

\begin{obs}\label{signchanges}
The sign of the determinant of a matrix remains unchanged if a linear combination of rows are added to another row (in fact the determinant does not change). If we multiply a row by a negative number, the determinant changes sign. For example, if the equation $ -(b^E + c^E)X + a^E E S_{0}=0$ is  transformed into $X-\eta E S_0=0$, then the sign of the determinant changes. If two such equations are transformed in this way, then the determinant remains with unchanged sign. In the sequel, the sign remains unchanged if the number of transformations of this type is even, or equivalently, if the number of constants $\delta_*,\eta_*$ is even. 
\end{obs}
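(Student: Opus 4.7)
The plan is to verify each of the three assertions in the Remark using only elementary linear algebra, and then to justify the bookkeeping corollary at the end. First, I would recall the two standard facts about the effect of elementary row operations on determinants: adding a scalar multiple of one row to another row leaves the determinant unchanged (the associated elementary matrix has determinant $1$), while multiplying a single row by a scalar $c$ multiplies the determinant by $c$. Both follow directly from multilinearity of the determinant in the rows. These two facts immediately give the first two sentences of the statement.

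Next, I would treat the illustrative example. The transformation $-(b^E+c^E)X + a^E E S_0 = 0 \;\longmapsto\; X - \eta E S_0 = 0$ with $\eta = a^E/(b^E+c^E)$ is nothing but multiplication of the given equation by $-1/(b^E+c^E)$. Since $b^E, c^E > 0$, this scalar is strictly negative. At the level of the Jacobian of the system $S_f$, this corresponds to multiplying the corresponding row by the same negative scalar, and by the multiplicativity fact above the sign of $\det(J(S_f)(z))$ flips. Performing the analogous operation on a second equation multiplies the determinant by a second negative scalar, restoring the original sign; inductively, performing it $N$ times multiplies the determinant by $(-1)^N$ times a strictly positive quantity.

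For the final bookkeeping assertion, I would observe that each rescaling of this type introduces exactly one new constant of the form $\eta_*$ or $\delta_*$ (the reciprocal Michaelis--Menten constants), and conversely every occurrence of such a constant in the transformed system is the outcome of exactly one such rescaling. Hence the number of sign flips equals the number of $\delta_*, \eta_*$ constants present, and the sign of the determinant is preserved exactly when that count is even. Combined with the first part, any subsequent elementary row operations of the additive type (used, for instance, to pass from the raw differential equations to the form $X-\mu Y=0$ by cancelling $c$-terms between equations) contribute no additional sign changes, so only the count of the rescalings matters.

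There is no substantive obstacle here: the Remark is a bookkeeping device recording that the sign transformations applied silently in passing from the original $\dot{x}=0$ system to the reduced system $f(x)=0$ used throughout the paper affect the determinant only through an overall factor $(-1)^N$ with $N$ the parity of the $\delta_*, \eta_*$ count. The only care required is to ensure that the additive row operations (used to eliminate the $c$-terms and thereby produce the compact equations $X=\mu Y$, etc.) are applied \emph{before} any rescaling is counted, so that the additive steps do not spuriously alter the parity. This is automatic in the normal-form procedure used in Section \ref{unstable} and in the subsequent motif-by-motif analyses.
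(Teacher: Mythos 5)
Your proposal is correct and takes essentially the same route the paper intends: the Remark is stated without formal proof precisely because it reduces to the two standard facts about elementary row operations that you invoke, together with the observation that each normalization $-(b+c)X+aES=0\mapsto X-\eta ES=0$ is a rescaling by the negative scalar $-1/(b+c)$ while the remaining manipulations (adding equations to cancel $c$-terms, and dividing by the positive $c^E$ to produce $X-\mu Y=0$) leave the sign untouched. Your bookkeeping identification of the number of sign flips with the count of $\eta_*,\delta_*$ constants is exactly the convention the paper uses, e.g.\ in the Motif (l) analysis where an odd count (five constants) forces the sign reversal.
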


In the sequel, elimination is tracked using a table as in the main text and $\epsilon=\prod_{l=1}^{n-1} \epsilon_l$. The column `Behavior' in the table shows $(i,j,s)$  where $i$, respectively $j$, are the indices of the equation, respectively the variable, that iteratively are being eliminated and $s$  indicates whether $\bar{f}_i$ ($f_i$ after substitution of the previous eliminated variables) is increasing ($s$ is $+$) or decreasing ($s$ is $-$) as function of $x_j$. 

Motif (f) is covered in the main text and it is thus skipped here.

\paragraph{Motif (g).}
Consider the variables in the following order
$$(x_1,x_2,x_3,x_4,x_5,x_6,x_7,x_8,x_9)=(E,F,S_0,X_1,X_2,S_1,S_2,Y_2,Y_1).$$
The variables of the reduced system are $X_1,X_2,S_1,S_2,Y_2,Y_1$ and the variables eliminated using the conservation laws are $E,F,S_0$. The sign of the determinant of the Jacobian of the system $S_f$ is the same as the sign of the determinant of the Jacobian of the system

\vspace{-0.5cm}
\begin{align*}
f_1(x)  &= E + X_1+X_2 - \oE  &  f_5(x) &=  X_2 - \eta_2 E S_{1}  \\
f_2(x) &= F + Y_1 + Y_2 -\oF & f_6(x) &= X_1-\mu_1 Y_1  \\  
f_3(x)  &= S_0 + S_1 + S_2+X_1+X_2 + Y_1+Y_2 - \oS  & f_7(x) &=X_2- \mu_2 Y_2\\ 
 f_4(x) &= X_1 - \eta_1 E S_{0} & f_8(x) &=  Y_2- \delta_2 F S_2 \\ 
 & & f_{9}(x) &= Y_1- \delta_1 F S_1.
\end{align*}
With the notation introduced above, $\bar{x}=(X_1,X_2,S_1,S_2,Y_2,Y_1)$ and
$$\psi_1(\bar{x})=- X_1- X_2 + \oE,\quad \psi_2(\bar{x})= - Y_1 - Y_2 +\oF,\quad \psi_3(\bar{x})= -S_1 - S_2-X_1-X_2 - Y_1-Y_2 + \oS. $$

Here $n=9$ and $k=3$, such that $n-k=6$ is even. The function $\varphi$ is equation $f_3$ with all variables but $x_6=S_1$ eliminated. The eliminations we performed in Section \ref{motiftwosite} (Motif (g)) are summarized in the following table:

\begin{center}
\renewcommand{\arraystretch}{1.3}
\setlength{\tabcolsep}{3pt}
\begin{tabular}{cccc||cccc}
\hline
$l$ & Elimination &  Behavior & $\epsilon_l$ & l & Elimination &  Behavior & $\epsilon_l$\\ \hline
1 & $(f_1,E)$  &  $(1,1,+)$ & $+$  &6 & $(f_4,S_0)$ & $(2,1,-)$ & +  \\
2 & $(f_2,F)$  & $(1,1,+)$ & +  & 7 & $(f_8,S_2)$ &  $(3,2,-)$ & +  \\
4 & $(f_6,X_1)$  &  $(4,2,+)$ & +  & 8 & $(f_9,Y_1)$ & $(3,3,+)$ & +  \\ 
5 & $(f_7,X_2)$  & $(4,2,+)$ & + & 9 & $(f_{5},Y_2)$ & $(2,2,+)$ & + 
 \\ \hline
\end{tabular}
\end{center}

Since $\epsilon=1$, we conclude that $ \sign(\varphi'(z_6))= \sign (\det(J(S_f)(z)))$
for any equilibrium point $z$ with $z_6=S_1$. 

\paragraph{Motif (i).}
Consider the variables in the following order
$$(x_1,x_2,x_3,x_4,x_5,x_6,x_7,x_8,x_9,x_{10})=(E,F,S_0,P_0,X_1,X_2,S_1,P_1,Y_1,Y_2).$$
The variables of the reduced system are $X_1,X_2,S_1,P_1,Y_1,Y_2$ and the variables eliminated using the conservation laws are $E,F,S_0,P_0$. The sign of the determinant of the Jacobian of the system $S_f$ is the same as the sign of the determinant of the Jacobian of the system
\begin{align*}
f_1(x)  &= E + X_1+X_2 - \oE & f_5(x) &=  X_1 - \eta_1 E S_{0} & f_8(x) &= X_2- \mu_2 Y_2 \\ 
f_2(x) &= F + Y_1 + Y_2 -\oF  &  f_6(x) &=  X_2 - \eta_2 E P_0   & f_9(x) &= Y_1- \delta_1 F S_1\\
f_3(x) &=  S_0 + S_1 + X_1+ Y_1 - \oS  &  f_7(x) &= X_1-\mu_1 Y_1 &  f_{10}(x)&= Y_2- \delta_2 F P_1 \\  f_4(x) &=P_0+P_1+X_2+Y_2-\oP 
\end{align*}
Let $\bar{x}=(X_1,X_2,S_1,P_1,Y_1,Y_2)$. Then
\begin{align*}
\psi_1(\bar{x}) & =- X_1- X_2 + \oE    & \psi_3(\bar{x}) & = -S_1 -X_1- Y_1+ \oS \\  \psi_2(\bar{x}) & = - Y_1 - Y_2 +\oF  & \psi_4(\bar{x}) & =-P_1-X_2-Y_2+\oP. 
\end{align*}
Here $n=10$ and $k=4$, such that $n-k=6$ is even. The function $\varphi$ is equation $f_4$ with all variables but $x_{10}=Y_2$ eliminated. The eliminations we performed in Section~\ref{motiftwodiff} (Motif (i)) are summarized in the following table:

\begin{center}
\renewcommand{\arraystretch}{1.3}
\setlength{\tabcolsep}{3pt}
\begin{tabular}{cccc||cccc}
\hline
$l$ & Elimination &  Behavior & $\epsilon_l$ & l & Elimination &  Behavior & $\epsilon_l$\\ \hline
1 & $(f_1,E)$  &  $(1,1,+)$ & +  & 6 & $(f_{10},P_1)$ & $(5,3,-)$ & $-$  \\
2 & $(f_2,F)$  & $(1,1,+)$ & +  & 7 & $(f_5,S_0)$ &  $(3,1,-)$ & $-$  \\
3 & $(f_7,X_1)$ &  $(5,3,+)$ & +  & 8 & $(f_6,P_0)$ & $(3,1,-)$ & $-$  \\ 
4 & $(f_8,X_2)$  &  $(5,3,+)$ & +  & 9 & $(f_{3},Y_1)$ & $(1,1, +)$ & + \\
5 & $(f_9,S_1)$  & $(5,3,-)$ & $-$ \\ \hline
\end{tabular}\end{center}
Note that the last elimination $f_3$ corresponds to $\varphi_1(Y_1,Y_2)-\oS=0$, which is increasing in $Y_1$.
Since $\epsilon=\prod_{l=1}^{9} \epsilon_l=1$, we conclude that $ \sign(\varphi'(z_{10}))= \sign (\det(J(S_f)(z)))$
for any equilibrium point $z$ and $z_{10}=Y_2$. 

\paragraph{Motif (k).}
Consider the variables in the following order
$$(x_1,x_2,x_3,x_4,x_5,x_6,x_7,x_8,x_9,x_{10})=(E,F,S_0,P_0,X_1,X_2,S_1,P_1,Y_1,Y_2).$$
The variables of the reduced system are $X_1,X_2,S_1,P_1,Y_1,Y_2$ and the variables eliminated using the conservation laws are $E,F,S_0,P_0$. The sign of the determinant of the Jacobian of the system $S_f$ is the same as the sign of the determinant of the Jacobian of the system
\begin{align*}
f_1(x)  &= E + X_1 - \oE  & f_5(x) &=  X_1 - \eta_1 E S_{0} &  f_8(x) &= X_2- \mu_2 Y_2  \\ f_2(x) &= F + Y_1 + Y_2 -\oF & f_6(x) &=  X_2 - \eta_2 S_1 P_0 & f_9(x) &= Y_1- \delta_1 F S_1     \\
f_3(x) &=  S_0 + S_1 + X_1+ Y_1 +X_2 - \oS  &  f_7(x) &=  X_1-\mu_1 Y_1  &  f_{10}(x)&= Y_2- \delta_2 F P_1.  \\  f_4(x) &=  P_0+P_1+X_2+Y_2-\oP  
\end{align*}
If $\bar{x}=(X_1,X_2,S_1,P_1,Y_1,Y_2)$ then
\begin{align*}
\psi_1(\bar{x}) & =- X_1 + \oE    & \psi_3(\bar{x}) & = -S_1 -X_1- Y_1-X_2+ \oS \\  \psi_2(\bar{x}) & = - Y_1 - Y_2 +\oF  & \psi_4(\bar{x}) & =-P_1-X_2-Y_2+\oP. 
\end{align*}
Here $n=10$ and $k=4$, such that $n-k=6$ is even. The function $\varphi$ is equation $f_4$ with all variables but $x_{10}=Y_2$ eliminated. The eliminations we performed in Section \ref{motifcascade} (Motif (k)) are summarized in the following table:

\begin{center}
\renewcommand{\arraystretch}{1.3}
\setlength{\tabcolsep}{3pt}
\begin{tabular}{cccc||cccc}
\hline
$l$ & Elimination &  Behavior & $\epsilon_l$ & l & Elimination &  Behavior & $\epsilon_l$\\ \hline
1 & $(f_1,E)$  &  $(1,1,+)$ & +  & 6 & $(f_{10},P_1)$ & $(5,3,-)$ &$-$  \\
2 & $(f_2,F)$  & $(1,1,+)$ & +  & 7 & $(f_5,S_0)$ &  $(3,1,-)$ & $-$  \\
3 & $(f_7,X_1)$ &  $(5,3,+)$ & +  & 8 & $(f_6,P_0)$ & $(3,1,-)$ & $-$  \\ 
4 & $(f_8,X_2)$  &  $(5,3,+)$ & +  & 9 & $(f_{3},Y_1)$ & $(1,1, +)$ & + \\
5 & $(f_9,S_1)$  & $(5,3,-)$ & $-$ \\ \hline
\end{tabular}\end{center}
Note that the last elimination $f_3$ corresponds to $\Phi(Y_1,Y_2)-\oS=0$.
Since $\epsilon=\prod_{l=1}^{9} \epsilon_l=1$, we conclude that $ \sign(\varphi'(z_{10}))= \sign (\det(J(S_f)(z)))$
for any equilibrium point $z$ and $z_{10}=Y_2$. 

\paragraph{Motif (l).}
Consider the following order of the variables
$$(x_1,x_2,x_3,x_4,x_5,x_6,x_7,x_8,x_9,x_{10},x_{11},x_{12})=(E,F_1,F_2,S_0,P_0,X_1,X_2,X_3,S_1,P_1,Y_1,Y_2).$$
The variables of the reduced system are $X_1,X_2,X_3,S_1,P_1,Y_1,Y_2$ and the variables eliminated using the conservation laws are $E,F_1,F_2,S_0,P_0$. The sign of the determinant of the Jacobian of the system $S_f$ is  \emph{opposite}  the sign of the determinant of the Jacobian of the system
\begin{align*}
f_1(x)  &= E + X_1 +X_3 - \oE &  f_7(x) &=  X_2 - \eta_2 S_1 P_0      \\
f_2(x) &= F_1 + Y_1  -\oF   & f_8(x)&= X_3-\eta_3 E P_0  \\ 
f_3(x)&= F_2+Y_2-\oF_2   & f_9(x) &= X_1-\mu_1 Y_1    \\
f_4(x) &=  S_0 + S_1 + X_1+ Y_1 +X_2 - \oS     & f_{10}(x) &= X_2- (\mu_2/\mu_3)X_3 - \mu_2Y_2    \\
 f_5(x) &= P_0+P_1+X_2+Y_2+X_3-\oP & f_{11}(x) &=  Y_1- \delta_1 F_1 S_1  \\ 
 f_6(x) &=  X_1 - \eta_1 E S_{0}&  f_{12}(x)&= Y_2- \delta_2 F_2 P_1. 
\end{align*}
Indeed, the reason for the change in sign comes from Remark \ref{signchanges}, since there is an odd number of equations that are rearranged by multiplication of negative numbers (corresponding to $\eta_1,\eta_2,\eta_3,\delta_1,\delta_2$).

Let $\bar{x}=(X_1,X_2,X_3,S_1,P_1,Y_1,Y_2)$. Then
\begin{align*}
\psi_1(\bar{x}) & =- X_1- X_3 + \oE    & \psi_4(\bar{x}) & = -S_1 -X_1- Y_1-X_2+ \oS 
\\  \psi_2(\bar{x}) & = - Y_1  +\oF_1  & \psi_5(\bar{x}) & =-P_1-X_2-Y_2-X_3+\oP \\ 
\psi_3(\bar{x}) & = - Y_2  +\oF_2. 
\end{align*}
Here $n=12$ and $k=5$, such that $n-k=7$ is odd. In this case, we should see that $\epsilon=1$. The function $\varphi$ is equation $f_4$ with all variables but $x_{11}=Y_1$ eliminated. The eliminations we performed in Section~\ref{motifcascade} (Motif (l)) are summarized in the following table:

\begin{center}
\renewcommand{\arraystretch}{1.3}
\setlength{\tabcolsep}{3pt}
\begin{tabular}{cccc||cccc}
\hline
$l$ & Elimination &  Behavior & $\epsilon_l$ & l & Elimination &  Behavior & $\epsilon_l$\\ \hline
1 & $(f_1,E)$  &  $(1,1,+)$ & +  & 7 & $(f_{8},X_3)$ & $(4,3,+)$ & $-$  \\
2 & $(f_2,F_1)$  & $(1,1,+)$ & +  & 8 & $(f_{10},Y_2)$ &  $(4,5,-)$ & $+$  \\
3 & $(f_3,F_2)$ &  $(1,1,+)$ & +  & 9 & $(f_{12},P_1)$ & $(4,3,-)$ & $+$  \\ 
4 & $(f_9,X_1)$  &  $(6,3,+)$ & $-$  & 10 & $(f_{6},S_0)$ & $(3,1, -)$ & $-$ \\
5 & $(f_{11},S_1)$  & $(7,5,-)$ & $-$ & 11 & $(f_{5},X_2)$ & $(2,1, +)$ & $-$ \\
6 & $(f_{7},P_0)$  & $(4,2,-)$ & $-$ \\ \hline
\end{tabular}\end{center}
The last elimination $f_5$ corresponds to $\varphi_P(Y_1,X_2)-\oS=0$, which is increasing in $X_2$.
Since $\epsilon=\prod_{l=1}^{11} \epsilon_l=1$, we conclude that $ \sign(\varphi'(z_{10}))= -\sign (\det(J(S_f)(z)))$ for any equilibrium point $z$ and $z_{11}=Y_1$. Together with the fact that $n-k$ is odd, the steady states for which $\varphi'<0$ are unstable.

\subsection{The monostationary motifs}
For Motifs (a)-(d), (e), (h) and (j), the above procedure is non-conclusive. We only show this for Motif (j), since for the other motifs we can prove  that the steady state is asymptotically stable using the Routh-Hurwitz criterion.

\paragraph{Motif (j).}
Consider the variables in the following order
$$(x_1,x_2,x_3,x_4,x_5,x_6,x_7,x_8,x_9,x_{10},x_{11})=(E,F_1,F_2,S_0,P_0,X_1,X_2,S_1,P_1,Y_1,Y_2).$$
The variables of the reduced system are $X_1,X_2,S_1,P_1,Y_1,Y_2$ and the variables eliminated using the conservation laws are $E,F_1,F_2,S_0,P_0$. The sign of the determinant of the Jacobian of the system $S_f$ is the same as the sign of the determinant of the Jacobian of the system
\begin{align*}
f_1(x)  &= E + X_1 - \oE  &  f_6(x) &=  X_1 - \eta_1 E S_{0}  &  f_9(x) &= X_2- \mu_2 Y_2 \\ 
f_2(x) &= F_1 + Y_1  -\oF_1 & f_7(x) &=  X_2 - \eta_2 S_1 P_0 & f_{10}(x) &= Y_1- \delta_1 F_1 S_1 \\
 f_3(x) &= F_2 + Y_2  -\oF_2   &  f_8(x) &= X_1-\mu_1 Y_1&  f_{11}(x)&= Y_2- \delta_2 F_2 P_1.  \\
f_4(x) &=  S_0 + S_1 + X_1+ Y_1 +X_2 - \oS  \\
f_5(x) &= P_0+P_1+X_2+Y_2-\oP 
\end{align*}
Here $n=11$ and $k=5$, such that $n-k=6$ is even. The function $\varphi$ is equation $f_5$ with all variables but $x_{11}=Y_2$ eliminated. The eliminations we performed in Section~\ref{motifcascade} (Motif (j)) are summarized in the following table:

\begin{center}
\renewcommand{\arraystretch}{1.3}
\setlength{\tabcolsep}{3pt}
\begin{tabular}{cccc||cccc}
\hline
$l$ & Elimination &  Behavior & $\epsilon_l$ & l & Elimination &  Behavior & $\epsilon_l$\\ \hline
1 & $(f_1,E)$  &  $(1,1,+)$ & +  & 6 & $(f_{6},S_0)$ & $(3,1,-)$ & $-$  \\
2 & $(f_2,F_1)$  & $(1,1,+)$ & +  & 7 & $(f_{10},S_1)$ &  $(4,2,-)$ & $-$  \\
3 & $(f_3,F_2)$ &  $(1,1,+)$ & +  & 8 & $(f_7,P_0)$ & $(3,1,-)$ & $-$  \\ 
4 & $(f_8,X_1)$  &  $(5,3,+)$ & +  & 9 & $(f_{11},P_1)$ & $(3,1, -)$ & $-$ \\
5 & $(f_9,X_2)$  & $(5,3,+)$ & +  & 10 & $(f_{4},Y_1)$ & $(1,1, +)$ & + \\ \hline
\end{tabular}\end{center}
Since $\epsilon=\prod_{l=1}^{10} \epsilon_l=1$, we conclude that $ \sign(\varphi'(z_{11}))= \sign (\det(J(S_f)(z)))$ for any equilibrium point $z$ and $z_{11}=Y_2$. Since $n-k$ is even, stable steady states have positive determinant, and since $\varphi$ is always increasing, nothing can be concluded.

\paragraph{Routh-Hurwitz.}
We have computationally checked that Motifs (a)-(d), (e) and (h) have asymptotically stable BMSSs. 
The Routh-Hurwitz criterion establishes when all  roots of a polynomial have real negative parts from a condition on the coefficients of the polynomial. Specifically, consider a real polynomial
$$p(z)= \alpha_0z^n + \alpha_{1}z^{n-1} + \dots + \alpha_{n-1} z + \alpha_{n}, $$
where we can assume that $\alpha_0>0$.
One constructs the following matrix:
$$
A=\left(\begin{array}{cccccc}
\alpha_1 & \alpha_3 & \alpha_5 & \dots & \dots & 0 \\
\alpha_0 & \alpha_2 & \alpha_4 & \alpha_6 & \dots & 0 \\
0 & \alpha_1 & \alpha_3 & \alpha_5 & \dots & 0 \\
0 & \alpha_0 & \alpha_2 & \alpha_4 & \dots &  \\
\vdots & \vdots & \vdots & \vdots &\vdots & \alpha_n
\end{array}\right).
$$
That is, the entry $(i,j)$ of the matrix is $\alpha_{i+2(j-i)}$ with the convention that if $i+2(j-i)>n$ or $i+2(j-i)<0$, then $\alpha_{i+2(j-i)}=0$. The criterion states that  if all leading principal minors of the matrix have positive sign, then all  roots of the polynomial $p(z)$ have negative real parts.

In our case, the polynomial to be analyzed is the characteristic polynomial of
the Jacobian of the reduced system at a steady state. For example, by eliminating the variables $E$ and $S_0$, the reduced system of Motif (b) is
\begin{align*}
\dot{X} &=  -(b_1 + c_1)X + a_1 (\oE-X-Y)(\oS-S_1-X-Y) \nonumber \\
\dot{Y} &= -(b_2 + c_2)Y + a_2  (\oE-X-Y) S_1 \nonumber \\
\dot{S_{1}} &=  c_1 X +b_2 Y - a_2 (\oE-X-Y) S_1.\nonumber
\end{align*}

The Jacobian can easily be computed using a program that handles symbolic computations (e.g. Mathematica$^{TM}$). For Motif (b) the Jacobian $J$ is
{\small $$ \left(\begin{array}{ccc}
-b_1 - c_1 - a_1(\oE+\oS-S_1-2X-2Y) & - a_1(\oE+\oS-S_1-2X-2Y)  & - a_1(\oE-X-Y) \\
-a_2 S_1 & -b_2-c_2- a_2S_1 & a_2(\oE-X-Y) \\
c_1+a_2S_1 & b_2+a_2S_1 &  -a_2(\oE-X-Y)
\end{array}\right)$$}
Note that for any BMSS, the terms $\oE+\oS-S_1-2X-2Y$ and $\oE-X-Y$ are positive.
The leading principal minors of the Routh-Hurwitz matrix are polynomials in the entries of $J$.
  The task of  computationally determining the sign of these minors is greatly simplified if  
  we substitute back $\oE=E+X+Y$, and $\oS=S_0+S_1+X+Y$ into $J$ and write:
$$ J= \left(\begin{array}{ccc}
-b_1 - c_1 - a_1(E+S_0) & - a_1(E+S_0)  & - a_1E \\
-a_2 S_1 & -b_2-c_2- a_2S_1 & a_2E \\
c_1+a_2S_1 & b_2+a_2S_1 &  -a_2E
\end{array}\right).$$
The  entries of the matrix $J$ are now in the set of variables $$\mathcal{V}=\{E,X,Y,S_0,S_1,a_1,a_2,b_1,b_2,c_1,c_2\}$$ and  all variables in $\mathcal{V}$ take positive values at any BMSS. Thus, each of the entries of $J$ has a fixed sign for any BMSS.
The coefficients of the characteristic polynomial are polynomials in $\mathcal{V}$ and  thus, the leading principal minors of the Routh-Hurwitz matrix are also polynomials in $\mathcal{V}$. If the coefficients of these polynomials are all positive, then we are guaranteed that they take positive values for any choice of positive rates and any set of positive concentrations. This computation can easily be implemented and checked with Mathematica$^{TM}$. 
If we used  the matrix $J$ involving the terms $\oE+\oS-S_1-2X-2Y$ and $\oE-X-Y$, then the corresponding polynomials take values in $\{\oE,\oS,X,Y,S_1,a_1,a_2,b_1,b_2,c_1,c_2\}$ but their coefficients are no longer positive. Positivity checking requires a convenient grouping of  terms. 

This procedure shows stability for all motifs but Motif (j) in which case some negative terms in the expansion of some of the minors occur. These are computationally difficult to handle and we have not been able to show that the minors all are positive. However, by random generation of values, it seems that the steady state is stable, though it remains to be proven.

\begin{small}

%\appendix

\section{Proofs}\label{B}

\begin{proof}[Proposition \ref{motif1f1}]
Figure~\ref{ffig} illustrates the different cases of the proposition. Recall that $\Gamma=\{Y_1\in \Gamma_1| f(Y_1)\in \Gamma_2 \}$ and $\Delta_1=\mu_1\oF_1-\oE$.
The derivative of $f$ is 
$$f'(y)=\frac{\eta_2 ( \delta_1 \oE \oF_1 - 2 \delta_1\mu_1\oF_1 y +(\delta_1-\eta_2)\mu_1 y^2) }{\mu_2( \delta_1(\oF_1-y)+\eta_2 y)^2} $$
The denominator  is always positive. The numerator is a degree two polynomial in $y$ which might have positive real roots and is positive for $y=0$. We  consider $y\leq \xi_1=\min(\oF_1,\oE/\mu)$. 

 The numerator evaluated at $y=\oF_1$ is  $\eta_2 (\delta_1\oE - (\delta_1+\eta_2)\mu_1 \oF_1)\oF_1$. It is positive if and only if
$$\Lambda =(1+\eta_2/\delta_1)\mu_1 \oF_1 - \oE \leq 0.$$
If this is the case, then $\xi_1=\oF_1$ and the numerator is bounded from below by $\eta_2\mu_1(\delta_1(\oF_1-y)^2 + \eta_2(\oF_1^2-y^2))> 0$. Hence both the numerator and $f'(y)$ are positive for all $y<\xi_1$.

Therefore, for $\Lambda\leq 0$  the function $f$ is increasing for all $y\in \Gamma_1=[0,\oF_1)$ (see Figure~\ref{ffig}(A)). Since $f(0)=0$, the condition $f(Y_1)\in \Gamma_2$ is equivalent to $Y_1<f^{-1}(\xi_2)$ and hence $\Gamma=[0,\min(\oF_1,f^{-1}(\xi_2))$. This proves (i).

To prove (ii), note that $f'(\oF_1)<0$ for $\Lambda>0$.  Further, $f'(\oE/\mu_1)<0$ if and only if $\oE(\delta_1-\eta_2)-\delta_1\mu_1\oF_1<0$. This is clearly the case when $\Delta_1\geq 0$. If $\Delta_1<0$, then using $\Lambda>0$, it follows that
$$ \oE(\delta_1-\eta_2)-\delta_1\mu_1\oF_1 < \eta_2 (-\oE+\mu_1\oF_1)<0.$$
In either case, $f'(\oE/\mu_1)<0$. 
It follows that if  $\Lambda>0$, then, whatever the sign of $\Delta_1$, we have $f'(\xi_1)<0$ and hence the numerator of $f'$ has exactly one positive real root $\alpha$ in $\Gamma_1$. 
Thus, $f$ is a function that increases up to $y=\alpha$, and  decreases  after $\alpha$ (see Figure~\ref{ffig}).
We have that $f(\oE/\mu_1)=0$ and hence,  if $\Delta_1>0$,  $\xi_1=\oE/\mu_1$,  $f(\xi_1)=0$, and the function $f$ starts and ends in zero in $\Gamma_1$. Oppositely,  if $\Delta_1\leq 0$ then $\xi_1=\oF_1$ and $f(\xi_1) = -\Delta_1/\mu_2>0$. These differences are depicted in Figure~\ref{ffig}(B-C).

The three cases (ii)(a)-(c) of the proposition follow from the determination of $\Gamma$. The idea is depicted in Figure~\ref{regions}.  For $f(Y_1)\in \Gamma_2$ we require that $f(Y_1)<\xi_2$. The maximal value of $f$ in $\Gamma_1$ is $f(\alpha)$ which is strictly smaller than $\oE/\mu_2$ because $\mu_1Y_1+\mu_2Y_2=\mu_1Y_1+\mu_2f(Y_1)=\oE$. Thus,  we have
\begin{itemize}
\item If $f(\alpha)< \oF_2$, then for all $Y_1\in \Gamma_1$, we have $f(Y_1)\leq f(\alpha)< \xi_2=\min(\oF_2,\oE_2/\mu_2)$. Thus, $\Gamma=[0,\xi_1)$, which corresponds to case (c) of the proposition.
\item If $f(\alpha)\geq \oF_2$, then the horizontal line $Y_2=\xi_2=\oF_2$ intersects the graph of $f$ over $[0,\oE/\mu_1)$ in two points corresponding to $Y_1$-values $\alpha_1\leq \alpha \leq \alpha_2<\oE/\mu_1$ ($\alpha_1=\alpha_2$ if $\xi_2=f(\alpha)$).  We also have that $\alpha_1<\xi_1$, since $\alpha_1\leq\alpha<\xi_1$.
\end{itemize}
In the latter case,  if $Y_1\in [0,\alpha_1)\cup (\alpha_2,\oE/\mu_1)$, then $f(Y_1)\in \Gamma_2$, while $f(Y_1)$ lies outside $\Gamma_2$ if $Y_1\in [\alpha_1,\alpha_2]$.   Therefore, if $f(\alpha)\geq \oF_2$, we have
  $$\Gamma=([0,\alpha_1)\cup (\alpha_2,\oE/\mu_1))\cap [0,\xi_1)=
  \begin{cases}
  [0,\alpha_1) & \textrm{if } \xi_1\leq \alpha_2 \\
  [0,\alpha_1)\cup (\alpha_2,\xi_1) & \textrm{if }  \xi_1 > \alpha_2.   
  \end{cases}$$
 The first case happens if and only if $\xi_1=\oF_1$ and $\oF_1\leq \alpha_2$. Since $\alpha\leq \oF_1$, this condition is equivalent to $\Delta_1\leq 0$ and $-\Delta_1/\mu_2=f(\oF_1)\geq f(\alpha_2)=\oF_2$, or $0\leq \oE-(\mu_1\oF_1+\mu_2\oF_2)$.  This proves $(a)$.
  
The second case is equivalent to either $\xi_1=\oE/\mu_1$ or $\xi_1=\oF_1>\alpha_2$. Proceeding as above, this is equivalent to $0<\mu_1\oF_1+\mu_2\oF_2-\oE$, which proves case (b).
\end{proof}

\begin{proof}[Proposition \ref{varphider}]  Assume that $\Lambda>0$, that is Proposition~\ref{motif1f1}(ii) applies. Let $\bar{\varphi}_1(Y_1)=\varphi_1(Y_1,f(Y_1))$ so that  
$\varphi(Y_1)=\bar{\varphi}_1(Y_1)+\varphi_2(f(Y_1))$. The derivative of $\varphi$ with respect to $Y_1$ is
$$\varphi'(Y_1)=\frac{\partial \bar{\varphi}_1}{\partial Y_1}(Y_1) + \frac{\partial  \varphi_2}{\partial Y_2}(f(Y_1))\frac{\partial f}{\partial Y_1}(Y_1)\quad \textrm{with }\quad \frac{\partial  \varphi_2}{\partial Y_2}=(1+\mu_2)+\frac{\oF_2}{\delta_2(\oF_2-Y_2)^2}.$$ 
The term  
$\frac{\partial  \varphi_2}{\partial Y_2}$ is the only one that depends on $\oF_2$.  Note in addition that this term is decreasing in $\oF_2$ for $Y_2<\oF_2$.

The variables $S_1, Y_1,X_1$ are all increasing functions of $Y_1$. As for $S_0=\frac{\mu_1 Y_1}{\eta_1(\oE -\mu_1 Y_1-\mu_2f(Y_1))}$, we have that the derivative of $\mu_1Y_1+\mu_2f(Y_1)$ is
\begin{equation}\label{df}
\frac{\delta_1 \eta_2(\oE\oF_1-\mu_1Y_1^2)+  \delta_1^2\mu_1(\oF_1-Y_1)^2 }{(\eta_2Y_1+\delta_1(\oF_1-Y_1))^2},
\end{equation}
which is positive for $Y_1<\oF_1,\oE/\mu_1$. Thus, $S_0$ is also increasing in $Y_1$ and hence $\frac{\partial  \overline{\varphi}_1}{\partial Y_1}(Y_1)>0$.

By Proposition \ref{motif1f1} (ii)(b) and the discussion following the proposition,  if $\oF_2=f(\alpha)$  there exist (many) values of $Y_1\in \Gamma''$ for which $\varphi'(Y_1)<0$.  Fix one such $Y_1$. Since $\varphi'$ is continuous in $\oF_2$, then there exists an $\epsilon>0$ for which $\varphi'(Y_1)<0$ for any $\oF_2\in (f(\alpha)-\epsilon,f(\alpha)+\epsilon)$. This ensures that for values of $\oF_2\geq f(\alpha)$ close  to $f(\alpha)$,  we  have multistationarity.

On the other hand, the term $\frac{\partial  \varphi_2}{\partial Y_2}(f(Y_1))$ is positive for any $Y_1\in \Gamma$. Therefore, if   $\varphi'$ vanishes for some $Y_1$, it must be that $Y_1> \alpha$, where $f$ is decreasing.  In the interval $I=(\alpha,\xi_1)$, $-\partial f/\partial Y_1$ is positive, bounded from above and independent of $\oF_2$. It follows from the expression for $\partial f/\partial Y_1$ given in the proof of Proposition~\ref{motif1f1}.   Further, for $Y_1\in I$, $\partial \overline{\varphi}_1/\partial Y_1$ is positive (stated after equation~\eqref{df}), bounded from below and independent of $\oF_2$. The latter two statements follow from Result~\ref{motif1a} and equation~\eqref{df} upon differentiation (where~\eqref{df} is used to differentiate $S_0$).

Finally, when $\oF_2$ tends to infinity, $\partial \varphi_2/\partial Y_2$ tends to zero uniformly for all $Y_1\in I$, since $Y_2=f(Y_1)\leq f(\alpha)$ is independent of $\oF_2$. Putting it all together, we find that for $\oF_2$ large
\begin{equation}\label{phi_dev}
\frac{\partial  \overline{\varphi}_1}{\partial Y_1}(Y_1) > -\frac{\partial  \varphi_2}{\partial Y_2}(f(Y_1))\frac{\partial f}{\partial Y_1}(Y_1)
\end{equation}
for all $Y_1\in I$. Consequently the function $\varphi$ is increasing in all $\Gamma$ and there cannot be multistationarity. 

Therefore,  for values of $\oF_2$ above $f(\alpha)$  but `close' to $f(\alpha)$ there is multistationarity, and for large  values of $\oF_2$,  there is monostationarity. If $\oF_2'$ is such that $\varphi(Y_1)$ is increasing for some value $Y_1$, then for any $\oF_2\geq \oF_2'$, $\varphi(Y_1)$ must be increasing too (as $\varphi'(Y_1)$ continues being positive when increasing $\oF_2$, cf.~\eqref{phi_dev}). 

This implies that the two regions (mono- versus multistationarity) are separated by a certain boundary value of $\oF_2$, say $M$, where $M$ is the infimum of all $\oF_2$ such that $\varphi'(Y_1)>0$ for all $Y_1\in\Gamma$. For $\oF_2=M$, $\varphi$ is also strictly increasing: The derivative $\varphi'$ might be zero (as $M$ is the infimum) but this can only happen in a finite number of points because the derivative is a non-zero rational function.
Therefore, we see that for all $\oF_2\geq M$, there is only one BMSS, while for all $M> \oF_2>f(\alpha)$ there is multistationarity. 
\end{proof}

\begin{proof}[Proposition \ref{motif1i_prop}]
Recall that $\sigma=(\mu_1-\mu_2)( \mu_1\delta_1\eta_2 - \mu_2\delta_2\eta_1)$  and $\psi=(\varphi_1,\varphi_2)$. 

Let us compute $\det(J_{\psi}(Y_1,Y_2))$. We have, writing $C_F=\oF-Y_1-Y_2$ and $C_E=\oE-\mu_1Y_1-\mu_2Y_2$,
\begin{align*}
\frac{\partial \varphi_1}{\partial Y_1}  &= 1+\mu_1 + \frac{\oF }{\delta_1C_F^2}- \frac{Y_2}{\delta_1C_F^2} +\frac{\mu_1\oE}{\eta_1C_E^2} - \frac{\mu_1\mu_2 Y_2}{\eta_1C_E^2}  & \quad \frac{\partial \varphi_1}{\partial Y_2} =  \frac{Y_1}{\delta_1C_F^2} + \frac{\mu_1\mu_2 Y_1}{\eta_1C_E^2} \\
\frac{\partial \varphi_2}{\partial Y_2} &=1+\mu_2 + \frac{\oF}{\delta_2C_F^2}-\frac{Y_1}{\delta_2C_F^2} +\frac{\mu_2\oE}{\eta_2C_E^2}-\frac{\mu_2\mu_1 Y_1}{\eta_2C_E^2} & \quad \frac{\partial \varphi_2}{\partial Y_1}  =  \frac{Y_2}{\delta_2C_F^2} + \frac{\mu_1\mu_2 Y_2}{\eta_2C_E^2}  
\end{align*}
Let $C=\frac{\partial \varphi_2}{\partial Y_2}-1-\mu_2>0$ and $D=\frac{\partial \varphi_1}{\partial Y_1}-1-\mu_1>0$.
Then, the determinant is given as
\begin{align*}
\det(J_{\psi}(Y_1,Y_2))&= (1+\mu_1)(1+\mu_2)+(1+\mu_1)C + (1+\mu_2)D +\frac{\oF}{\delta_1\delta_2 C_F^3} + \frac{\mu_1\mu_2\oE}{\eta_1\eta_2C_E^3} + A+B
\end{align*}
with
\begin{align*}
A &= \frac{\mu_2(\oE\ \oF - \mu_1Y_1\oF - Y_2\oE)}{\delta_1\eta_2 C_E^2C_F^2} =  \frac{\mu_2}{\delta_1\eta_2C_EC_F}\Big(1 + \frac{Y_1}{C_F} + \frac{\mu_2Y_2}{C_E} + \frac{(\mu_2-\mu_1)Y_1Y_2}{C_EC_F}\Big) \\
B &= \frac{\mu_1(\oE\ \oF - \mu_2Y_2\oF - Y_1\oE)}{\delta_2\eta_1 C_E^2C_F^2} = \frac{\mu_1}{\delta_2\eta_1C_EC_F}\Big(1 + \frac{Y_2}{C_F} + \frac{\mu_1Y_1}{C_E} + \frac{(\mu_1-\mu_2)Y_1Y_2}{C_EC_F}\Big)
\end{align*}
where in the last equality of both equations, we substitute $\oF$ by $C_F+Y_1+Y_2$ and $\oE$ by $C_E+\mu_1Y_1+\mu_2Y_2$. Hence, $\det(J_{\psi}(Y_1,Y_2))$ is a sum of positive terms together with
$$N=\frac{Y_1Y_2}{\delta_1\delta_2\eta_1 \eta_2C_E^2C_F^2} (\mu_2\delta_2\eta_1(\mu_2-\mu_1) +
  \mu_1\delta_1\eta_2(\mu_1-\mu_2))$$
Therefore, if
$$ \sigma=(\mu_1-\mu_2)( \mu_1\delta_1\eta_2 - \mu_2\delta_2\eta_1)\geq 0   $$
then $\det(J_{\psi}(Y_1,Y_2))>0$ and the function $\varphi$ is always increasing. This proves (i).

Let us assume now that $\sigma<0$. 
We make the following observation: the only negative term $N$ of $\det(J_{\psi}(Y_1,Y_2))$    is also the term with denominator of highest degree in $C_E,C_F$. Further, all numerators in the expression of $\det(J_{\psi}(Y_1,Y_2))$ above are bounded. Thus, by letting $C_E,C_F$ simultaneously tend to zero,  we can obtain a negative determinant. 

Since $C_F=\oF-Y_1-Y_2$ and $C_E=\oE-\mu_1Y_1-\mu_2Y_2$, it is possible to make them small simultaneously by varying $Y_1,Y_2$, if and only if the two lines $r_F\colon \oF-Y_1-Y_2=0$ and $r_E\colon \oE-\mu_1Y_1 -\mu_2Y_2=0$ intersect for valid values of $Y_1,Y_2$. The intersection of the two lines is the point $(Y_1,Y_2)$ with
$$ Y_1 =\frac{\mu_2\oF - \oE}{\mu_2-\mu_1},\quad Y_2 =\frac{\mu_1\oF - \oE}{\mu_1-\mu_2}.  $$
If $\mu_1-\mu_2>0$, then for positive intersection values $Y_1,Y_2$ we require $\mu_2\oF < \oE < \mu_1 \oF$. 
If $\mu_2-\mu_1>0$, then we require $\mu_1\oF < \oE < \mu_2 \oF$. 

Hence, if these conditions above are satisfied, we are guaranteed the existence of values of $Y_1,Y_2$ for which the determinant is negative and thus multistationarity occurs. Further, if $C_E,C_F$ are small then $\oS,\oP$ must be large. 

Assume finally that $\sigma<0$ but  also 
one of the two following conditions holds:
\begin{enumerate}[(a)]
\item $\mu_1-\mu_2>0$ and either $\mu_2\oF \geq \oE $ or $\oE \geq \mu_1 \oF$.
\item $\mu_2-\mu_1>0$ and either $\mu_1\oF \geq \oE$ or $\oE \geq \mu_2 \oF$. 
\end{enumerate}
In this case multistationarity cannot occur. Assume that (a) holds (case (b) is similar).  Since  $\mu_1-\mu_2>0$, we have $\mu_1\oF > \mu_2\oF$ and:
\begin{itemize}
\item If $\mu_1\oF > \mu_2\oF \geq \oE $, then using $-Y_k\oE\geq-\mu_kY_k\oF$, we have
$$\oE\oF - \mu_1Y_1\oF - Y_2\oE  \geq \oE\oF - \mu_1Y_1\oF - \mu_2Y_2\oF =\oF C_E > 0$$
and 
$$\oE\oF - Y_1\oE- \mu_2Y_2\oF   \geq \oE\oF - \mu_1Y_1\oF - \mu_2Y_2\oF =\oF C_E > 0. $$
\item If $\oE\geq \mu_1\oF > \mu_2\oF $, then using $-\mu_kY_k\oF \geq -Y_k\oE$, we have
$$\oE\oF - \mu_1Y_1\oF - Y_2\oE  \geq \oE\oF - Y_1\oE - Y_2\oE  =\oE C_F>0$$
and
$$\oE\oF  - Y_1\oE - \mu_2Y_2\oF \geq \oE\oF - Y_1\oE - Y_2\oE  =\oE C_F>0.$$
\end{itemize}
Therefore, both $A$ and $B$ are positive and the proposition is proved.
\end{proof}

\begin{proof}[Proposition \ref{nomulticascade}]
Recall that $f$ is a function of $Y_2$ defined in equation \eqref{fprop4} by elimination of $Y_1$ from $\oS=\Phi(Y_1,Y_2)$. Thus, we have  that $f'=-\Phi_2/\Phi_1$, where
\begin{align*}
\Phi_2: = \partial \Phi/\partial Y_2 &=  \mu_2 + \frac{Y_1}{\delta_1(\oF-Y_1-Y_2)^2}>0,  \\
\Phi_1:= \partial \Phi/\partial Y_1 &= 1+\mu_1 + \frac{\mu_1\oE}{\eta_1(\oE-\mu_1Y_1)^2}+ \frac{\oF-Y_2}{\delta_1(\oF-Y_1-Y_2)^2}>0.
\end{align*}
Recall that
$$\frac{\partial P_0}{\partial Y_2}= \frac{\mu_2\delta_1}{\eta_2f^2} (f(\oF - f - 2Y_2) - f'Y_2(\oF-Y_2)), \quad \textrm{and}\quad 
\frac{\partial P_1}{\partial Y_2}= \frac{\oF - f + f'Y_2}{\delta_2 (\oF-f-Y_2)^2}.$$
$\frac{\partial P_1}{\partial Y_2}<0$ only if $\oF - f + f'Y_2<0$, that is
\begin{equation}\label{help}
0  > \Phi_1(\oF - Y_1 ) - \Phi_2 Y_2 = (1+\mu_1)(\oF-Y_1)-\mu_2Y_2 + \frac{\mu_1\oE(\oF-Y_1)}{\eta_1(\oE-\mu_1Y_1)^2} + \frac{\oF}{\delta_1(\oF-Y_1-Y_2)}.
\end{equation}
Since the last two summands are positive for valid values of $Y_1,Y_2$, a necessary condition for $\frac{\partial P_1}{\partial Y_2}<0$ is $C1: (1+\mu_1)(\oF-Y_1)-\mu_2Y_2 <0$. Note that if $1+\mu_1>\mu_2$, then the condition cannot be fulfilled and $P_1$ is always an increasing function of $Y_2$ for any $\oE,\oF,\oS$. 

Since $f'<0$,  we require $C2: \oF - Y_1 - 2Y_2<0$ in order to have  $\frac{\partial P_0}{\partial Y_2}<0$. 

It follows that $C1,C2$ are necessary conditions for multistationarity. If $\oP,\oS$ are given, the following are restrictions on  $Y_1,Y_2$:
$$Y_1+Y_2<\oF,\quad \mu_1Y_1<\oE, \quad (1+\mu_1)Y_1 + \mu_2Y_2<\oS,\quad \mu_2 Y_2<\oP. $$ 
Let $\overline{\sigma }= \min(\oS,\oP)$. We find that $\mu_2Y_2<\overline{\sigma}$ and hence 
$$ \oF - Y_1 - 2Y_2 > \oF - \oE/\mu_1 - 2 \overline{\sigma}/\mu_2\quad\textrm{and}\quad (1+\mu_1)(\oF-Y_1)-\mu_2Y_2 >  (1+\mu_1)(\oF -\oE/\mu_1) -\overline{\sigma}.  $$ 
Let $\sigma_{\mu} = \min (1+\mu_1,\mu_2/2)$. If  $\oF > \oE/\mu_1$ and $\sigma_{\mu}(\oF - \oE/\mu_1) > \overline{\sigma}$, then we are guaranteed that  both $C1,C2>0$ and multistationarity cannot occur.

\end{proof}

\begin{proof}[Proposition \ref{varphinf}] 
Note that for $\Phi$ (i.e. $\oS$) to tend to $+\infty$ for any fixed $Y_2\in \Gamma_2=[0,\min(\oS/\mu_2,\oF))\subseteq [0,\oF)$, either $Y_1$ must tend to $\oE/\mu_1$  or to $\oF-Y_2$.   The first case arises if and only if $Y_2\leq \oF-\oE/\mu_1$. 

(i) Assume that  $Y_2>\oF-\oE/\mu_1$. Consider the functions $\Phi_2,\Phi_1$ from the proof of Proposition \ref{nomulticascade} and the expression for $\frac{\partial P_0}{\partial Y_2}$ and $\frac{\partial P_1}{\partial Y_2}$: The derivative $f'=-\Phi_2/\Phi_1$ tends to $-1$ as $\oS$ tends to infinity (and $Y_1$ tends to $\oF-Y_2$).  Likewise, it follows that  $f(\oF - f - 2Y_2) - f'Y_2(\oF-Y_2)$ tends to zero, and hence $\frac{\partial P_0}{\partial Y_2}$ also tends to zero. From equation~\eqref{help} in the previous proof  it follows that $\frac{\partial P_1}{\partial Y_2}>0$ as $Y_1$ approaches $\oF-Y_2$, that is as $\oS$ becomes large (the last term is positive and dominates the other terms). Hence the derivative of $\varphi(Y_2)$ becomes positive: $\varphi'(Y_2)=1+\mu_2+\frac{\partial P_0}{\partial Y_2}+\frac{\partial P_1}{\partial Y_2}>0$. This proves (i).

(ii) Assume that  $\Delta_1=\mu_1\oF-\oE>0$ and consider $Y_2\leq \oF-\oE/\mu_1$. As $\oS$  tends to infinity, $Y_1$ tends to $\oE/\mu_1$. The limit curve $\varphi_{\infty}$ is
$$\varphi_{\infty}(Y_2) = (1+\mu_2)Y_2 + \frac{\mu_2\delta_1(\overline{\Delta}_1-Y_2) Y_2}{\eta_2 \oE/\mu_1}+\frac{Y_2}{\delta_2 (\overline{\Delta}_1-Y_2)}$$
for $\overline{\Delta}_1=\Delta_1/\mu_1$.
Existence of   values of $Y_2$ for which $\varphi_{\infty}'(Y_2)<0$ is equivalent to
$$p(Y_2):= \delta_2 (\overline{\Delta}_1-Y_2)^2(1+\mu_2 +\beta (\overline{\Delta}_1-2Y_2))+\overline{\Delta}_1<0  $$  
with $\beta=\mu_1\mu_2\delta_1/\eta_2\oE$. The function $p(Y_2)$ is a polynomial of degree $3$. For
$Y_2=0$, it is  positive and when $Y_2$ tends to infinity, the polynomial tends to $-\infty$. Therefore, it takes negative values in $Y_2\in [0,\overline{\Delta}_1)$ if and only if there is a  root in this interval. At $Y_2=\overline{\Delta}_1$ it is positive, and hence, there is at least one root after $\overline{\Delta}_1$. The derivative with respect to $Y_2$ is
$$p'(Y_2)=-2 \delta_2(\overline{\Delta}_1-Y_2)(1+\mu_2 +\beta (2\overline{\Delta}_1-3Y_2)).$$ 
One zero of the derivative is $Y_2=\overline{\Delta}_1$, while the other is $\gamma=(1+\mu_2+2\beta\overline{\Delta}_1)/3\beta$. Note that at $\overline{\Delta}_1$, $p(\overline{\Delta}_1)$ is positive. Therefore, for negative values of $p$ to occur between $0$ and $\overline{\Delta}_1$, we need $\gamma<\overline{\Delta}_1$ and $p(\gamma)<0$. 

We have $\gamma<\overline{\Delta}_1$ if and only if $\overline{\Delta}_1>(1+\mu_2)/\beta$, in which case $\overline{\Delta}_1$ is a maximum and $\gamma$ is a minimum. Evaluating $p$ in $\gamma$ gives the following condition:
$$\Delta_2=27\mu_1\mu_2^2\delta_1^2\eta_2\oE(\mu_1\oF-\oE) -\delta_2(\delta_1\mu_2\mu_1\oF-(\mu_2\delta_1+\eta_2(1+\mu_2))\oE)^3<0,$$ 
which concludes the proof of (ii).

\end{proof}

\begin{proof}[Proposition \ref{motifllemma}]
It is convenient to write $\varphi$ as a function of $X_2,Y_1$, i.e., without substituting $X_2$ by $f(Y_1)$.
The derivative of $\varphi$ with respect to $Y_1$ takes the form 
$$\varphi' =   \frac{\mu_1(\eta_2\oE + \delta_1\eta_3 X_2 (\mu_1\oF_1-\oE))}{\eta_1\eta_2(\oE - \mu_1Y_1)^2} + 1+\mu_1 + \frac{\oF_1}{\delta_1(\oF_1-Y_1)^2}+ \left(\frac{\mu_1\delta_1\eta_3(\oF_1-Y_1)}{\eta_1\eta_2(\oE-\mu_1Y_1)}+1\right)f' $$ 
with only the first term susceptible of being negative since $f$ is an increasing function. If we can control $f'$ by varying some total amounts we might be able to make the derivative $\varphi'$ negative.

We will see that there exist values of $X_2, \oF_2$, and $\oE$ for which $\varphi'$ as a function of the two variables $Y_1$ and $X_2$ is negative.  Then, by defining $\oP=\varphi_P(Y_1,X_2)$, we obtain regions of multistationarity. Note that this is equivalent to the approach taken in Motif (i), when we studied the Jacobian of $(\varphi_S,\varphi_P)$ rather than the derivative of $\varphi$.

Using the implicit function theorem, the derivative of $f$ with respect to $Y_1$ is  $f'=-\frac{\partial \varphi_P/\partial Y_1}{\partial \varphi_P/\partial X_2}$. We compute the partial derivatives of $P_0,P_1,X_2,Y_2,X_3$ for each term in the numerator and the denominator. First of all we repeat some definitions,
$$X_3=\Phi_X(Y_1,X_2)=\frac{\delta_1\eta_3 X_2(\oF_1-Y_1)(\oE-\mu_1Y_1)}{\eta_2Y_1+\delta_1\eta_3X_2(\oF_1-Y_1)},$$
and
$$Y_2=\Phi_Y(Y_1,X_2)=\mu_2^{-1}X_2+\mu_3^{-1}\Phi_X.$$
It follows that
$$\frac{\partial \varphi_P}{\partial X_2}  =\frac{\partial (P_0+P_1)}{\partial X_2}   + 1+ \mu_2^{-1} + (1+\mu_3^{-1}) \frac{\partial \Phi_X}{\partial X_2}, \qquad
\frac{\partial \varphi_P}{\partial Y_1} =\frac{\partial (P_0+P_1)}{\partial Y_1}  +   (1+\mu_3^{-1}) \frac{\partial \Phi_X}{\partial Y_1}.$$
For the denominator we have:
\begin{align*}
\frac{\partial P_0}{\partial X_2} &= \frac{\delta_1(\oF_1-Y_1)}{\eta_2 Y_1},  \qquad &
 \frac{\partial \Phi_X}{\partial X_2} = \frac{\delta_1\eta_3\eta_2Y_1(\oF_1-Y_1)(\oE-\mu_1Y_1)}{(\eta_2Y_1 + \delta_1\eta_3X_2(\oF_1-Y_1))^2}, \\
\frac{\partial P_1}{\partial X_2} &=  \frac{\oF_2}{\delta_2(\oF_2- \Phi_Y)^2}\cdot (\mu_2^{-1}+\mu_3^{-1} \frac{\partial \Phi_X}{\partial X_2}).
\end{align*}

For the numerator, we have
\begin{align*}
-\frac{\partial P_0}{\partial Y_1} &= \frac{\delta_1\oF_1 X_2 }{\eta_2 Y_1^2} & \hspace{-0.4cm}
- \frac{\partial \Phi_X}{\partial Y_1} &=  \frac{\delta_1\eta_3X_2\mu_1(\oF_1-Y_1)}{\eta_2Y_1 + \delta_1\eta_3X_2(\oF_1-Y_1)}+  \frac{\eta_2\oF_1(\oE-\mu_1Y_1)} {(\eta_2Y_1 + \delta_1\eta_3X_2(\oF_1-Y_1))^2}.  \\  -\frac{\partial P_1}{\partial Y_1} &=  \frac{-\oF_2}{\delta_2\mu_3(\oF_2- \Phi_Y)^2} \frac{\partial \Phi_X}{\partial Y_1} 
\end{align*}

Now fix $Y_1, \oF_1$ and let $X_2,\oF_2,\oE$ tend to $+\infty$. Assume that $\oE$ goes slower than $X_2$, for example put $X_2=\oE^{1+p}$ for some $p>0$. Let $\oF_2-\Phi_Y=K$ be constant such that $\oF_2= K+\mu_2^{-1}X_2+\mu_3^{-1}\Phi_X$ and $X_2$ go at the same rate towards infinity. Note that we can vary $X_2,\oF_2,\oE$ without changing $Y_1, \oF_1$ or restricting their range of definition.  

With these assumptions  the numerator of $f'$, $-\frac{\partial \phi_P}{\partial Y_1} $, goes to infinity proportionally to $X_2$. Also the denominator of $f'$, $\frac{\partial \phi_P}{\partial X_2} $, goes to infinity  proportionally to $X_2$. Consequently, $f'$ converges towards a constant that depends on $\oF_1, Y_1$ and $K$ (in addition to the rate constants)
$$f'\rightarrow K^2\frac{\delta_1\delta_2\mu_2^2\oF_1}{\eta_2Y_1^2}+\frac{\mu_1\mu_2}{\mu_3}.$$
Going back to the expression for $\varphi'$, we find that under the same assumptions,
$$\varphi'\approx \frac{-\mu_1\delta_1\eta_3 X_2}{\eta_1\eta_2\oE} + 1+\mu_1 + \frac{\oF_1}{\delta_1(\oF_1-Y_1)^2}+ f',$$
where terms that eventually vanish are not shown. It follows that for large $\oE$,  $X_2=\oE^{1+p}$ and $\oF_2=K+\mu_2^{-1}X_2+\mu_3^{-1}\Phi_X$, the derivative $\varphi'$ eventually becomes negative for any choice of $Y_1$ and $\oF_1$, as desired.

\end{proof}
\end{small}

\end{document}